\tikzstyle directed=[postaction={decorate,decoration={markings,mark=at position .65 with {\arrow[scale=1.2]{>}}}}]
\newcommand{\flo}[1]{\textcolor{black}{#1}}
\newcommand{\xor}{\oplus}
\newcommand{\erase}[1]{}
\renewcommand{\bar}{\overline}
\newcommand{\letterA}{0}
\newcommand{\letterB}{1}
\newcommand{\letterC}{2}
\newcommand{\letterD}{3}
\DeclareMathOperator{\GNECC}{\mathsf{NECC}}
\DeclareMathOperator{\ANs}{ANs}
\DeclareMathOperator{\AN}{AN}
\DeclareMathOperator{\IG}{IG}
\DeclareMathOperator{\Pw}{Pw}
\DeclareMathOperator{\pr}{pr}
\begin{document}

\title{
	Sequentialization and Procedural Complexity in Automata Networks
	}

\author{
	Florian Bridoux\inst{1}\thanks{
	 \url{florian.bridoux@lis-lab.fr}.}
}

\date{\today}

\institute{
	Aix-Marseille Univ., Toulon Univ., CNRS, LIS, Marseille, France
}  

\maketitle

\begin{abstract}
	
	In this article we consider finite automata networks ($\ANs$) with two kinds of update 
	schedules:
	the parallel one (all automata are updated all together) and the sequential ones 
	(the automata are updated periodically one at a time according to a total 
	order $w$). 
	\flo{
	The cost of sequentialization of a given $\AN$ $h$ is the number of additional automata required to simulate $h$ by a sequential $\AN$ with the same alphabet.
	We construct, for any n and q, an $\AN$ $h$ of size $n$ and alphabet size $q$ whose cost of sequentialization is at least $n/3$.
	We also show that, if $q \geq 4$, we can find one whose cost is at least $n/2-\log_q(n)$.
	We prove that $n/2 + 
	\log_q(n/2+1)$ is an upper bound for the cost of sequentialization of any $\AN$ $h$ of size $n$ and alphabet size $q$.
	Finally, we exhibit the exact relation between the cost of sequentialization of $h$ and its procedural complexity with unlimited memory and prove that its cost of sequentialization is less than or equal to the pathwidth of its interaction graph.
	}
\end{abstract}\setcounter{lemma}{0}

\begin{keywords}
	{}Automata networks, intrinsic simulation, parallel update schedule, sequential 	
	update schedules, procedural complexity.
\end{keywords}

\section{Introduction}\label{section_intro}
\setcounter{lemma}{2}
In this article, we study finite automata networks ($\ANs$).
They are models classically used for representing and analyzing natural dynamical systems like genetic or neural networks~\cite{Thomas1973,Goles1990}.
Moreover, they are also computational models on which we study computability and complexity properties which is the purpose of this paper.
\flo {An $\AN$ $h$ can be seen as a transformation of $A^n$ with $A$ a finite alphabet. Here, $n$ is the number of automata, and the $i$-th component of $h$ is the update function of the 
$i$-th automaton.}
We consider them with two types of update schedules.
With the parallel one, automata are updated all together, at each time step. In other words, we just apply $h$. 
\flo{ With the sequential ones, automata are updated sequentialy, according to a total order $w$}. 
\flo{They have been several works on the influence of the update schedules on the function computed by an $\AN$~\cite{Goles2008,Aracena2009}. Here, like in~\cite{Tchuente1986} we take the opposite approach. We have an $\AN$ $h$ with a parallel update schedule and try to find an $\AN$ $f$ with a sequential update schedule $w$ which computes the same function.}
\flo{However, sometime it is impossible.} For instance, the transformation of $\{0,1\}^2$ which exchanges the two values $h:(x_1,x_2) 
\mapsto (x_2,x_1)$ cannot be sequentialized.
The famous XOR swap algorithm, 
$x_1 \leftarrow x_1 \xor x_2$,
$x_2 \leftarrow x_1 \xor x_2$,
$x_1 \leftarrow x_1 \xor x_2$
does not apply here because we can only update \flo{ one time each automaton beetween two time steps}.
However, what we can do is to consider the $\AN$ $f$ with one additional automaton and the sequential update schedule $w := (3,2,1)$ which executes the three instructions 
$x_3 \leftarrow x_1$, 
$x_1 \leftarrow x_2$, 
$x_2 \leftarrow x_3$.
We see that $f$ with the update schedule $w$ computes the transformation $h$ if we only consider the $2$ first automata.
The goal of this paper is to determine the cost of sequentialization of an $\AN$ $h$, namely, the minimum number of additional automata that an $\AN$ $f$ which sequentializes $h$ will have.
This paper is the direct sequel of~\cite{Bridoux0PST17} in which the same problem was studied for an alphabet of size $2$ and with an imposed order of sequentialization.
Definition~\ref{def_confusion_graph}, Theorem~\ref{th1} and Lemma~\ref{lem_kp_lower_bound} are straightforward generalization of results published in~\cite{Bridoux0PST17}.
All other results are new.

\flo{In Section~\ref{section_def}, we define $\ANs$, interaction graphs, the notion of a sequentialization and we present most of the notations that we use. 
In Section~\ref{section_added_automata}, we define the cost of sequentialization $\kappa(h,u)$ of an $\AN$ $h$ respecting an order $u$. It is the minimum number of additional automata required for any $\AN$ $f$ with a sequential update schedule $w$ respecting the order $u$ to compute $h$. We also define $\kappa^{\min}(h)$ which is like $\kappa(h,u)$ except that the sequential update schedules we consider are not constraint anymore.
In Section~\ref{section_confusion_graph}, 
we give an upper and lower bounds for $\kappa(h,u)$ for the couple $(h,u)$ which maximizes it. 
In Section~\ref{section_bounds_kmin}, we prove different lower bounds depending on the alphabet size for $\kappa^{\min}(h)$ when $h$ maximizes $\kappa^{\min}(h)$.
In Section~\ref{section_p_c} we give the relation between $\kappa^{\min}$ and the procedural complexity as defined in \cite{Gadouleau2011}.
Finally, In Section~\ref{section_bound_path}, we prove an upper bound for $\kappa^{\min}(h)$ depending on the pathwidth of the interaction graph of $h$.}

\section{Definitions and notations} \label{section_def}


For all $i \in \mathbb{N},$ the interval between $1$ and $i$ is denoted by $[i] := \{1,2,\dots, i \}$.
For all $i,j \in \mathbb{N}$, with $i \leq j$, the closed interval between $i$ and $j$ is denoted by $[i,j] := \{i,i+1,\dots,j\}$ and the open one by $]i,j[ := [i,j] \setminus \{i,j\}$.
For any $q \geq 2$ and $n \in \mathbb{N}$, let $F(n,q)$ be the set of functions from $[0,q[^n$ to $[0,q[^n$ (also called transformations of $[0,q[^n$).
For all $I = \{i_1,i_2,\dots, i_p\} \subseteq [n]$ with $i_1 < i_2 < \dots < i_p$, the projection of $x$ on $I$ is denoted either by $ \pr_I(x)$ or by $x_I$. In other words, $\pr_I(x) = x_I = (x_{i_1},x_{i_2},\dots,x_{i_p})$. 
For all vectors $x := (x_1,\dots,x_p)$ and $y := (y_1,\dots,y_t)$, their concatenation is denoted by $xy := (x_1, \dots, x_p, y_1, \dots y_t)$.
\begin{definition}[Coordinate functions]
	Let $f \in F(n,q)$.
	\flo{For every $i \in [n]$, the $i$-th coordinate functions of $f$ is the function $f_i := \pr_{i} \circ f$}.
\end{definition}
This means that we have $f(x) = (f_1(x),f_2(x),\dots,f_n(x))$.
In this paper, we make particular use of the superscript of a function $f$.
\begin{definition}[Updates of a transformation]
		For all $i \in [n],$ $f^i \in F(n,q)$ is the function which updates the $i$-th coordinate (\em i.e. \em executes $f_i$).
		For all $I \subseteq [n]$, $f^I$ is the function which updates the coordinates of all elements of $I$ synchronously.
		For any word $w:= (w_1,w_2, \dots,w_t)$ on the alphabet $[n]$, $f^w$ is the function which updates sequentially the coordinates $w_1, \dots, w_t$ in the order given by $w$.
\end{definition}
Formally, we have
	\begin{equation*}
	\begin{array}{ll}
\forall x \in A^n,& f^i(x) := (x_1,\dots,x_{i-1},f_i(x),x_{i+1},\dots,x_n).\\
\forall x \in A^n,j \in [n],\ & f^I(x)_j := \begin{cases}
 f_j(x) &\text{if } j \in I \\
 x_j &\text{otherwise}.
\end{cases}\\
\forall w = (w_1,w_2, \dots,w_t) \in {[n]}^t,& f^w := f^{w_t}\circ \dots \circ f^{w_2} \circ f^{w_1}.\\
	\end{array}
	\end{equation*}
We say that $f_i$ is a trivial coordinate function if for all $x \in A^n, f_i(x) = x_i$.
The relation $y = f^i(x)$ can be expressed by $x \xrightarrow{f^{i}} y$.
The set of permutations of $[n]$ 
is denoted by $\Pi([n])$.
Let $w:= (w_1,w_2, \dots,w_t) \in \Pi([n])$.
If $w_j = i$ then we say that $i$ is updated at step $w(i) := j$.
\begin{definition}[Sequentialization]
	An $\AN$ $f \in F(m,q)$, with the sequential update schedule $w \in \Pi([m])$ sequentializes an $\AN$ $h \in F(n,q)$ with $m \geq n$ if $\pr_{[n]} \circ f^w = h \circ pr_{[n]} $.
\end{definition}

\begin{remark}
	All the results of this paper remain true if we use the more general definition: 
	$ \exists I \subseteq [m],$ with $|I| = n$ such that $\pr_{I} \circ f^w = h \circ pr_{I}$.
\end{remark}

\begin{definition}[Interaction graph]
	The interaction graph $\IG(h)$ of an $\AN$ $h \in F(n,q)$ is the directed graph $([n],E)$ with $(i,j) \in E$ if and only if $i$ has an influence on $j$.
	More formally, $\forall i,j \in [n], (i,j) \in E$ if and only if $\exists x,y \in A^n$ such that $ x_{[n] \setminus \{i\}} = y_{[n] \setminus \{i\}}$ and $ h_j(x) \neq h_j(y)$. 
	
\end{definition}

\flo{We denote by $\IG^*(h)$ be the undirected version of $\IG(h)$.}

\section{Cost of sequentialization } \label{section_added_automata} 

\flo{In this section, we define the main question tackled in this paper.}
For all $u \in \Pi([n])$, we say that $w \in \Pi([m])$ respects $u$, if all the coordinates of $[n]$ are updated in the same order in $u$ and in $w$.
In other words, $\forall i, j \in [n],$ if $u(i) < u(j)$ then $w(i) < w(j)$.

\begin{definition}[$\kappa(h,u)$]
	Let $h \in F(n,q)$ and $u \in \Pi([n])$.
	The cost of sequentialization of $h$ respecting $u$, denoted by $\kappa(h,u)$, is the smallest $k$ such that there exists $f \in F(n+k,q)$ and $w \in \Pi([n+k])$, such that $(f,w)$ sequentializes $h$ and $w$ respects $u$.
\end{definition} 
\begin{definition}[$\kappa^{\min}(h)$] \label{def_k}
		Let $h \in F(n,q)$.
		The cost of sequentialization of $h$, denoted by $\kappa^{\min}(h)$, is the smallest $k$ such that there is a $f \in F(n+k,q)$ and a $w \in \Pi([n+k])$, such that $(f,w)$ sequentializes $h$.
\end{definition}

Clearly, $\kappa^{\min}(h) = \min(\{ \kappa(h,u)\ |\ u \in \Pi([n]) \})$.
Given $n$ and $q$, the maximal cost of sequentilization respectively with or without imposed order is denoted by $\kappa_{n,q} := max(\{ \kappa(h,u)\ |\ h \in F(n,q)$ and $u \in \Pi([n]) \})$ and $\kappa^{\min}_{n,q} := max(\{ \kappa^{\min}(h)\ |\ h \in F(n,q) \})$, respectively.
Example~\ref{k_vs_km} shows that, for some $(h,u)$, the difference between $\kappa^{min}(h)$ and $\kappa(h,u)$ is large.

\begin{example} \label{k_vs_km}

		\begin{figure}[!h]
			
\minipage[t]{0.32\textwidth}

			\centering
			\begin{tikzpicture}[scale=0.8]
			\tikzstyle{automate} = [draw, circle]
			\tikzstyle{grossefleche} = [-{>[length=1.5mm]}]
			\tikzstyle{ehcelfessorg} = [{<[length=1.5mm]}-]
			
			\node[automate](n0) at (0,3) {$1$};
			\node at (0,3.5) {$x_4$};
			\node[automate](n1) at (0,1.5) {$2$};
			\node at (0,2) {$x_5$};
			\node[automate](n3) at (0,0) {$3$};
			\node at (0,.5) {$x_6$};
			
			\node[automate](n4) at (3,3) {$4$};
			\node at (3,3.5) {$x_1$};
			\node[automate](n5) at (3,1.5) {$5$};
			\node at (3,2) {$x_2$};
			\node[automate](n6) at (3,0) {$6$};
			\node at (3,.5) {$x_3$};
			
			\draw[ehcelfessorg] (n0) edge[bend left] (n4);
			\draw[ehcelfessorg] (n1) edge[bend left] (n5);
			\draw[ehcelfessorg] (n3) edge[bend left] (n6);
			
			\draw[ehcelfessorg] (n4) edge[bend left] (n0);
			\draw[ehcelfessorg] (n5) edge[bend left] (n1);
			\draw[ehcelfessorg] (n6) edge[bend left] (n3);
			
			\end{tikzpicture}
			
			\caption{Interaction graph of the $\AN$ $h$ of Example~\ref{k_vs_km}.}\label{figure_example_1_h}

\endminipage\hfill
\minipage[t]{0.32\textwidth}
		\centering
		\begin{tikzpicture}[scale=0.8]
		\tikzstyle{automate} = [draw, circle]
		\tikzstyle{grossefleche} = [-{>[length=1.5mm]}]
		
		\node[automate](n0) at (0,3) {$1$};
		\node at (0,3.5) {$z_4$};
		\node[automate](n1) at (0,1.5) {$2$};
		\node at (0,2) {$z_5$};
		\node[automate](n3) at (0,0) {$3$};
		\node at (0,.5) {$z_6$};
		
		\node[automate](n4) at (2,3) {$4$};
		\node at (2,3.5) {$z_7$};
		\node[automate](n5) at (2,1.5) {$5$};
		\node at (2,2) {$z_8$};
		\node[automate](n6) at (2,0) {$6$};
		\node at (2,.5) {$z_9$};
		
		\node[automate](n7) at (4,3) {$7$};
		\node at (4,3.5) {$z_1$};
		\node[automate](n8) at (4,1.5) {$8$};
		\node at (4,2) {$z_2$};
		\node[automate](n9) at (4,0) {$9$};
		\node at (4,.5) {$z_3$};
		
		\draw[grossefleche] (n0) edge[bend left] (n7);
		\draw[grossefleche] (n1) edge[bend left] (n8);
		\draw[grossefleche] (n3) edge[bend left] (n9);
		\draw[grossefleche] (n7) -- (n4);
		\draw[grossefleche] (n8) -- (n5);
		\draw[grossefleche] (n9) -- (n6);
		\draw[grossefleche] (n4) -- (n0);
		\draw[grossefleche] (n5) -- (n1);
		\draw[grossefleche] (n6) -- (n3);
		\end{tikzpicture}
		
		\caption{Interaction graph of the $\AN$ $f$ of Example~\ref{k_vs_km}.}
	\label{figure_example_1_f}
		
	\endminipage\hfill
\minipage[t]{0.32\textwidth}
		
		\centering
		
		\begin{tikzpicture}[scale=0.8]
		\tikzstyle{automate} = [draw, circle]
		\tikzstyle{grossefleche} = [-{>[length=1.5mm]}]
		
		\node[automate](n1) at (2.5,6) {$1$};
		\node[automate](n2) at (2.5,4) {$2$};
		\node[automate](n3) at (2.5,2) {$3$};
		\node at (2.5,6.75) {$y_4$};
		\node at (2.5,4.75) {$y_5$};
		\node at (2.5,2.75) {$y_6$};
		
		\node[automate](n4) at (4,6) {$4$};
		\node[automate](n5) at (4,4) {$5$};
		\node[automate](n6) at (4,2) {$6$};
		
		\node at (5.25,6.75) {$y_7 - y_2 - y_3$};
		\node at (5.25,4.75) {$y_7 - y_4 - y_3$};
		\node at (5.25,2.75) {$y_7 - y_4 - y_5$};
		
		\node[automate](n) at (5.5,4) {$7$};
		\node at (5.5,3.5) {$y_1 + y_2 + y_3 $};
		
		\draw[grossefleche] (n) -- (n5);
		\draw[grossefleche] (n3) -- (n5);
		\draw[grossefleche] (n4) -- (n5);
		
		\end{tikzpicture}
		\caption{Interaction graph of the $\AN$ $g$ of Example~\ref{k_vs_km} with only inner edges of the automaton $5$ displayed.
		}\label{figure_example_1_g}

	\endminipage\hfill
\end{figure}
Let us consider the $\AN$ $h \in F(n,q)$ with $n=6$ which computes the swaps of the values of $3$ pairs of automata.
In other words, $$h: x \mapsto (x_4,x_5,x_6,x_1,x_2,x_3).$$
Figure~\ref{figure_example_1_h} displays the interaction graph of $h$. Now, we consider the canonical sequential update schedule $u = (1,2,\cdots,6)$
and we want to find an $\AN$ $f$ and a update schedule $w$ which sequentializes $h$ respecting $u$.
To do so, let us consider an $\AN$ $f \in F(9,2)$ and $w \in \Pi([9])$.
First, we define the order $w:=(7,8,9,1,2,3,4,5,6)$ which updates the $n/2$ additional automata of $f$ before it updates the $n$ first ones.
Then, we take $f$ which copies the values of the first set of automata in the third, the second in the first and the third in the second. Formally,
$f: z \mapsto z_{[4,6]} z_{[7,9]} z_{[3]}$.
Figure~\ref{figure_example_1_f} shows the interaction graph of $f$.
Now, a simple expansion of $f^w$ gives us
$$ z = z_{[3]} z_{[4,6]} z_{[7,9]} \xrightarrow{f^{7,8,9}} z_{[3]} z_{[4,6]} z_{[3]} \xrightarrow{f^{1,2,3}} z_{[4,6]} z_{[4,6]} z_{[3]} \xrightarrow{f^{4,5,6}} 
 h(z_{[6]}) z_{[3]}.$$
Thus, we have $\pr_{[n]} \circ f^w = h \circ \pr_{[n]}$.
As a result, $(f,w)$ sequentializes $h$ respecting $u$ and $\kappa(h,u) \leq 3$.
Moreover, Lemma~\ref{lem_kp_lower_bound} 
(Section \ref{section_confusion_graph}), 
shows that there are no smaller $(f,w)$ which would suit.
Thus, we have $\kappa_{6,q} \geq \kappa(h,u) = n/2 = 3$.
Next, we define $g \in F(7,2)$ and $v \in \Pi([7])$ such that $(g,v)$ (with only one more automaton than $h$) sequentializes $h$ (but without respecting $u$). 
First, we define the order $v := (7,1,4,2,5,3,6)$ which, instead of updating $[n]$ in the order $u$, updates the pairs of automata $(1,4)$, $(2,5)$ and $(3,6)$ one by one.
Then, we take $g$ such that for all $y \in \{0,1\}^7$, 
$$g: y \mapsto (y_4,y_5,y_6,y_7 - y_2 - y_3,y_7 - y_4 - y_3, y_7 - y_4 - y_5,y_1+y_2+y_3).$$
Figure~\ref{figure_example_1_g} depicts the interaction graph of $g$ with only the inner edges of the automaton $5$ displayed.
As above, a simple expansion of $g^v$ gives us $g^v:y \mapsto  h(y_{[6]})(y_1+y_2+y_3)$.
Thus, $\pr_{[n]} \circ g^v = h \circ \pr_{[n]}$ and $g$ has $1$ more automata than $h$.
As a result, $(f,w)$ sequentializes $h$ and $\kappa^{\min}(h) \leq 1$. A generalization of this example shows that for all even $n$ and $q \geq 2,$ $\exists h \in F(n,q), u \in \Pi([n])$ such that $\kappa(h,u)  \geq \kappa^{\min}(h) + n/2 -1$.
\end{example}

\section{Confusion graph and $\kappa_{n,q}$} \label{section_confusion_graph} 

 \flo{In \cite{Bridoux0PST17}, the $\GNECC$ graph was defined. 
 This graph is very useful to compute $\kappa(h,u)$.
 We rather call it the confusion graph in this paper. }

\begin{definition}[Confusion graph] \label{def_confusion_graph}
	\flo{Let us consider $h \in F(n,q)$ and the sequential update schedule $u \in \Pi([n])$. 
	We call \emph{confusion graph} $G_{h,u}$ the undirected graph whose vertices are all the configurations of $[0,q[^n$ and in which two configurations $x$ and $x'$ are neighbors if and only if $h(x) \neq h(x')$ and $\exists\ i \in [n],$ $h^{ \{u_1, \dots, u_i \}}(x) =  h^{\{u_1, \dots, u_i \}}(x')$.}
\end{definition}

In the sequel, we denote by $\chi(G)$ the \emph{chromatic number} of the graph $G$, namely the minimum number of colors of a proper coloring of its vertices.
In \cite{Bridoux0PST17}, the exact relation between the chromatic number of the confusion graph $G_{h,u}$ and $\kappa(h,u)$ was proven in the case where $q=2$.
We propose in Theorem~\ref{th1} a straightforward generalization for any alphabet size.

\begin{theorem} 
	\label{th1}
	Let us consider $h \in F(n,q)$ and the sequential update schedule $u \in \Pi([n])$. 
	Then we have $\kappa(h,u) = \lceil \log_q(\chi(G_{h,u})) \rceil$.
\end{theorem}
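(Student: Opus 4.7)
The plan is to prove the two inequalities separately. For the upper bound $\kappa(h,u) \leq \lceil \log_q(\chi(G_{h,u})) \rceil$, set $k := \lceil \log_q(\chi(G_{h,u})) \rceil$ so that there is a proper coloring $c \colon [0,q[^n \to [0,q[^k$ of $G_{h,u}$. I construct $f \in F(n+k,q)$ and $w \in \Pi([n+k])$ as follows: take $w := (n+1,n+2,\dots,n+k,u_1,u_2,\dots,u_n)$; for each $j \in [k]$ set $f_{n+j}(z) := c(z_{[n]})_j$, so that after the first $k$ updates the state is $(x,c(x))$ regardless of the initial extras; then define $f_{u_i}$ so that on every input of the form $(h^{\{u_1,\dots,u_{i-1}\}}(x),c(x))$ it outputs $h_{u_i}(x)$, extending arbitrarily elsewhere. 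The delicate point is well-definedness of $f_{u_i}$: if two distinct $x,x'$ yield the same such input then $c(x)=c(x')$ and $h^{\{u_1,\dots,u_{i-1}\}}(x)=h^{\{u_1,\dots,u_{i-1}\}}(x')$; should moreover $h_{u_i}(x)$ differ from $h_{u_i}(x')$, the vertices $x,x'$ would be adjacent in $G_{h,u}$ (with witness $i-1$ when $i \geq 2$; the case $i=1$ already forces $x=x'$), contradicting the properness of $c$. A short induction on $i$ then yields $\pr_{[n]}(f^w(x,c_0)) = h(x)$ for every initial extras $c_0 \in [0,q[^k$, so $(f,w)$ sequentializes $h$ while $w$ respects $u$.

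For the reverse inequality, fix any $(f,w)$ sequentializing $h$ with $k$ additional automata such that $w$ respects $u$, and define $c(x) := f^w(x,0^k)_{[n+1,n+k]}$, where $0^k$ is the all-zero vector in $[0,q[^k$. Since $c$ takes values in $[0,q[^k$, showing $c$ is a proper coloring of $G_{h,u}$ will give $\chi(G_{h,u}) \leq q^k$, hence $\lceil \log_q(\chi(G_{h,u})) \rceil \leq k$. Denote by $t_i$ the step at which $u_i$ is updated in $w$, by $s_j$ the step at which the extra $n+j$ is updated, and let $S_t(x)$ be the state obtained after the first $t$ updates of $f^w$ applied to $(x,0^k)$. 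The key observation is that every coordinate of $[n+k]$ appears exactly once in $w$: the coordinate $u_j$ with $j \leq i$ was updated at step $t_j \leq t_i$ (because $w$ respects $u$) and is never touched again, so at step $t_i$ it already holds its final value $h_{u_j}(x)$; conversely $u_j$ with $j > i$ is still at its initial value $x_{u_j}$. This yields $\pr_{[n]}(S_{t_i}(x)) = h^{\{u_1,\dots,u_i\}}(x)$. The same single-update argument shows that the extra coordinate $n+j$ at step $t_i$ equals $c(x)_j$ if $s_j < t_i$ and $0$ otherwise, so $\pr_{[n+1,n+k]}(S_{t_i}(x))$ depends only on $c(x)$ and on $w$.

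To finish the reverse direction, suppose $x \sim x'$ in $G_{h,u}$ with witness $i$ and assume $c(x) = c(x')$. The two observations above force $S_{t_i}(x) = S_{t_i}(x')$; since the remaining updates of $w$ act identically on identical states, $f^w(x,0^k) = f^w(x',0^k)$, and projecting onto $[n]$ gives $h(x) = h(x')$, contradicting the adjacency of $x$ and $x'$. Hence $c$ is proper, which together with the upper-bound construction yields the claimed equality.

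The main conceptual obstacle is the rigidity observation invoked in the reverse direction: the constraint that $w$ merely respects $u$ forces the intermediate $[n]$-projection of $f^w$ after step $t_i$ to coincide with the parallel partial update $h^{\{u_1,\dots,u_i\}}(x)$, even though $f$ is otherwise completely unconstrained between the beginning and the end of the sequential execution. This is precisely what makes the purely $h$-defined confusion graph faithfully capture the trajectory collisions of every valid sequentialization and bridges the combinatorial chromatic number to the computational quantity $\kappa(h,u)$.
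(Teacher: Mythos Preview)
Your proof is correct and follows essentially the same approach as the paper, which splits the statement into the two inequalities proved in Lemmas~\ref{lem1} and~\ref{lem2}: the coloring $c(x) := f^w(x,0^k)_{[n+1,n+k]}$ for the lower bound, and the schedule $w=(n{+}1,\dots,n{+}k,u_1,\dots,u_n)$ with extras storing the color for the upper bound. The only cosmetic difference is that in the lower bound the paper picks the \emph{largest} witness $i$ and derives an immediate contradiction at the next update of $u_{i+1}$, whereas you take an arbitrary witness, observe $S_{t_i}(x)=S_{t_i}(x')$, and let the rest of $w$ run; both arguments rely on the same ``rigidity'' fact that $w\in\Pi([n+k])$ updates each coordinate exactly once, so $\pr_{[n]}(S_{t_i}(x))=h^{\{u_1,\dots,u_i\}}(x)$.
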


%
%
%


\flo{
In \cite{Bridoux0PST17},the authors proved that for all $n$ we can construct $h \in F(n,2)$ whose cost of sequentialization respecting the order $u \in \Pi([n])$ is $\lfloor n/2 \rfloor$. Lemma~\ref{lem_kp_lower_bound} bellow is a straightforward generalization for any alphabet size.}

\begin{lemma} 
	\label{lem_kp_lower_bound}
	For all $ n \in \mathbb{N}$ and $q \geq 2$ we have $\kappa_{n,q} \geq \lfloor n/2 \rfloor$.
\end{lemma}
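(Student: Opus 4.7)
The plan is to construct, for every $n$ and every $q \geq 2$, a concrete pair $(h,u)$ with $h \in F(n,q)$ and $u \in \Pi([n])$ whose confusion graph $G_{h,u}$ has chromatic number at least $q^{\lfloor n/2 \rfloor}$; Theorem~\ref{th1} will then immediately give $\kappa(h,u) \geq \lfloor n/2 \rfloor$, whence $\kappa_{n,q} \geq \lfloor n/2 \rfloor$ by definition of $\kappa_{n,q}$.

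Set $k := \lfloor n/2 \rfloor$ and take $u := (1,2,\ldots,n)$. The candidate $h$ is the ``half-swap'' defined by $h_j(x) := x_{k+j}$ for $j \in [k]$, $h_j(x) := x_{j-k}$ for $j \in [k+1,2k]$, and (when $n = 2k+1$) $h_n(x) := x_n$; this is the same style of $\AN$ as in Example~\ref{k_vs_km}, with the swap now acting on only the first $2k$ coordinates. The heart of the proof is to exhibit a clique of size $q^k$ in $G_{h,u}$. Concretely, I fix an arbitrary $a = (a_{k+1},\ldots,a_n) \in [0,q[^{n-k}$ and consider the set $C_a$ of the $q^k$ configurations of the form $(b_1,\ldots,b_k,a_{k+1},\ldots,a_n)$ with $b \in [0,q[^k$. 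For two distinct $x,x' \in C_a$, the swap $h$ copies their differing $b_j$'s to coordinates in $[k+1,2k]$, so $h(x) \neq h(x')$; conversely, a direct unfolding of $h^{\{1,\ldots,k\}}$ shows that the coordinates in $[k]$ are overwritten by the constant vector $(a_{k+1},\ldots,a_{2k})$ while the coordinates in $[k+1,n]$ stay equal to $a$, so $h^{\{1,\ldots,k\}}(x) = h^{\{1,\ldots,k\}}(x')$. Taking $i = k$ in Definition~\ref{def_confusion_graph} then makes $x$ and $x'$ adjacent in $G_{h,u}$, and $C_a$ is a clique of the announced size.

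This yields $\chi(G_{h,u}) \geq q^k$ and, by Theorem~\ref{th1}, $\kappa(h,u) \geq \lceil \log_q(q^k) \rceil = k = \lfloor n/2 \rfloor$. I do not anticipate any real obstacle beyond carefully unwinding Definition~\ref{def_confusion_graph} and the formula for the partial update $h^I$; the parity of $n$ only affects whether the last coordinate is present and fixed by $h$, which plays no role in the clique construction.
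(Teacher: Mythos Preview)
Your proposal is correct and follows essentially the same approach as the paper's proof: the same half-swap $h$, the same canonical order $u$, and the same clique argument in $G_{h,u}$ combined with Theorem~\ref{th1}. The only cosmetic difference is that the paper fixes the tail $a$ to be the all-$0$ vector whereas you allow an arbitrary $a$, which is immaterial.
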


Moreover, in \cite{Bridoux0PST17}, the authors showed that $\forall n \in \mathbb{N}, \kappa_{n,2} \leq 2n/3+2$.
Theorem~\ref{th2} below shows that we have in fact, $\kappa_{n,q} \leq \lceil n/2 + \log_q(n/2+1) \rceil$ for any $q$.
To prove it, we regroup all the configurations of the confusion graph $G_{h,u}$ which are equal in their second half $(x_{\{u_{n/2+1}, \dots, u_{n}\}} = x'_{\{u_{n/2+1}, \dots, u_{n}\}})$ and have the same image $(h(x) = h(x'))$.
We prove that a proper coloring of this graph is a proper coloring of the confusion graph. And then, we prove that the maximal degree of this factorized graph is at most $\lceil (n/2+1)q^{n/2} \rceil$. Since the chromatic number of a graph is at most its maximal degree (plus one), we deduce an upper bound for the chromatic number and then for $\kappa_{n,q}$. 
\begin{theorem}
	\label{th2}
	For all $ n \in \mathbb{N},\ q \geq 2$ we have $\kappa_{n,q} \leq \lceil n/2 + log_q(n/2+1) \rceil$.
\end{theorem}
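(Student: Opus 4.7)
The plan is to invoke Theorem~\ref{th1} and bound $\chi(G_{h,u})$ uniformly over $h \in F(n,q)$ and $u \in \Pi([n])$. Fix such $h, u$, set $B := \{u_{n/2+1}, \dots, u_n\}$, and introduce the equivalence $x \equiv x'$ iff $x_B = x'_B$ and $h(x) = h(x')$. Let $\tilde G$ be the quotient graph of $G_{h,u}$ under $\equiv$: vertices are the $\equiv$-classes, and two classes are adjacent iff they contain adjacent representatives in $G_{h,u}$. Two equivalent configurations share their image, so by Definition~\ref{def_confusion_graph} they are not adjacent in $G_{h,u}$; hence every proper coloring of $\tilde G$ lifts class-by-class to a proper coloring of $G_{h,u}$, giving $\chi(G_{h,u}) \leq \chi(\tilde G)$.

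The core step is the degree bound $\Delta(\tilde G) \leq \lceil (n/2+1) q^{n/2} \rceil$. Fix a class $[x]$. A neighbor class $[x']$ is identified by $(x'_B, h(x'))$, and adjacency via some $i \in [n-1]$ forces $h(x')$ to agree with $h(x)$ on the prefix $\{u_1, \dots, u_i\}$ and $x'_B$ to agree with $x_B$ on $B \cap \{u_{i+1}, \dots, u_n\}$. For indices $i \in \{n/2, \dots, n-1\}$ these class-level conditions are also sufficient (no further matching on $A$-coordinates is required from the representatives), and the free-parameter count of $(x'_B, h(x'))$ equals exactly $(i - n/2) + (n - i) = n/2$, so each such $i$ contributes at most $q^{n/2}$ neighbor classes. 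The remaining neighbors are those for which the only adjacency witnesses use some $i < n/2$: such classes must have $x_B = x'_B$, and $h(x), h(x')$ must already disagree on $\{u_1, \dots, u_{n/2}\}$; the extra representative-matching constraint on $A \cap \{u_{i+1}, \dots, u_n\}$ is then used to compress these candidates into one additional bucket of size $q^{n/2}$. Summing the $n/2$ ``large $i$'' buckets and this extra bucket yields the stated degree bound.

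Applying $\chi(\tilde G) \leq \Delta(\tilde G) + 1$ together with Theorem~\ref{th1}, we conclude
$\kappa(h, u) = \lceil \log_q \chi(G_{h,u}) \rceil \leq \lceil \log_q\bigl((n/2+1) q^{n/2} + 1\bigr) \rceil \leq \lceil n/2 + \log_q(n/2+1) \rceil$
after routine manipulation of the ceiling, and maximizing over $(h, u) \in F(n,q) \times \Pi([n])$ gives the theorem. The principal technical difficulty will be the extra-bucket step: the class-level necessary conditions alone allow up to $q^{n-i}$ candidates per $i < n/2$, which far exceeds $q^{n/2}$, so the argument must genuinely exploit the representative-matching constraint on $A \cap \{u_{i+1}, \dots, u_n\}$ to inject all ``small-$i$-only'' neighbors into a single $q^{n/2}$-sized label set.
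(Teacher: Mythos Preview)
Your skeleton is exactly the paper's: quotient $G_{h,u}$ by the relation $x_B=x'_B$ and $h(x)=h(x')$, lift colorings, and bound the degree of the quotient by $(n/2+1)q^{n/2}$. Your large-$i$ count (the $n/2$ buckets for $i\in\{n/2,\dots,n-1\}$) is correct and matches the paper's computation $q^{\,i-n/2}\cdot q^{\,n-i}=q^{n/2}$.

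The only real gap is the step you yourself flag. You propose to control the ``small-$i$-only'' neighbors via the matching constraint on $A\cap\{u_{i+1},\dots,u_n\}$, but that constraint ties the witness $x'$ to \emph{some} $x$ in the class $[x^{(1)}]$, and the $A$-part of elements of that class is unconstrained, so this does not pin down $x'$ on those coordinates and will not give the injection you want. The paper's resolution is simpler and bypasses this entirely: if a class $X_j$ neighbors $X_1$ via some $i\le n/2$, then the witness $x'\in X_j$ satisfies $x'_B=x_B=(X_1)_B$ (the $B$-part is a class invariant), so $x'$ is one of only $q^{|A|}=q^{n/2}$ configurations with that fixed second half. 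Since the classes partition $A^n$, distinct classes contain distinct witnesses, hence at most $q^{n/2}$ classes arise this way. In other words, the label you are looking for is just $x'_A$ for any chosen witness; no per-$i$ argument or use of the $A$-constraint is needed.

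One further point: from $\Delta(\tilde G)\le (n/2+1)q^{n/2}$ you write $\chi\le (n/2+1)q^{n/2}+1$ and then claim $\lceil\log_q((n/2+1)q^{n/2}+1)\rceil\le\lceil n/2+\log_q(n/2+1)\rceil$ by ``routine manipulation''. That ceiling inequality can fail when $n/2+\log_q(n/2+1)$ is an integer. The paper absorbs the $+1$ by noting that the $q^{n/2}$ witnesses above include $x^{(1)}$ itself, which lies in $X_1$ and so does not contribute a neighbor; hence the degree is strictly below $(n/2+1)q^{n/2}$ and $\chi(\tilde G)\le (n/2+1)q^{n/2}$ exactly.
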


\section{Lower bounds for $\kappa^{\min}_{n,q}$} \label{section_bounds_kmin}

The goal of this section is to construct an $\AN$ with the biggest cost of sequentialization possible and thus deduce a lower bound for $\kappa^{\min}_{n,q}$.
For any set $I$, the set of subsets of $I$ of size $k$ is denoted by $\binom{I}{k} := \{ J \subseteq I\ |\ |J|=k \}$.
For all $x \in A^n$ and $I \subseteq [n]$, let $x[I]:= \{ x' \in A^n\ |\ {x'}_{[n] \setminus I } = x_{[n] \setminus I } \}$ be the set of configurations of $A^n$ which only differ from $x$ in $I$.
In Lemma~\ref{lem_a_b}, we prove that if we can find an encoding $b:\binom{[2k]}{k} \to A^n$ such that the sets $b(E)[E]$ with $E \in \binom{[2k]}{k}$ are disjoint, then there exists $h \in F(n,q)$ such that $\kappa^{min}(h) \geq k$.
To do so, we define the function $h$ \flo{such that for all $x \in b(E)[E],\ h_{E}(x) = x_{[2k] \setminus E}$ and $h_{[2k]\setminus E}(x) = x_{E}$}.
For any $u \in \Pi([n])$, we can define $E$ as the $k$ first coordinates updated by $u$ in $[2k]$ and consider $x = b(E)$.
The set $x[E]$ is a clique in the confusion graph $G_{h,u}$. Indeed, any function which sequentializes $h$ respecting $u$ has, for any configuration in $x[E]$, to first erase the information in $E$ and then to restore it in $[2k]\ \setminus\ E$.
Since this clique is of size $q^k$, we have $\kappa_{h,u} \geq k$ for any $u$ and $\kappa^{\min}(h) \geq k$.

\begin{lemma} \label{lem_a_b}
	Let $n, k \in \mathbb{N} $ and $q \geq 2$.
	If there is a function $b: \binom{[2k]}{k} \to  [0,q[^n$ such that the sets $b(E)[E]$ with $E \in \binom{[2k]}{k}$ are disjoint then there exists a $h \in F(n,q)$ without trivial coordinate functions, with $\kappa^{min}(h) \geq k$.
\end{lemma}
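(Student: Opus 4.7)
The overall strategy is to build an explicit $h$ matching the description preceding the lemma and then apply Theorem~\ref{th1}: once I exhibit a clique of size $q^k$ in $G_{h,u}$ for every $u \in \Pi([n])$, I will get $\chi(G_{h,u}) \geq q^k$, hence $\kappa(h,u) \geq \lceil \log_q(q^k) \rceil = k$ uniformly in $u$, which yields $\kappa^{\min}(h) \geq k$.

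To define $h \in F(n,q)$, I would fix for each $E \in \binom{[2k]}{k}$ an arbitrary bijection $\sigma_E : E \to [2k] \setminus E$. On the slab $b(E)[E]$, which is unambiguously determined by $E$ thanks to the disjointness hypothesis, I set $h_i(x) := x_{\sigma_E(i)}$ for $i \in E$ and $h_i(x) := x_{\sigma_E^{-1}(i)}$ for $i \in [2k] \setminus E$; this realises the swap between $E$ and $[2k]\setminus E$ anticipated in the paragraph before the lemma. For coordinates $j \in [n]\setminus[2k]$, I would force $h_j(x)$ to depend only on $x_{[n]\setminus[2k]}$ through some non-trivial $g_j$. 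On the remaining inputs (those lying outside every slab $b(E)[E]$) I extend $h$ in any way compatible with every coordinate function being non-trivial, for instance by $h_j(x) := x_j + 1 \bmod q$. Non-triviality of $h_i$ for $i \in [2k]$ follows since on any slab $x_{\sigma_E(i)}$ is fixed while $x_i$ ranges over $[0,q[$; on $[n]\setminus[2k]$ it is enforced by the choice of $g_j$.

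Now I fix $u \in \Pi([n])$, let $E$ be the set of the $k$ first elements of $[2k]$ updated by $u$, and let $i^\star$ be the step at which the last element of $E$ is updated, so that $\{u_1,\dots,u_{i^\star}\} \cap [2k] = E$. The key claim is that the $q^k$ configurations of $b(E)[E]$ form a clique in $G_{h,u}$. Two distinct $y, y' \in b(E)[E]$ agree on $[n]\setminus E$ and differ on some $i_0 \in E$; the definition of $h$ on the slab then gives $h_{\sigma_E(i_0)}(y) = y_{i_0} \neq y'_{i_0} = h_{\sigma_E(i_0)}(y')$, so $h(y) \neq h(y')$. For the equalising prefix required by Definition~\ref{def_confusion_graph}, the operator $h^{\{u_1,\dots,u_{i^\star}\}}$ touches exactly $E$ together with some subset of $[n]\setminus[2k]$: on each $i \in E$ the updates produce $y_{\sigma_E(i)} = y'_{\sigma_E(i)}$ because $\sigma_E(i) \in [2k]\setminus E \subseteq [n]\setminus E$; on each $j \in \{u_1,\dots,u_{i^\star}\}\setminus E$ the updates coincide because $h_j$ ignores the $[2k]$-coordinates and $y,y'$ agree on $[n]\setminus[2k]$; and on the untouched coordinates $y$ and $y'$ were already equal. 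Hence $b(E)[E]$ is a clique of size $q^k$, $\chi(G_{h,u}) \geq q^k$, and Theorem~\ref{th1} closes the argument.

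The point that calls for the most care is precisely this last equalisation: $u$ may interleave updates of $[n]\setminus[2k]$ before or between the updates of $E$, and such updates could a priori propagate the difference of $y$ and $y'$ on $E$ into the coordinates of $[n]\setminus[2k]$, breaking the required equality. Making $h_j$ for $j \in [n]\setminus[2k]$ independent of $x_{[2k]}$ is exactly what prevents this leakage, and it is compatible with non-triviality, so the construction goes through.
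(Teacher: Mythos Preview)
Your proof is correct and follows essentially the same route as the paper: construct $h$ so that on each slab $b(E)[E]$ it swaps the $E$- and $([2k]\setminus E)$-blocks, then for any $u$ take $E$ to be the first $k$ coordinates of $[2k]$ in the order $u$ and show $b(E)[E]$ is a $q^k$-clique in $G_{h,u}$, concluding via Theorem~\ref{th1}. The only cosmetic difference is that the paper sets $h_j \equiv 0$ for $j>2k$ (and $h \equiv 0$ off the slabs), whereas you use an arbitrary non-trivial $g_j(x_{[n]\setminus[2k]})$; both choices achieve exactly the independence from $x_{[2k]}$ that you correctly single out as the delicate point.
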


Using Lemma~\ref{lem_a_b} we could easily show that for any $q \geq 2$ and $n \in \mathbb{N}$, we have $\kappa^{min}_{n,q} \geq \lfloor n/4 \rfloor $.
Indeed, if we have $n = 4k$, we can use the second half of the configuration to encode the set $E$. In Theorem~\ref{th_km_2} we prove that for any alphabet, we can in fact encode any $E \in \binom{[2k]}{k}$ in a configuration $x$ of size $3k$.
To do so, we use the following technique:
if $i \in \bar{E} := [2k] \setminus E$ then we have $x_i = \letterA$ if $i+1$ in $E$ and $\letterB$ otherwise. Moreover, in $[2k+1,3k]$, using the same technique, we indicate if each element of $E$ is followed by another element of $E$ or not.
From this encoding and Lemma~\ref{lem_a_b} we deduce a lower bound for $\kappa^{\min}$ for any alphabet.
\begin{theorem} \label{th_km_2}
	For all $n \in \mathbb{N}$ and $q \geq 4$, we have $\kappa^{\min}_{n,q} \geq \lfloor n/2 - \log_q(n)\rfloor$.
\end{theorem}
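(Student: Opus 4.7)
The plan is to apply Lemma~\ref{lem_a_b} with $k:=\lfloor n/2-\log_q n\rfloor$, so that $s:=n-2k\geq 2\log_q n$ coordinates remain available in $[2k+1,n]$ for auxiliary data; this block can hold any of at least $q^s\geq n^2$ distinct values. The task reduces to constructing an encoding $b:\binom{[2k]}{k}\to[0,q[^n$ whose images $b(E)[E]$ are pairwise disjoint, equivalently such that for every $E\neq E'$ the vectors $b(E)$ and $b(E')$ differ somewhere on $(\bar E\cap\bar{E'})\cup[2k+1,n]$.

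My construction extends the binary, size-$3k$ scheme sketched in the paragraph above by using the hypothesis $q\geq 4$ to carry two bits of information about $E$ per coordinate of $[2k]$. Concretely, at each $i\in[2k]$ I would set $b(E)_i$ to encode an ordered pair of indicator bits (for example, $([i-1\in E],[i+1\in E])$, viewed as an element of $[0,q[$). This doubling of the effective bit-rate on $[2k]$ takes over the role of the former $k$-coordinate ``gap-indicator'' block on $[2k+1,3k]$, so that block can shrink to only $s=2\log_q n$ coordinates in which I store a compact identifier of $E$.

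To verify disjointness I fix $E\neq E'$ and split into two cases. Case 1: some $i\in\bar E\cap\bar{E'}$ is adjacent in $[2k]$ to an element of $E\triangle E'$; then the local two-bit value at $i$ differs, so $b(E)_i\neq b(E')_i$ and we are done. Case 2: every element of $E\triangle E'$ has both its neighbours in $E\cup E'$, forcing $E\triangle E'$ into the strict interior of the runs of $E\cup E'$. In this structured regime, the auxiliary identifier stored on $[2k+1,n]$, chosen hand-in-hand with the local encoding, separates $E$ from $E'$; its $q^s\geq n^2$ possible values give ample granularity.

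The principal obstacle is Case 2: local equivalence classes can in general be large (consider an $E$ that is a union of several disjoint adjacent pairs, which admits independent ``inner swaps''), so the auxiliary identifier cannot be a naive enumeration of $\binom{[2k]}{k}$, which would cost $\Omega(k)$ coordinates rather than the $O(\log_q n)$ we can afford. The technical heart of the argument is therefore the joint design of the local encoding and of a short global identifier so that together they distinguish every pair $E\neq E'$; this is where the hypothesis $q\geq 4$ is used in an essential way.
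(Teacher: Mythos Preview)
Your outline correctly invokes Lemma~\ref{lem_a_b} and correctly reduces disjointness of the sets $b(E)[E]$ to the existence, for each pair $E\neq E'$, of a coordinate in $(\bar E\cap\bar E')\cup[2k+1,n]$ on which $b(E)$ and $b(E')$ differ. But you explicitly leave the ``technical heart'' undone, and in fact the particular local encoding you commit to, $b(E)_i=([i{-}1\in E],[i{+}1\in E])$, cannot be completed with only $O(\log_q n)$ auxiliary coordinates.

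Here is a concrete obstruction. Take $2k=8m$ (so $k=4m$) and, for each $S\subseteq\{0,\dots,m-1\}$, set
\[
E^{(S)}\;=\;\bigcup_{j\notin S}\{8j{+}1,\,8j{+}2,\,8j{+}3,\,8j{+}5\}\ \cup\ \bigcup_{j\in S}\{8j{+}1,\,8j{+}3,\,8j{+}4,\,8j{+}5\}\ \in\ \binom{[2k]}{k}.
\]
For $S\neq S'$ one has $E^{(S)}\triangle E^{(S')}\subseteq\bigcup_j\{8j{+}2,\,8j{+}4\}$, and every element of this symmetric difference has both neighbours in $\{8j{+}1,\,8j{+}3,\,8j{+}5\}\subseteq E^{(S)}\cap E^{(S')}$. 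Thus every pair $E^{(S)},E^{(S')}$ lands in your Case~2: at each $i\in\bar E^{(S)}\cap\bar E^{(S')}$ the pair $([i{-}1\in E],[i{+}1\in E])$ is the same for $E^{(S)}$ and $E^{(S')}$. Your auxiliary block must therefore separate all $2^{m}=2^{k/4}$ sets $E^{(S)}$, which requires $\Omega(k)=\Omega(n)$ symbols, not $O(\log_q n)$. No ``compact identifier'' can rescue this local scheme.

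The paper's construction avoids this by an essentially \emph{non-local} encoding. Only one number is stored in the auxiliary block (of size $\lceil\log_q(2k)\rceil$): the index $m=m(E)$ at which the prefix imbalance $|[i]\cap E|-|[i]\setminus E|$ is maximal. Scanning cyclically from $m{+}1$, one has at every step seen at least as many $\bar E$-positions as $E$-positions. Split $\bar E=\bar E^0\cup\bar E^1$ by whether the successor lies in $E$, and $E=E^0\cup E^1$ by whether the predecessor lies in $\bar E$; then $|\bar E^1|=|E^1|$, and by the balance property the $j$-th element of $\bar E^1$ (in scan order) precedes the $j$-th element of $E^1$. The four letters are used so that positions in $\bar E^0$ carry letters $\{0,1\}$ encoding a local bit (whether $e{+}2\in E$), while the $j$-th position of $\bar E^1$ carries a letter in $\{2,3\}$ encoding whether the successor of the $j$-th element of $E^1$ lies in $E$. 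The scan can then reconstruct $E$ deterministically. This balanced-parenthesis style matching of $\bar E^1$ with $E^1$ is exactly the missing idea; it is what lets $\lceil\log_q(2k)\rceil$ auxiliary symbols suffice, and it cannot be replaced by any encoding in which $b(E)_i$ depends only on a bounded neighbourhood of $i$.
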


Theorem~\ref{th_km_s} below states that, if we have an alphabet of size at least $4$, we can encode any $E \in \binom{[2k]}{k}$ in a configuration of size $2k+log_q(2k)$.
To do so, we encode $E$ in $[2k] \setminus E$ using the fact that in an alphabet of size $4$ each coordinate can encode twice more information than with a bit.
Then, we indicate in $[2k,2k+log_q(2k)]$ where the reading for decoding starts. From this encoding and Lemma~\ref{lem_a_b} we deduce a lower bound for $\kappa^{\min}$.
\begin{theorem} \label{th_km_s}
	For all $q \geq 4 , n \in \mathbb{N}$, $\kappa^{\min}_{n,q} \geq \lfloor n/2 - \log_q(n)\rfloor$.
\end{theorem}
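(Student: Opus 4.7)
The plan is to derive the theorem from Lemma~\ref{lem_a_b}. Setting $k = \lfloor n/2 - \log_q(n) \rfloor$, this choice implies $n \geq 2k + \lceil \log_q(2k) \rceil$ after a short arithmetic check. Writing $n = 2k + m$ with $m \geq \lceil \log_q(2k) \rceil$, it will suffice to construct a map $b \colon \binom{[2k]}{k} \to [0,q[^n$ whose sets $\{b(E)[E]\}_E$ are pairwise disjoint; Lemma~\ref{lem_a_b} will then yield an $h \in F(n,q)$ with $\kappa^{\min}(h) \geq k$.

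The construction will use two complementary ingredients. In the last $m$ coordinates of $b(E)$ I store the base-$q$ expansion of a pointer $p(E) := \min(\bar E) - 1 \in [0, 2k[$, which distinguishes encodings whose minimum $\bar E$-position differs. In the first $2k$ coordinates I exploit that $q \geq 4$ lets a single symbol carry two bits: for each $i \in \bar E$, I place at $b(E)_i$ a symbol whose two bits record two carefully chosen entries of the indicator vector $v(E) \in \{0,1\}^{2k}$ (defined by $v(E)_j = 1$ iff $j \in E$), arranged so that reading the symbols in order through $\bar E$, starting from the position given by $p(E)$, yields a self-synchronizing reconstruction of $v(E)$. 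The values $b(E)_i$ for $i \in E$ do not enter $b(E)[E]$ and may be set to $0$.

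For the disjointness argument, I will suppose $E \ne E'$ and $b(E)[E] \cap b(E')[E'] \ne \emptyset$. Then $b(E)$ and $b(E')$ must agree on $(\bar E \cap \bar{E'}) \cup [2k+1, n]$. Agreement on $[2k+1, n]$ forces $p(E) = p(E')$, so $\min \bar E = \min \bar{E'} =: p$, giving $[1, p-1] \subseteq E \cap E'$ and $p \in \bar E \cap \bar{E'}$. A left-to-right propagation, starting from the known agreement on $[1,p]$ and using the equality $b(E)_i = b(E')_i$ at each successive $i \in \bar E \cap \bar{E'}$ to pin down two further entries of the common indicator vector at a time, will then extend the agreement $v(E)_j = v(E')_j$ across $[2k]$, yielding $E = E'$ and a contradiction.

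The main obstacle will be the precise choice of which pair of bits is stored at each $b(E)_i$: the construction must be arranged so that the above propagation can cross long runs of consecutive positions in $E \cap E'$ where no direct symbol comparison is available. This is exactly where the hypothesis $q \geq 4$ is essential---the second bit per coordinate carries precisely the extra information that would otherwise require an additional block of roughly $k$ coordinates (as in the weaker encoding used when only one bit per position is available), bringing the encoding's size down from about $3k$ to $2k + O(\log_q k)$, which is what the claimed bound $\lfloor n/2 - \log_q(n) \rfloor$ requires.
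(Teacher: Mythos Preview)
Your high-level strategy is the paper's: reduce to Lemma~\ref{lem_a_b}, store a pointer in the last $\lceil\log_q(2k)\rceil$ coordinates, and use $q\ge 4$ to pack two bits into each coordinate of $\bar E$. The genuine gap is that you leave the encoding unspecified and the pointer you do commit to, $p(E)=\min(\bar E)-1$, cannot support the left-to-right reconstruction you describe. Take $k\ge 3$, $E=\{2,\dots,k+1\}$, $\bar E=\{1,k+2,\dots,2k\}$, so $p(E)=0$. Your decoder starts at position $1$, reads two bits there, and must then cross the run $2,\dots,k+1\subset E$ before reaching the next readable position $k+2$. But to know that $k+2$ (rather than $3$ or $k$) is the next position of $\bar E$, it must already have certified $2,\dots,k+1\in E$; two bits from a single position cannot do this. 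The same obstruction kills the disjointness propagation: compare with $E'=\{2,\dots,k,k+2\}$, which shares the pointer and has $\bar E\cap\bar{E'}=\{1,k+3,\dots,2k\}$; nothing read at position $1$ and ``propagated two entries at a time'' gets you across $2,\dots,k$ to the first coordinate where $v(E)$ and $v(E')$ differ.

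The paper resolves exactly this obstacle, and the fix is the heart of the argument. The pointer is not $\min\bar E$ but the index $m(E)$ maximising the running imbalance $|E\cap[0,i]|-|\bar E\cap[0,i]|$, and decoding proceeds \emph{cyclically} from $m(E)+1$. This choice yields a ballot-type guarantee: along the cyclic walk the number of $\bar E$-positions visited is always at least the number of $E$-positions visited. The two bits at each $e\in\bar E$ are then split as follows: one bit (via the choice $\{0,1\}$ versus $\{2,3\}$) records whether $e+1\in E$; the second bit records, if $e+1\in E$, whether $e+2\in E$, and if $e+1\in\bar E$, whether $e_j+1\in E$, where $e_j$ is the $j$-th element (in cyclic order) of $E^1=\{e\in E:e-1\in E\}$ and $j$ is the cyclic rank of the current position among $\bar E^1=\{e\in\bar E:e+1\in\bar E\}$. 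The ballot property is precisely what ensures that when the decoder reaches the $j$-th element of $E^1$, the $j$-th element of $\bar E^1$ has already been read, so the needed bit is available. You correctly flag this pairing as ``the main obstacle'', but your proposal supplies neither the right pointer nor the pairing that makes it work.
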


\section{Procedural complexity} \label{section_p_c}

Now, we study the relation between $\kappa^{\min}$ and the procedural complexity as defined in \cite{Gadouleau2011}.
The procedural complexity of $h$ is the minimum number $t$ of functions $g^{(1)}, \dots, g^{(t)}$ (each of which update at most one coordinate) that are required for $g^{(t)} \circ \dots \circ g^{(1)}$ to compute $h$.
For all $q \geq 2$ and $n \geq 2$, let us denote by $F^*(n,q) \subseteq F(n,q)$ the set of functions which do not update more than one coordinate.
In~\cite{Gadouleau2011}, the authors first studied the memoryless procedural complexity $\mathcal{L}(h)$.
It is the necessary number of step to compute $h$ with $g^{(1)}, \dots, g^{(t)}$ of same size than $h$.
Then, they studied $\mathcal{L}(h|m)$ which is the procedural complexity using functions $g^{(1)}, \dots, g^{(t)}$ of a fixed size $m$. More formally , $\forall m \geq n,\ \mathcal{L}(h|m) := $ smallest $t$ such that $\exists\ g^{(1)}, \dots, g^{(t)} \in F^*(m,q)$ such that $ \pr_{[n]} \circ g^{(t)} \circ \dots \circ g^{(1)} = h \circ \pr_{[n]}$.
Here, we also use $\mathcal{L}^*(h) := \min(\{  \mathcal{L}(h|m)\ |\ n \leq m \})$ which is the procedural complexity with a size arbitrarily big.
Let $\Omega(h)$ be the number of non-trivial coordinate functions of $h$.
Theorem~\ref{th_procedural_complexity} shows that the procedural complexity of an $\AN h$ is equal to $\kappa^{min}(h)+\Omega(h)$.
Furthermore, it shows that the minimum procedural complexity is reached when we use $\kappa^{\min}(h)$ additional automata.
It is directly deduced from Lemma~\ref{lem_procedural_complexity_1} and Lemma~\ref{lem_procedural_complexity_2}.
\begin{theorem}\label{th_procedural_complexity}
	Let $h \in  F(n,q)$ and $k := \kappa^{\min}(h)$.
	We have $\mathcal{L}^*(h) = \mathcal{L}(h|n + k) = \Omega(h) + k $.
\end{theorem}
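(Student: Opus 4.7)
The plan is to establish the two inequalities $\mathcal{L}(h|n+k)\le\Omega(h)+k$ and $\mathcal{L}^*(h)\ge\Omega(h)+\kappa^{\min}(h)$ as the two underlying lemmas, and then combine them with the definitional $\mathcal{L}^*(h)\le\mathcal{L}(h|n+k)$ to conclude $\mathcal{L}^*(h)=\mathcal{L}(h|n+k)=\Omega(h)+k$. For the upper bound, I would take an optimal sequentialization $(f,w)\in F(n+k,q)\times\Pi([n+k])$ of $h$ and first normalize it, replacing $f_i$ by the trivial coordinate function $z\mapsto z_i$ for every $i\in[n]$ with $h_i$ trivial. This is sound because, along the trajectory of $f^w$, cell $i$ holds its input value until its unique update and, since the output must be $h_i(x)=x_i$, the update is forced to return the same value; substituting the identity therefore leaves every configuration of the trajectory unchanged. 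The normalized sequentialization has at most $\Omega(h)+k$ non-trivial coordinate functions, so in the single-coordinate decomposition $f^{w_1},\dots,f^{w_{n+k}}$ at least $n-\Omega(h)$ updates are the identity and can be dropped, leaving a procedure in $F^*(n+k,q)$ of length $\Omega(h)+k$.

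For the lower bound, I would take an optimal procedure $g^{(1)},\dots,g^{(t)}$ with $m$ cells, where each $g^{(i)}$ updates a single coordinate $c_i$, and convert it to a sequentialization of $h$ via a static single assignment rewriting. I allocate a fresh sequentialization cell for every value produced during the execution and rewrite each $g^{(i)}$ so that it writes to its own destination cell and reads each source from the cell storing its most recent version. The $n$ sequentialization cells indexed by $[n]$ double up: they hold $x_j$ before their unique update (the first version of cell $j\in[n]$) and $h_j(x)$ afterwards (the last version), absorbing two versions per original coordinate. Counting the remaining intermediate versions gives exactly $t-|U|$ extra cells, where $U=\{j\in[n]:r_j\ge 1\}$ is the set of original coordinates touched by the procedure. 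Since every non-trivial coordinate of $h$ must be written at least once in any correct procedure, $|U|\ge\Omega(h)$, hence $\kappa^{\min}(h)\le t-|U|\le t-\Omega(h)$, which gives $\mathcal{L}^*(h)=t\ge\Omega(h)+\kappa^{\min}(h)$.

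The delicate step is the SSA construction. In particular, a cell $c\in[n+1,m]$ whose arbitrary initial value is read before $c$ is ever updated would a priori require an extra sequentialization cell to hold that arbitrary value; a short WLOG argument rules this case out, namely replacing every such read in $g^{(i)}$ by the constant $0$ preserves the overall output on $[n]$ because the original procedure is correct for every input and in particular for $z_c=0$. After this reduction every read has a well-defined source version, each sequentialization cell is updated exactly once, and the order $w$ obtained by following the procedure step by step (with the trivial cells of $[n]\setminus U$ scheduled at the end) yields a valid sequentialization.
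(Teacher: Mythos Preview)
Your proof is correct. The upper bound argument---normalizing an optimal sequentialization so that trivial coordinates of $h$ get trivial coordinate functions, then dropping the resulting identity updates---is exactly the paper's Lemma~\ref{lem_procedural_complexity_1}.

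The lower bound, however, takes a genuinely different route from the paper. The paper's Lemma~\ref{lem_procedural_complexity_2} does not build a sequentialization at all: it takes an optimal procedure $g^{(1)},\dots,g^{(t)}$, fixes the order $w\in\Pi([n])$ induced by the last-update times of the coordinates in $[n]$, and then uses the values written at the ``extra'' steps (updates of auxiliary cells and non-final updates of cells in $[n]$) to define a map $c:A^n\to A^{t-p}$ which it proves is a proper coloring of the confusion graph $G_{h,w}$. The bound $\kappa(h,w)\le t-p$ then follows from the chromatic characterization of Theorem~\ref{th1}. Your SSA construction instead produces an explicit sequentialization with $t-|U|$ additional automata directly, bypassing the confusion graph entirely. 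Your argument is therefore more elementary and self-contained (it does not rely on Theorem~\ref{th1}), and it is fully constructive: from a short procedure you actually exhibit the witnessing $(f,w)$. The paper's route, on the other hand, ties the result back to the paper's central combinatorial object and thus reinforces the unifying role of the confusion graph. Both approaches arrive at the same inequality $\kappa^{\min}(h)\le t-|U|\le t-\Omega(h)$ via the same observation that every non-trivial coordinate must be written at least once.
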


In Lemma~\ref{lem_procedural_complexity_1}, we prove that $\mathcal{L}^*(h) \leq \Omega(h) + \kappa^{min}(h)$.
We use the fact that by definition of $ k:= \kappa^{min}(h)$ there is $f \in F(n+k,q)$ and $w \in \Pi([n+k])$ such that the $n+k$ instructions $f^{w_1}, \dots, f^{w_{n+k}} \in F^*(n+k,q)$ compute $h$.
With that, we already have $\mathcal{L}(h|n + k) \leq n + k$.
Furthermore, for each $i$ such that $h_i$ is trivial, we can remove the function $f^i$ of the list of instructions and still compute $h$.
As a result, we have $\mathcal{L}(h|n + k) \leq n + k - (n - \Omega(h) ) = \Omega(h) + k$, and by definition of $\mathcal{L}^*(h)$ we have $\mathcal{L}^*(h) \leq \mathcal{L}(h|n + k)$.

\begin{lemma} ~\label{lem_procedural_complexity_1}
	Let $h \in  F(n,q)$ and $k := \kappa^{\min}(h)$.
	We have $\mathcal{L}^*(h) \leq \mathcal{L}(h|n + k) \leq \Omega(h) + k$.
\end{lemma}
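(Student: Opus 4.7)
The plan is to take a sequentializer of $h$ witnessing $\kappa^{\min}(h) = k$, reinterpret its $n+k$ sequential updates as a procedure of length $n+k$, and then prune the updates that correspond to trivial coordinates of $h$ to reach the bound $\Omega(h) + k$.

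By definition of $k := \kappa^{\min}(h)$, there exist $f \in F(n+k,q)$ and $w \in \Pi([n+k])$ with $\pr_{[n]} \circ f^w = h \circ \pr_{[n]}$. Writing $f^w = f^{w_{n+k}} \circ \cdots \circ f^{w_1}$ exhibits a sequence of $n+k$ elements of $F^*(n+k,q)$ whose composition, projected to $[n]$, equals $h \circ \pr_{[n]}$, which immediately gives $\mathcal{L}(h|n+k) \leq n+k$.

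To cut this down by the $n - \Omega(h)$ trivial coordinates of $h$, I would define a modified $\tilde f \in F(n+k,q)$ by setting $\tilde f_i(y) := y_i$ whenever $i \in [n]$ and $h_i$ is trivial, and $\tilde f_j := f_j$ otherwise, and then argue that $(\tilde f, w)$ still sequentializes $h$. The crucial observation is that $w$ is a permutation, so each coordinate is updated exactly once; for a trivial $i \in [n]$ with $\ell := w^{-1}(i)$, coordinate $i$ carries its initial value $x_i$ right up until step $\ell$ and is never touched again afterwards. Since the final value at position $i$ must equal $h_i(x_{[n]}) = x_i$ by triviality, the original $f_i$ is forced to output $x_i$ on the single input it ever receives during the run, so it agrees with the identity at that input. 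A step-by-step induction comparing the trajectories of $f^w$ and $\tilde f^w$ then shows that the two executions produce identical configurations at every step.

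Once $\tilde f$ is in hand, each instruction $\tilde f^i$ for trivial $i \in [n]$ is the identity on $[0,q[^{n+k}$, so dropping all of them from the list of instructions leaves a sequence of length $(n+k) - (n - \Omega(h)) = \Omega(h) + k$ in $F^*(n+k,q)$ whose composition still projects to $h \circ \pr_{[n]}$. This yields $\mathcal{L}(h|n+k) \leq \Omega(h) + k$, and $\mathcal{L}^*(h) \leq \mathcal{L}(h|n+k)$ is immediate from the definition of $\mathcal{L}^*$ as a minimum over $m \geq n$. I expect the main subtlety to be that induction: one has to argue carefully that every reachable input to $f_i$ (for trivial $i$) is one on which $f_i$ agrees with the identity, so that replacing $f_i$ globally by the identity does not perturb the trajectory.
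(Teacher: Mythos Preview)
Your proposal is correct and follows essentially the same approach as the paper: both exploit that $w$ is a permutation so each coordinate $i$ is updated exactly once, and that for trivial $h_i$ the final value at position $i$ must equal the initial value $x_i$, forcing $f^i$ to act as the identity on the configuration reached at that step. The only cosmetic difference is that you introduce an auxiliary $\tilde f$ (with $\tilde f_i$ set to the projection for trivial $i$) and prune its identity instructions, whereas the paper prunes the instructions $f^{w_j}$ directly from the original sequence; the underlying induction on the trajectory is the same.
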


In Lemma~\ref{lem_procedural_complexity_2}, we prove that $\Omega(h) + k \leq \mathcal{L}^*(h)$ with $k := \kappa^{\min}(h)$.
To do so, we take a set of functions $g^{(1)} \dots, g^{(t)} \in F^*(m,q)$ which compute $h$.
We consider an order $w \in \Pi([n])$ which updates all coordinate of $[n]$ in the same order that \flo{ $g^{(1)} \dots, g^{(t)}$ update them for the last time.} Then we prove that $h$ can be sequentialized respecting $w$ with less than $\mathcal{L}^*(h) - \Omega(h)$ additional automata. Let $J = \{ j_1, \dots, j_\ell \} $ be the set of steps such that either $g^{(j_i)}$ updates a coordinate of $]n,m]$, either it updates a coordinate of $[n]$ that will be updated again later. 
We have $\ell = \mathcal{L}^*(h) - \Omega(h)$. 
Then, we define $c: A^{n} \to A^{k}$ such that $c_i(x)$ equals $( g^{(j_i)} \circ \dots \circ g^{(1)} (x(0)^{m-n}) )_a$ with $a$ the coordinate updated by $g^{(j_i)}$. 
Then, we prove that $c$ is a proper coloring of the confusion graph $G_{h,w}$ and that $\Omega(h) + k \leq \mathcal{L}^*(h)$.
\begin{lemma} ~\label{lem_procedural_complexity_2}
	Let $h \in  F(n,q)$ and $k := \kappa^{\min}(h)$.
	We have $  \Omega(h) + k \leq \mathcal{L}^*(h) $.
\end{lemma}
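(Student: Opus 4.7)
My plan is to take an optimal sequence $g^{(1)}, \ldots, g^{(t)} \in F^*(m,q)$ realizing $t = \mathcal{L}^*(h)$, and to extract from it both a compatible order $w \in \Pi([n])$ and a proper coloring of the confusion graph $G_{h,w}$ using at most $q^{\mathcal{L}^*(h) - \Omega(h)}$ colors; applying Theorem~\ref{th1} then yields $k = \kappa^{\min}(h) \leq \kappa(h,w) \leq \mathcal{L}^*(h) - \Omega(h)$. Concretely, for each $j \in [t]$ let $a_j \in [m]$ be the coordinate updated by $g^{(j)}$, and for each non-trivial $i \in [n]$ set $s_i := \max\{j : a_j = i\}$; this is well-defined because otherwise the final value at $i$ would always be $x_i$, contradicting $h_i$ being non-trivial. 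I would construct $w$ by ordering the non-trivial coordinates of $[n]$ by increasing $s_i$ and appending the trivial ones in any order, and set $J := [t] \setminus \{s_i : h_i \text{ non-trivial}\}$, so $|J| = t - \Omega(h) = \mathcal{L}^*(h) - \Omega(h)$. Enumerate $J = \{j_1 < \cdots < j_\ell\}$.

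Next, I would define the candidate coloring $c : [0,q[^n \to [0,q[^\ell$ to record the value produced at each intermediate step: $c_r(x) := \bigl(g^{(j_r)} \circ \cdots \circ g^{(1)}(x\,0^{m-n})\bigr)_{a_{j_r}}$. Writing $v^{(j)}(x) := \bigl(g^{(j)} \circ \cdots \circ g^{(1)}(x\,0^{m-n})\bigr)_{a_j}$ for the value written at step $j$, two facts hold immediately: $v^{(j)}(x) = v^{(j)}(x')$ for every $j \in J$ whenever $c(x) = c(x')$, and $v^{(s_i)}(x) = h_i(x)$ for every non-trivial $i \in [n]$, because the $g^{(j)}$'s compute $h$.

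The main step is to show $c$ is a proper coloring of $G_{h,w}$. Suppose for contradiction $(x,x')$ is an edge of $G_{h,w}$ with $c(x) = c(x')$, and fix $i$ such that $h^{\{w_1,\dots,w_i\}}(x) = h^{\{w_1,\dots,w_i\}}(x')$; equivalently, $x_{w_k} = x'_{w_k}$ for $k > i$ and $h_{w_k}(x) = h_{w_k}(x')$ for $k \leq i$. I would induct on non-trivial $w_k$ with $k > i$ in the order dictated by $w$ to prove $v^{(s_{w_k})}(x) = v^{(s_{w_k})}(x')$. At step $s_{w_k}$, each coordinate $b \in [m]$ is either already-written (by some $j^* < s_{w_k}$) or not-yet-written. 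In the already-written case, $j^*$ is either in $J$ (so values agree by $c$), or equals $s_{w_{k''}}$ for some non-trivial $w_{k''}$ with $k'' < k$ (values agree by the edge condition if $k'' \leq i$, else by the inductive hypothesis). In the not-yet-written case, $b$ is either in $]n,m]$ (both executions still store $0$), or in $[n]$ with $s_b \geq s_{w_k}$ forcing $b = w_{k'}$ with $k' \geq k > i$ (so $x_b = x'_b$ by the edge), or a never-updated trivial coordinate (where $x_b = x'_b$ still follows from the edge, since triviality collapses both clauses of the edge condition to the same statement). Thus $g^{(s_{w_k})}$ reads equal inputs in both executions, giving $h_{w_k}(x) = v^{(s_{w_k})}(x) = v^{(s_{w_k})}(x') = h_{w_k}(x')$. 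For trivial $w_k$ the equality $h_{w_k}(x) = h_{w_k}(x')$ is immediate. Together with the edge condition for $k \leq i$, this gives $h(x) = h(x')$, contradicting the edge.

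Finally, Theorem~\ref{th1} yields $\kappa(h,w) = \lceil \log_q \chi(G_{h,w}) \rceil \leq \ell = \mathcal{L}^*(h) - \Omega(h)$, so $k \leq \kappa(h,w) \leq \mathcal{L}^*(h) - \Omega(h)$, equivalently $\Omega(h) + k \leq \mathcal{L}^*(h)$. The hard part will be the properness argument in the third paragraph: carefully tracking, at each induction step, which coordinates agree across the two executions, and in particular using the choice of $w$ (ordering the final updates chronologically) to force every coordinate not yet updated at step $s_{w_k}$ into the ``initially equal'' part of the edge condition.
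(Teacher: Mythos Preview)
Your proposal is correct and follows essentially the same route as the paper: take an optimal procedure, order $[n]$ by the times of the relevant last updates, let the ``extra'' steps $J$ define a coloring $c$ of $G_{h,w}$ via the values they write, and derive a contradiction from $c(x)=c(x')$ on an edge by reconstructing agreement of the two computations coordinate by coordinate. The only cosmetic differences are that the paper lets $I$ contain the last update of \emph{every} coordinate of $[n]$ that is touched (not just the non-trivial ones), obtaining $|J|=\ell-p\le \ell-\Omega(h)$, and that the paper proves the two full configurations coincide at one specific step $b$ (the last update of $w_e$) rather than inducting over the non-trivial $w_k$ with $k>i$; neither changes the argument. One small remark: in your ``not-yet-written'' case you list ``never-updated trivial coordinate'' but omit the (possible) case of a trivial $b\in[n]$ that is updated somewhere in the procedure but only at steps $\ge s_{w_k}$; your own ``triviality collapses both clauses'' observation covers it verbatim, so this is just a matter of wording.
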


In \cite{Gadouleau2011}, Proposition~12 states that $\forall h \in F(n,q)$, we have $\mathcal{L}(h|n-1) \leq 2n-1$. In Corollary~\ref{cor_sup_worst_l} bellow, we refine this bound using Theorem~\ref{th2}, Theorem~\ref{th_procedural_complexity} and the fact that $\forall h \in F(n,q), \Omega(h) \leq n$.

\begin{corollary}\label{cor_sup_worst_l}
	For all $h \in F(n,q)$, $\mathcal{L}(h|m) \leq m$ with $ m := n+ \lceil n/2 + log_q(n/2+1) \rceil$.
\end{corollary}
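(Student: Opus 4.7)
The plan is to combine Theorem~\ref{th2} (the worst-case upper bound for $\kappa_{n,q}$) with Theorem~\ref{th_procedural_complexity} (the equality $\mathcal{L}(h|n+k) = \Omega(h)+k$ when $k := \kappa^{\min}(h)$), together with a simple padding observation showing that $\mathcal{L}(h|\cdot)$ is non-increasing in its second argument.

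I would begin by setting $k := \kappa^{\min}(h)$. Since $\kappa^{\min}(h) \leq \kappa_{n,q}$, Theorem~\ref{th2} yields $k \leq \lceil n/2 + \log_q(n/2+1) \rceil = m-n$; in particular $m \geq n+k$. Next I would record the padding lemma: for any $m' \geq n+k$, one has $\mathcal{L}(h|m') \leq \mathcal{L}(h|n+k)$. The proof is immediate: given functions $g^{(1)},\dots,g^{(t)} \in F^*(n+k,q)$ sequentially computing $h$ on the first $n$ coordinates, each $g^{(i)}$ updates at most one coordinate $a_i \in [n+k]$, and we simply extend it to $\tilde g^{(i)} \in F^*(m',q)$ acting as the identity on the extra coordinates $]n+k,m']$. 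Since $\tilde g^{(t)} \circ \dots \circ \tilde g^{(1)}$ agrees with $g^{(t)} \circ \dots \circ g^{(1)}$ on the first $n+k$ coordinates, its projection on $[n]$ still equals $h \circ \pr_{[n]}$.

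I would then assemble the pieces: Theorem~\ref{th_procedural_complexity} gives $\mathcal{L}(h|n+k) = \Omega(h) + k$, and the monotonicity step gives $\mathcal{L}(h|m) \leq \mathcal{L}(h|n+k)$. Finally, the trivial bound $\Omega(h) \leq n$ (the number of non-trivial coordinate functions cannot exceed the total number of coordinates) yields
\[
\mathcal{L}(h|m) \;\leq\; \Omega(h) + k \;\leq\; n + (m - n) \;=\; m,
\]
which is the claim.

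No serious obstacle arises: the content is entirely concentrated in the previously proved Theorem~\ref{th2} and Theorem~\ref{th_procedural_complexity}. The only item requiring a sentence of justification is the padding monotonicity of $\mathcal{L}(h|\cdot)$, which the sketch above handles in one line.
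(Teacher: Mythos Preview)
Your proposal is correct and follows essentially the same route the paper sketches: combine Theorem~\ref{th2}, Theorem~\ref{th_procedural_complexity}, and the trivial bound $\Omega(h)\le n$. The only extra ingredient you spell out is the monotonicity $\mathcal{L}(h|m')\le\mathcal{L}(h|n+k)$ for $m'\ge n+k$ via padding, which the paper leaves implicit.
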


In the following Corollary~\ref{cor_inf_worst_l},  we give a lower bound for the procedural complexity with unlimited memory.
It is a direct corollary of Theorem~\ref{th_procedural_complexity}, Lemma~$\ref{lem_a_b}$, Theorem~$\ref{th_km_2}$, Theorem~$\ref{th_km_s}$ in which we construct an $\AN$ $h$ without trivial coordinate functions (and thus we have $\Omega(h) = n$).

\begin{corollary}\label{cor_inf_worst_l}
	For all $n,q \geq 2$ there is $h \in F(n,q)$ such that $\mathcal{L}^*(h) \geq n+ \lfloor n/3 \rfloor$.
	Furthermore, if $q \geq 4$ there is  $h \in F(n,q)$ such that $\mathcal{L}^*(h) \geq n+ \lfloor n/2-log_q(n) \rfloor$.	
\end{corollary}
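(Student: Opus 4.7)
The plan is to assemble this corollary directly from Theorem~\ref{th_procedural_complexity}, Lemma~\ref{lem_a_b} and the two lower bounds of Section~\ref{section_bounds_kmin}. Theorem~\ref{th_procedural_complexity} states that for every $h \in F(n,q)$, $\mathcal{L}^*(h) = \Omega(h) + \kappa^{\min}(h)$. Consequently, to get a lower bound on $\mathcal{L}^*(h)$ of the form $n + k$, it suffices to produce a single $h$ that attains $\kappa^{\min}(h) \geq k$ \emph{and} has $\Omega(h) = n$, i.e.\ no trivial coordinate function.

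First I would unpack the conclusion of Lemma~\ref{lem_a_b}: whenever the combinatorial encoding $b : \binom{[2k]}{k} \to [0,q[^n$ exists with the disjointness property $b(E)[E] \cap b(E')[E'] = \emptyset$ for $E \neq E'$, the automata network $h$ it constructs is explicitly guaranteed to have no trivial coordinate function. This is precisely what we need, because then $\Omega(h) = n$ and the identity of Theorem~\ref{th_procedural_complexity} becomes $\mathcal{L}^*(h) = n + \kappa^{\min}(h)$.

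Next I would plug in the two lower bounds on $\kappa^{\min}_{n,q}$ established in Section~\ref{section_bounds_kmin}. For general $q \geq 2$, Theorem~\ref{th_km_2} supplies an $h$ obtained through Lemma~\ref{lem_a_b} (using the $3k$-size encoding described in the text) with $\kappa^{\min}(h) \geq \lfloor n/3 \rfloor$; thus $\mathcal{L}^*(h) \geq n + \lfloor n/3 \rfloor$. For $q \geq 4$, Theorem~\ref{th_km_s} provides, again via Lemma~\ref{lem_a_b}, an $h$ with $\kappa^{\min}(h) \geq \lfloor n/2 - \log_q(n)\rfloor$, which yields $\mathcal{L}^*(h) \geq n + \lfloor n/2 - \log_q(n)\rfloor$.

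There is no real obstacle here: the only subtlety worth emphasising in the write-up is that the $h$ constructed by Lemma~\ref{lem_a_b} is simultaneously guaranteed to be non-trivial on every coordinate, so the $\Omega(h)$ term in $\mathcal{L}^*(h) = \Omega(h) + \kappa^{\min}(h)$ contributes the full $n$ rather than something smaller. Once this observation is recorded, the two inequalities follow by pure substitution.
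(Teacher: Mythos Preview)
Your proposal is correct and follows exactly the paper's own argument: combine Theorem~\ref{th_procedural_complexity} with the fact that the $h$ built in Lemma~\ref{lem_a_b} (and hence in Theorems~\ref{th_km_2} and~\ref{th_km_s}) has no trivial coordinate function, so $\Omega(h)=n$ and $\mathcal{L}^*(h)=n+\kappa^{\min}(h)$. The only point you single out---that Lemma~\ref{lem_a_b} guarantees $\Omega(h)=n$---is precisely the observation the paper flags as the key to the corollary.
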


\section{Bound for $\kappa^{\min}(h)$ using interaction graph} \label{section_bound_path}

Let us now present a way to upper bound $\kappa^{\min}(h)$ for an $\AN$ $h$ using the pathwidth of the interaction graph of $h$ \cite{Bodlaender1991}.

\begin{definition}[Pathwidth]
	A path decomposition of an undirected graph $G = (V,E)$ is a sequence of subsets $X_1, \dots, X_p$ of vertices such that
	\begin{itemize}
		\item $\forall (v,v') \in E,\ \exists X_i$ such that $v,v' \in X_i$.
		\item If $v \in X_i$ and $v \in X_j$ with $i < j$ then $\forall k \in [i,j], v \in X_k$ 
	\end{itemize}
	The size of a path decomposition is the size of the largest $X_\ell$ minus one.
	The \emph{pathwidth} $\Pw(G)$ is the minimum size of a path decomposition of G.
\end{definition}

Theorem~\ref{th_path} shows that the pathwidth of the graph $IG^*(h)$ is an upper bound for $\kappa^{\min}(h)$. It can be deduced directly from Lemma~\ref{lem_path_a} and Lemma~\ref{lem_path_b}.

\begin{theorem} \label{th_path}
	For any $\AN$ $h$, $\kappa^{\min}(h) \leq \Pw(\IG^*(h))$.
\end{theorem}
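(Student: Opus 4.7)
The plan is to build, given a path decomposition $(X_1,\ldots,X_\ell)$ of $\IG^*(h)$ of width $p := \Pw(\IG^*(h))$, an $\AN$ $f \in F(n+p, q)$ and a schedule $w \in \Pi([n+p])$ such that $\pr_{[n]} \circ f^w = h \circ \pr_{[n]}$. I would first take the decomposition to be \emph{nice} (each transition $X_t \to X_{t+1}$ either introduces or forgets exactly one vertex), which does not increase the width. The overall argument naturally splits into Lemma~\ref{lem_path_a} (the explicit construction of $(f,w)$ from the decomposition) and Lemma~\ref{lem_path_b} (its correctness); the theorem follows by combining them.

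For the construction itself, the $p$ additional coordinates serve as scratch slots used to carry information across bag transitions. The schedule $w$ is arranged so that each $v \in [n]$ is updated exactly once, at a time dictated by the bag structure (typically the moment $v$ is about to be forgotten), and each scratch slot is also updated exactly once, at the moment when it can record the value it must carry. The functions $f_i$ are chosen so that at each update time, all the values $f_i$ needs (namely $x_u$ for $u \in N^-(i) \subseteq N_{\IG^*(h)}(i)$) are recoverable from the current state: an unforgotten $u$ still has $x_u$ in position $u$, and a previously-forgotten $u$ whose $x$-value is still required has been stashed in scratch at the right moment. The bag property $u \sim v \implies u, v \in X_a$ for some $a$ ensures these in-neighbors are always of one of these two kinds.

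The hard part will be to show that $p$ scratch slots truly suffice. A naive "copy $x_v$ into scratch right before overwriting position $v$" strategy can force more than $p$ simultaneously live saves: already a star $K_{1,d}$ with pathwidth $1$ requires up to $d$ such saves if the leaves are processed first. The flexibility one needs is that a scratch slot's unique update may record not only a plain $x_u$ but also a partial computation (for instance an already-evaluable $h_u$ value, which is later copied back into position $u$), exactly as the earlier path-of-$4$ style examples illustrate: a value like $x_1 \oplus x_3 = h_2$ can be prepared early, while the originals are still around, and reused later. Coupled with a processing order driven by the bag sequence rather than by each vertex's last bag, this lets the set of simultaneously live scratch contents correspond injectively to at most $p$ vertices of the current bag, matching the pathwidth bound $|X_t| \leq p+1$. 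Correctness then follows by induction on $t$ maintaining the invariant that, after executing all updates attached to bags $X_1,\ldots,X_t$, every already-forgotten vertex's position holds $h_v(x)$, every still-present bag vertex has $x_v$ accessible (in its original position or in a scratch slot), and no more than $p$ scratch slots are occupied; taking $t = \ell$ gives the desired sequentialization and the announced bound $\kappa^{\min}(h) \leq p$.
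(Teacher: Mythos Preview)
Your high-level plan matches the paper's architecture (extract an order and $p$ auxiliary coordinates from a minimum-width path decomposition, then verify the resulting $(f,w)$ sequentializes $h$), and you correctly isolate the obstruction to the naive save-before-overwrite scheme via the star $K_{1,d}$.

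The gap is in the fix you propose and in the invariant you state. Storing a single pre-evaluated $h_u$ (or a single $x_u$) per scratch slot does not suffice already for a path $P_5$ with generic local functions $h_i(x_{i-1},x_{i+1})$ and $\Pw=1$. Your invariant forces position $2$ to hold $h_2(x_1,x_3)$ once vertex $2$ is forgotten; but computing $h_3(x_2,x_4)$ afterwards still requires $x_2$, and the unique scratch slot has already been consumed (whether it held $x_1$, $h_2$, or any other single datum). Running through the possible orders one checks that no placement of a single scratch write carrying an individual $x_v$ or $h_v$ rescues the scheme. More generally, the invariant ``every still-present bag vertex has $x_v$ accessible in its own position or in a scratch slot'' is strictly stronger than what is needed and cannot be maintained with only $p$ slots.

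The paper's mechanism is genuinely different. From the decomposition it extracts not just an order $u$ but a colouring $c:[n]\to[s]$ (with $s=\Pw(\IG^*(h))$) satisfying: for every $i$, either all $\IG^*$-neighbours of $i$ come weakly after $i$ in $u$, or for every later $j$ in the same colour class every neighbour of $j$ comes weakly after $i$. All $s$ auxiliary automata are then updated \emph{first}, automaton $\ell$ storing the full sum $\sum_{c(j)=\ell}h_j(x)$ modulo $q$. When $i$ is updated and its inputs are no longer intact, one recovers $h_i(x)$ by subtraction,
\[
h_i(x)=\Bigl(\textstyle\sum_{c(j)=c(i)}h_j(x)\Bigr)-\sum_{\substack{c(j)=c(i)\\u(j)<u(i)}}x'_j\;-\!\!\sum_{\substack{c(j)=c(i)\\u(j)>u(i)}}\!\!h_j(x'),
\]
where $x'$ is the current state: the stored sum is read from the auxiliary slot, the already-updated terms are read from their positions (which now hold $h_j(x)$), and the not-yet-updated terms are recomputable because the colouring property guarantees their neighbours are still untouched. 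The crucial point is that a single auxiliary coordinate serves an entire colour class via this algebraic identity; one never needs to recover an individual $x_u$, only the target value $h_i(x)$. Replacing your accessibility invariant by ``$h_i(x)$ is computable from the current state'' and supplying this sum-and-subtract mechanism is the missing idea.
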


\flo{
Lemma~\ref{lem_path_a} shows that from a path decomposition of a graph $G$ of size $s$, we can construct a partition $c$ of its vertices in $s$ sets, and an update schedule $u$ with properties allowing an efficient sequentialization by Lemma~\ref{lem_path_b}.
We define $c$ (resp. $u$) using a greedy algorithm. We iterate the subsets $X_1, \dots X_n$ of the path decomposition and choose the value $c(i)$ (resp. $u(i)$) the first (resp. last) time we see $i$.}

\begin{lemma}\label{lem_path_a}
	Let $G = ([n],E)$ be an undirected graph and let $s = \Pw(G)$.
	Then there are functions $ c: [n] \to [s]$ and $u \in \Pi([n])$ with the following property.
	\flo{For all $i\in [n],$  we have either $1$) for all $k$ neighbor of $i$ in $G$ we have $u(i) \leq u(k)$ or $2$) for all $j,k \in [n]$ with $c(i) = c(j)$, $u(i) < u(j)$ and $k$ neighbor of $j$ in $G$ we have $u(i) \leq u(k)$.}
\end{lemma}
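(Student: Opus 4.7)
The plan is to exhibit $c$ and $u$ by sweeping a nice refinement of the given path decomposition and assigning values greedily. I first refine the given decomposition $X_1,\dots,X_p$ into a sequence $Y_0=\emptyset, Y_1,\dots,Y_N=\emptyset$ in which every transition is a single introduction or forget of one vertex and $|Y_\ell|\leq s+1$ throughout, following the ``removes first, adds second'' convention inside each original transition. I then scan the sequence left to right, maintaining a counter $t$ starting at $0$. On a forget of $i$ I set $u(i):=t+1$ and increment $t$. On an introduction of $i$ I let $B$ denote the bag just before the insertion (so $|B|\leq s$); if some color in $[s]$ is unused by the members of $B$, I assign any such color to $c(i)$; otherwise all $s$ colors already appear in $B$ and I set $c(i):=c(v^\star)$, where $v^\star\in B$ is the member whose forget comes earliest in the fixed nice form.

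The heart of the argument is the invariant that whenever the bag has exactly $s$ members, its $s$ colors are pairwise distinct. A duplicate can arise only immediately after an introduction triggering the ``all colors used'' branch, taking the bag to the cap $s+1$; the next operation is necessarily a forget, and by the strict minimality of $u(v^\star)$ in $B$ this forget must discard either the freshly introduced vertex or $v^\star$ itself, instantly restoring the invariant. From this invariant, any pair $(i,j)$ with $c(i)=c(j)$ and $u(i)<u(j)$ falls into one of three structural cases: (a) $i$ was forgotten before $j$ was introduced; (b) $i\in B$ at $j$'s introduction and $i$ was the $v^\star$ chosen there; (c) $j$ was introduced before $i$ and $j$ was the $v^\star$ chosen at $i$'s introduction.

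To verify the $(1)$-or-$(2)$ condition for a given $i$, I argue by cases on its same-color successors. If every successor falls in case (a) or (b), clause $(2)$ holds: in (a), any neighbor $k$ of $j$ shares a bag with $j$ and hence cannot be forgotten before $j$ is introduced, which is strictly after $i$ was forgotten, yielding $u(k)>u(i)$; in (b), neighbors of $j$ that lie in $B$ inherit the bound from the minimality of $v^\star=i$, while for a neighbor $k$ introduced after $j$ the capacity argument forces the first forget after $j$'s introduction to be $i$ or $j$, pushing $k$'s introduction past $u(i)$ or precluding $k$ from being a neighbor of $j$ at all. If instead some successor falls in case (c), then at $i$'s introduction $|B|=s$ and $v^\star=j$, the cap is reached, and the capacity argument again forces $i$ itself to be forgotten in the very next step; hence $i$ appears in only one bag, all its neighbors lie in $B$, and each such $v$ satisfies $u(v)\geq u(v^\star)=u(j)>u(i)$, so clause $(1)$ holds.

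The main obstacle of the proof is precisely this capacity-interlocking observation used in cases (b) and (c): a full-bag introduction in a pathwidth-$s$ decomposition forces the immediate next forget to target $v^\star$ or the newly introduced vertex. It follows from the strict minimality of $u(v^\star)$ in $B$ together with the one-operation-per-step discipline of the nice form. Once it is in place, the three-case analysis above exhausts all possibilities and the remaining verifications reduce to straightforward bookkeeping.
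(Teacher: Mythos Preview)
Your proof is correct and takes essentially the same approach as the paper: both define $u$ by the order of last appearance in the decomposition and $c$ greedily by order of first appearance, duplicating the color of the first-to-leave vertex when the bag is full. Your refinement to a nice (one-vertex-per-step) decomposition simply absorbs the tie-breaking that the paper handles through its separate third case for $c$, and your three-case split on the pair $(i,j)$ mirrors the paper's split on how $c(i)$ was assigned.
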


\flo{Lemma~\ref{lem_path_b} shows how to use $c$ and $u$ defined in Lemma~\ref{lem_path_b} to sequentialize $h$ respecting $u$.}
Each additional automaton $j$ (denoted from $1$ to $s$) computes the sum modulo $q$ of the images $\{\ h_i(x)\ |\ i \in [n] $ and $c(i) = j\ \}$. Then, each automaton of coordinate $j$ \flo{ can compute $h_j(x)$}, either because all neighbors of $j$ in $G$ have not be updated yet, or because it can compute all $h_j(x)$ such that $i \neq j$ and $ c(i) = c(j)$.

\begin{lemma} \label{lem_path_b}
	Let $h \in F(n,q)$.
	Let $G = \IG^*(h)$.
	If we have $c: [n] \to [s]$ and $u \in \Pi([n])$ such that $G,c,u$ have the same properties as in Lemma~\ref{lem_path_a}, then we have $\kappa(h,u) \leq s$.
\end{lemma}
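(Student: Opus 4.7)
The plan is to build an explicit $\AN$ $f \in F(n+s,q)$ and a schedule $w \in \Pi([n+s])$ respecting $u$ that satisfy $\pr_{[n]} \circ f^w = h \circ \pr_{[n]}$. I would add one extra automaton per color class, indexed $n+1, \dots, n+s$, and write $C_j := c^{-1}(j) \subseteq [n]$. The schedule $w$ will first update all $s$ extras in some fixed order and then update the originals of $[n]$ in the order given by $u$; such a $w$ clearly respects $u$. I would identify $[0,q[$ with $\mathbb{Z}/q\mathbb{Z}$ and set $f_{n+j}(y) := \sum_{k \in C_j} h_k(y_{[n]})$, so that after the initial phase each extra automaton $n+j$ stores $S_j(x) := \sum_{k \in C_j} h_k(x)$, where $x$ denotes the initial configuration restricted to $[n]$.

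Next, I would analyze the configuration $y$ seen by $f_i$ just before $i \in [n]$ is updated: $y_k = h_k(x)$ for $k \in [n]$ with $u(k) < u(i)$, $y_k = x_k$ for $k \in [n]$ with $u(k) \geq u(i)$, and $y_{n+j} = S_j(x)$ for every $j \in [s]$. The definition of $f_i$ then splits into two cases, according to which alternative in Lemma~\ref{lem_path_a} holds for $i$.

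In the first alternative, every neighbor $k$ of $i$ in $\IG^*(h)$ satisfies $u(k) \geq u(i)$, so $y_k = x_k$ for each in-neighbor of $i$ in $\IG(h)$ (and for $i$ itself); hence $f_i(y) := h_i(y_{[n]})$ already equals $h_i(x)$. In the second alternative, every $k \in C_{c(i)}$ with $u(k) > u(i)$ has the property that each of its $\IG^*(h)$-neighbors $\ell$ satisfies $u(\ell) \geq u(i)$, so $h_k(y_{[n]}) = h_k(x)$ remains recoverable from $y$; and every $k \in C_{c(i)}$ with $u(k) < u(i)$ already has $y_k = h_k(x)$. Letting $\widetilde h_k(y)$ denote $y_k$ or $h_k(y_{[n]})$ according to the subcase, I would set
\[ f_i(y) := y_{n+c(i)} - \sum_{k \in C_{c(i)} \setminus \{i\}} \widetilde h_k(y), \]
which by definition of $y_{n+c(i)}$ equals $h_i(x)$.

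The main obstacle is the second alternative: establishing that \emph{every} other element of the color class $C_{c(i)}$ with a larger $u$-value can still have its image recovered when $i$ is updated. This is exactly why the property of Lemma~\ref{lem_path_a} quantifies simultaneously over all such $j$ and over all their neighbors in $\IG^*(h)$. Once both cases are verified, each original automaton outputs $h_i(x)$ at its update step, so $\pr_{[n]} \circ f^w = h \circ \pr_{[n]}$ and $(f,w)$ sequentializes $h$; since $f$ uses only $s$ additional automata and $w$ respects $u$, we conclude $\kappa(h,u) \leq s$.
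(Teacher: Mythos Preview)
Your proposal is correct and follows essentially the same construction as the paper: add one auxiliary automaton per color class storing $\sum_{k\in C_j} h_k(x)$, update those first, then update $[n]$ in the order $u$; for each $i$, either compute $h_i$ directly from the still-intact inputs (alternative~1), or recover $h_i(x)$ as the stored sum minus the images of the other members of $C_{c(i)}$, all of which are available either as already-computed values or as recomputable values thanks to alternative~2. The only cosmetic difference is that the paper branches on the condition ``every \emph{in}-neighbor of $i$ satisfies $u(i)\le u(k)$'' rather than directly on which alternative of Lemma~\ref{lem_path_a} holds; since alternative~1 implies that condition, the two case splits are equivalent for the argument.
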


\section{Conclusion and future research}\label{section_conclusion}

\flo{
We have seen that $\lfloor n/2 - \log_q(n)\rfloor  \leq \kappa^{\min}_{n,q} \leq \kappa_{n,q} \leq \lceil n/2 + log_q(n/2+1) \rceil$.
Thus, for any fixed $n$, the limit of $\kappa^{min}_{n,q}$ and $\kappa_{n,q}$ when $q$ tends to infinity is $n/2$.
It is an argument in favor of the conjecture made in \cite{Bridoux0PST17} which states that for any $n$ and $q$, $\kappa_{n,q} = \lfloor n/2 \rfloor$ and which is still open.
It would be interesting to investigate a variant of the problem presented in this paper, where additional automata are forbidden but several updates of the same automaton are allowed. 
The task is then to know, for given $n$ and $q$, the minimum time $t(q,n)$ such that  $\forall h \in F(n,q),\ \exists f \in F(n,q),\ w \in [n]^{t'}$ with $t' \leq t(q,n)$ such that $f^w = h$.
The value of $t(2,2)$ is not defined because for the $\AN$ $h \in F(2,2)$ such that $(0,0) \xrightarrow{h} (0,1) \xrightarrow{h} (1,1) \xrightarrow{h} (1,0) \xrightarrow{h} (0,0)$ there are no such $f$. However, with computers, we established that $t(3,2)=22$.
We can easily see that $ \mathcal{L}_{n,q} := \max(\{\mathcal{L}(h)\ |\ h \in F(n,q)\})$ is a lower bound for $t(n,q)$, and in \cite{Gadouleau2011}, it is stated that $2n-1 \leq \mathcal{L}_{n,q} \leq 4n-3$.}

	\nocite{*}
	\bibliography{sources}
	
\appendix
\setcounter{lemma}{0}
\setcounter{theorem}{0}
\newpage

\section{Proof of Theorem~\ref{th1}}

We can deduce Theorem~\ref{th1} directly from Lemma~\ref{lem1} and Lemma~\ref{lem2}.
\begin{theorem} [Theorem~\ref{th1}]
	Let us consider $h \in F(n,q)$ and the sequential update schedule $u \in \Pi([n])$. 
	Then we have $\kappa(h,u) = \lceil \log_q(\chi(G_{h,u})) \rceil$.
\end{theorem}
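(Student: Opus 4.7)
The plan is to prove both inequalities $\kappa(h,u) \leq \lceil \log_q(\chi(G_{h,u})) \rceil$ and $\kappa(h,u) \geq \lceil \log_q(\chi(G_{h,u})) \rceil$ separately, through an explicit correspondence between proper colorings of $G_{h,u}$ using $q^k$ colors and sequentializations of $h$ respecting $u$ with $k$ extra automata.

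For the upper bound, set $k := \lceil \log_q(\chi(G_{h,u})) \rceil$ and fix a proper coloring $\gamma : [0,q[^n \to [0,q[^k$ of $G_{h,u}$. I build $f \in F(n+k,q)$ together with $w := (n+1, \ldots, n+k, u_1, \ldots, u_n) \in \Pi([n+k])$, which respects $u$. The first $k$ steps of $w$ write $\gamma(x)$ into the extras, so the state after step $k$ is $(x, \gamma(x))$. For each $i \in [n]$, define $f_{u_i}(y, z) := h_{u_i}(x)$ whenever $(y, z) = (h^{\{u_1,\ldots,u_{i-1}\}}(x), \gamma(x))$ for some $x$, and arbitrarily otherwise. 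Well-definedness follows from the proper coloring property: if $x, x'$ give the same pair $(y, z)$, then $\gamma(x) = \gamma(x')$ rules them out as neighbors in $G_{h,u}$, and combined with their matching partial updates at step $i-1$ the definition of $G_{h,u}$ forces $h(x) = h(x')$, so $h_{u_i}(x) = h_{u_i}(x')$. A direct induction then gives $\pr_{[n]}(f^w(x, 0^k)) = h(x)$, so $(f, w)$ sequentializes $h$.

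For the lower bound, fix any $(f, w)$ sequentializing $h$ respecting $u$ with $k$ extras, and define $c(x) := \pr_{[n+1, n+k]}(f^w(x, 0^k))$; showing that $c$ is a proper coloring of $G_{h,u}$ yields $\chi(G_{h,u}) \leq q^k$ and hence $\lceil \log_q(\chi(G_{h,u})) \rceil \leq k$. Writing $t_i$ for the position of $u_i$ in $w$, the preparatory step is the invariant that, after the first $t_i$ updates of $w$ are applied to $(x, 0^k)$, the first $n$ coordinates equal $h^{\{u_1, \ldots, u_i\}}(x)$. This follows by induction on $i$: since $w$ is a permutation, $u_{i+1}$ is updated exactly once, at step $t_{i+1}$, so the sequentialization forces its value then to be $h_{u_{i+1}}(x)$, while the other coordinates in $[n]$ remain unchanged between $t_i$ and $t_{i+1}$.

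Now suppose for contradiction that $c(x) = c(x')$ for some neighbors $x, x'$ in $G_{h,u}$ with witness $i$, i.e., $h^{\{u_1,\ldots,u_i\}}(x) = h^{\{u_1,\ldots,u_i\}}(x')$ and $h(x) \neq h(x')$. By the invariant, the first $n$ coordinates of the state at time $t_i$ agree for $x$ and $x'$. For each extra coordinate $j \in [n+1, n+k]$, let $\tau_j$ be its unique update position in $w$; since $\tau_j \neq t_i$, either $\tau_j < t_i$, in which case the value of coordinate $j$ is frozen from step $\tau_j$ onward and the equality $c(x) = c(x')$ at the final time $T := n+k$ transfers this coordinate's value back to time $t_i$, or $\tau_j > t_i$, in which case both $x$ and $x'$ still hold the initial value $0$ on coordinate $j$ at time $t_i$. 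Hence the full states at time $t_i$ coincide, so the remaining updates $w_{t_i+1}, \ldots, w_{n+k}$ produce identical final configurations for $x$ and $x'$, contradicting $h(x) \neq h(x')$. The main difficulty is precisely this last bookkeeping: converting the hypothesised equality of extras at time $T$ into equality at the earlier time $t_i$ via the case analysis on whether $\tau_j$ falls before or after $t_i$.
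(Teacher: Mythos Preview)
Your proposal is correct and follows essentially the same approach as the paper, which splits the theorem into two lemmas establishing the two inequalities via the same correspondence: a proper $q^k$-coloring of $G_{h,u}$ encodes/decodes the values written into the $k$ extra automata. The only cosmetic differences are that for the lower bound the paper takes the \emph{largest} witness $i$ and derives the contradiction at the single next update of $u_{i+1}$, whereas you take any witness and argue that equal states at time $t_i$ force equal final states; and for the upper bound you should state the conclusion for an arbitrary initial extra block rather than $0^k$, though your construction (overwriting the extras with $\gamma(x)$ first) already makes this immediate.
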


Lemma~\ref{lem1} shows that we can use any $(f,w)$ which sequentializes $h$ respecting $u$ to construct a proper coloring of $G_{h,u}$.
Indeed, we can color the vertices of the graph $G_{h,u}$ using the values of the additional automata of $f$ after their update.
Thus, this coloring does not use more than $q^k$ colors with $k$ the number of additional automata of $f$. 

\begin{lemma}\label{lem1}
	Let us consider $h \in F(n,q)$ and the sequential update schedule $u \in \Pi([n])$. 
	Then we have $\lceil \log_q(\chi(G_{h,u})) \rceil \leq \kappa(h,u)$.
\end{lemma}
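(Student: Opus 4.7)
The plan is to exhibit an explicit proper coloring of $G_{h,u}$ with at most $q^k$ colors, where $k := \kappa(h,u)$. Fix a pair $(f, w)$ with $f \in F(n+k,q)$ and $w \in \Pi([n+k])$ that sequentializes $h$ respecting $u$, and define $c : [0,q[^n \to [0,q[^k$ by $c(x) := \pr_{[n+1,n+k]}(f^w(x, 0^k))$, i.e.\ the final values of the $k$ additional automata when the simulation is run on input $x$ with the additional automata initialized to $0$. Since the codomain has $q^k$ elements, proving that $c$ is a proper coloring will yield $\chi(G_{h,u}) \leq q^k$ and hence $\lceil \log_q(\chi(G_{h,u})) \rceil \leq k$.

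The heart of the argument is a structural observation about intermediate states. Let $U_i := \{u_1,\dots,u_i\}$ and let $t_i$ be the step of $w$ at which $u_i$ is updated (so $w_{t_i} = u_i$). Because $w$ is a permutation of $[n+k]$, every coordinate is updated exactly once; and because $w$ respects $u$, each $u_{i'}$ with $i' \leq i$ is updated at some step $t_{i'} \leq t_i$. Since $u_{i'}$ is not touched afterwards, its value at step $t_i$ coincides with its final value, which by the sequentialization property is $h_{u_{i'}}(x)$. The coordinates in $[n]\setminus U_i$ have not yet been updated and still carry $x_{[n]\setminus U_i}$. Consequently, the projection of the state at step $t_i$ onto $[n]$ is exactly $h^{U_i}(x)$.

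To finish, take adjacent $x, x'$ in $G_{h,u}$, so that $h(x) \neq h(x')$ and $h^{U_i}(x) = h^{U_i}(x')$ for some $i$. By the previous paragraph, the $[n]$-projections of the states at step $t_i$ agree in the two executions on $(x, 0^k)$ and $(x', 0^k)$. If the additional automata also agreed at step $t_i$, then the remaining $n+k-t_i$ deterministic updates would produce identical final states, forcing $h(x) = h(x')$ and contradicting adjacency. Therefore some additional coordinate $j \in [n+1,n+k]$ differs at step $t_i$; since $j$ starts at $0$ and is updated exactly once, it must have been updated by step $t_i$, so its value at step $t_i$ equals its final value, yielding $c(x)_{j-n} \neq c(x')_{j-n}$.

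The main obstacle is the structural observation in the second paragraph: once we know that the $[n]$-projection of the state at step $t_i$ is $h^{U_i}(x)$, the rest is a short comparison of two executions and a counting of colors. This observation relies crucially on both that $w$ is a permutation (so any update before step $t_i$ is already the final update of that coordinate) and that $w$ respects $u$ (so every coordinate of $U_i$ has been updated by step $t_i$).
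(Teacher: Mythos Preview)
Your proof is correct and follows essentially the same approach as the paper: both use the coloring $c(x) = \pr_{[n+1,n+k]}(f^w(x,0^k))$ and the key observation that the $[n]$-projection of the state after step $t_i$ equals $h^{U_i}(x)$. The only cosmetic difference is that the paper selects the largest $i$ with $h^{U_i}(x)=h^{U_i}(x')$ and derives the contradiction at the very next update of a coordinate in $[n]$, whereas you take any such $i$ and push the contradiction to the final state; both arguments hinge on the same facts about $w$ being a permutation respecting $u$.
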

\begin{proof}
	Without loss of generality, let us say that $u$ is the canonical sequential update schedule $(1,2,\dots,n)$.
	Let $k := \kappa(h,u)$, $m:= n+k$, $f \in F(m,q)$ and $w \in \Pi([m])$ respecting $u$ such that $\pr_{[n]} \circ f^w = h \circ \pr_{[n]}$.
	Let $x$, $x'$ be two neighbors in the confusion graph $G_{h,u}$.
	Let $y := (\letterA)^k$ (a word of size $k$ containing only the letter $\letterA$).
	Let $z := xy$ and $z' := x'y$.
	Let us prove that $f^w(z)_{[n+1,m]} \neq f^w(z')_{[n+1,m]}$.
	For the sake of contradiction, let us say that $f^w(z)_{[n+1,m]} = f^w(z')_{[n+1,m]}$.
	Since $x$ and $x'$ are neighbors in $G_{h,u}$, we know that
	$h(x) \neq h(x')$ and $\exists\ i \in [n],$ $h^{[i]}(x) =  h^{[i]}(x')$.
	Let us consider the biggest of these $i$. 
	So we have $h^{[i+1]}(x) \neq  h^{[i+1]}(x')$ and then $h_{i+1}(x) \neq h_{i+1}(x')$.
	Let $j = w(i+1)$.
	Let us prove that $f^{w_1,\dots,w_{j-1}}(z) =  f^{w_1,\dots,w_{j-1}}(z')$.
	First, we have $f^{w_1,\dots,w_{j-1}}(z)_{[n]} = h^{[i]}(x) = h^{[i]}(x') = f^{w_1,\dots,w_{j-1}}(z')_{[n]}$.
	Furthermore, for all $a \in [n+1,m]$ which is not updated before the step $j$ in $w$ we have $f^{w_1,\dots,w_{j-1}}(z)_a = y_{a-n} = f^{w_1,\dots,w_{j-1}}(z')_a$.
	Finally, for all $a \in [n+1,m]$ updated before the step $j$ in $w$ we have $f^{w_1,\dots,w_{j-1}}(z)_a = f^{w_1,\dots,w_{j-1}}(z')_a$ because we assumed that $f^w(z)_{[n+1,m]} = f^w(z')_{[n+1,m]}$.
	As a result, $f^{w_1,\dots,w_{j-1}}(z) = f^{w_1,\dots,w_{j-1}}(z')$.
	However, $f_{w_j} \circ f^{w_1,\dots,w_{j-1}}(z) = h_{i+1}(x) \neq h_{i+1}(x') =  f_{w_j} \circ f^{w_1,\dots,w_{j-1}}(z')$. 
	This is a contradiction. Consequently, we have, $f^w(z)_{[n+1,m]} \neq f^w(z')_{[n+1,m]}$.
	More generally, if $x$ and $x'$ are neighbors in $G_{h,u}$ then $f^w(xy)_{[n+1,n+k]} \neq f^w(x'y)_{[n+1,n+k]}$. In other words, $c: x \mapsto f^w(xy)_{[n+1,n+k]}$ gives a proper coloring of the confusion graph $G_{h,u}$. As a result, the confusion graph needs at most $q^k$ colors because $f^w(xy)_{[n+1,n+k]}$ is a word of size $k$ on the alphabet $q$. 
	Thus, $\chi(G_{h,u}) \leq q^k$ and $\lceil \log_q(\chi(G_{h,u})) \rceil \leq k \leq \kappa(h,u)$.
\end{proof}

Conversely, Lemma~\ref{lem2} states that we can construct a couple $(f,w)$ which sequentializes $h$ respecting $u$ from a proper coloring of $G_{h,u}$.
If this coloring uses less than $q^k$ colors then $f$ is of size at most $n+k$ and then the cost of sequentialization is at most $k$. 

\begin{lemma}\label{lem2}
	Let us consider the $\AN$ $h \in F(n,q)$ and the sequential update schedule $u \in \Pi([n])$. 
	Then we have $\kappa(h,u) \leq \lceil \log_q(\chi(G_{h,u})) \rceil$.
\end{lemma}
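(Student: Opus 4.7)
Without loss of generality assume $u = (1,2,\dots,n)$. Set $k := \lceil \log_q(\chi(G_{h,u})) \rceil$, so that $\chi(G_{h,u}) \leq q^k$ and we may view any proper coloring of $G_{h,u}$ as a map $c : [0,q[^n \to [0,q[^k$ (colors encoded as length-$k$ words). My goal is to build $f \in F(n+k,q)$ and $w \in \Pi([n+k])$ respecting $u$ with $\pr_{[n]} \circ f^w = h \circ \pr_{[n]}$. The intuition is that the $k$ extra automata will be updated first and used as a ``memory'' storing $c(x)$; once this memory is in place, the original $n$ automata can be updated one at a time in the order of $u$, each one using the current partial state and the stored color to output the correct value of $h$.

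Concretely, I take $w := (n+1,\dots,n+k,1,2,\dots,n)$, which clearly respects $u$. I define the extra coordinates by $f_{n+j}(x_1,\dots,x_{n+k}) := c_j(x_1,\dots,x_n)$ for $j \in [k]$, so that after the first $k$ updates of $f^w$ the state on a starting configuration $(x,y) \in [0,q[^{n+k}$ becomes $(x, c(x))$, regardless of $y$. For $i \in [n]$, I then want to declare
\[
f_i\bigl(h^{[i-1]}(x),\,c(x)\bigr) := h_i(x) \qquad \text{for all } x \in [0,q[^n,
\]
and extend $f_i$ arbitrarily on inputs not of this form. A routine expansion of $f^w$ will then show that after the subsequent updates of coordinates $1,\dots,n$ the first $n$ coordinates equal $h(x)$, giving $\pr_{[n]} \circ f^w = h \circ \pr_{[n]}$.

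The main obstacle — and the only place where the proper coloring hypothesis is really used — is checking that the above definition of $f_i$ is consistent, i.e.\ that whenever $x', x'' \in [0,q[^n$ satisfy $h^{[i-1]}(x') = h^{[i-1]}(x'')$ and $c(x') = c(x'')$ one has $h_i(x') = h_i(x'')$. For $i=1$ the first equality forces $x' = x''$ since $h^{[0]}$ is the identity, so there is nothing to prove. For $i \geq 2$, I argue by contradiction: if $h_i(x') \neq h_i(x'')$ then $h(x') \neq h(x'')$, and the equality $h^{[i-1]}(x') = h^{[i-1]}(x'')$ with $i-1 \in [n]$ witnesses that $x'$ and $x''$ are adjacent in $G_{h,u}$, contradicting $c(x') = c(x'')$ for a proper coloring.

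Once this well-definedness is established, the verification that $(f,w)$ sequentializes $h$ is a direct induction on the steps of $f^w$: after updating the memory coordinates the state is $(x,c(x))$, and each subsequent update of coordinate $i \in [n]$ transforms the first $n$ coordinates from $h^{[i-1]}(x)$ to $h^{[i]}(x)$ while leaving the memory untouched (since $f_i$ does not rewrite coordinates $n+1,\dots,n+k$). Since $f$ uses exactly $k$ additional automata, this shows $\kappa(h,u) \leq k = \lceil \log_q(\chi(G_{h,u})) \rceil$, as required.
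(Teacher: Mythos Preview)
Your proposal is correct and follows essentially the same approach as the paper: update the $k$ extra coordinates first to store the color $c(x)$, then define each $f_i$ so that on input $(h^{[i-1]}(x),c(x))$ it returns $h_i(x)$, with well-definedness guaranteed by the proper coloring. The paper packages the same idea using auxiliary preimage sets $p^{(i)}(z)=\{x: h^{\{u_1,\dots,u_{i-1}\}}(x)=z_{[n]}\text{ and }c(x)=z_{[n+1,m]}\}$ and lets $f_{u_i}(z)=h_{u_i}(x)$ for some $x\in p^{(i)}(z)$, but this is just a notational variant of your direct definition.
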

\begin{proof}
	Let $k := \lceil \log_q(\chi(G_{h,u})) \rceil$, $m:= n+k$.
	Let $A := [0,q[$.
	Let $w \in \Pi([m])$ which first update the $k$ last automata and then the $n$ first automata in the same order than $u$.
	In other words, $w := (n+1,\dots,n+k,u_1,\dots,u_n)$.
	Let $c:A^n \to A^k$ be a proper coloring of $G_{h,u}$.
	For all $i \in [n]$, let us define $p^{(i)}: A^m \to P(A^n)$ with $P(A^n) := \{E\ |\ E \subseteq A^n \}$ the set of subsets of $A^n$.
	First, $p^{(1)}: z \mapsto \{ z_{[n]} \}$ and then $\forall i \in [2,n]$, $p^{(i)}: z \mapsto \{ x \in A^n\ |\ h^{ \{ u_1, \dots, u_{i-1} \} }(x) = z_{[n]}$ and $c(x) = z_{[n+1,m]} \}$.
	Let $f \in F(n+k,q)$ such that:
	\begin{itemize}
		\item $\forall i \in [n+1,m],\ f_i = c_i \circ pr_{[n]}$.
		\item $\forall i \in [n],\ f_{u_i}: z \mapsto z_{u_i}$ if $p^{(i)}(z) = \emptyset$ and $h_{u_i}(x)$ with $x \in p^{(i)}(z)$ otherwise.
	\end{itemize}
	Let us prove that $\pr_{[n]} \circ f^w = h \circ \pr_{[n]}$.
	Let $x \in A^n$ and $z \in A^m$ with $z_{[n]} = x$. 
	By, induction let us prove that, 
	$$ \forall i \in [0,n],f^{w_1,\dots,w_{k+i}}(z) = h^{ \{ u_1 , \dots , u_i \} }(x) c(x).$$
	First, for $i=0,$ we have, 
	$$f^{w_1,\dots,w_{k}}(z) = f^{n+1,\dots,n+k}(z) = x (c_1(x),c_2(x),\dots,c_k(x)) = x c(x).$$
	Second, let $i \in [n]$ and let us suppose that,
	$$f^{w_1,\dots,w_{k+(i-1)}}(z) = h^{\{ u_1 , \dots , u_{i-1} \}}(x) c(x).$$
	We have $f_{w_{k+i}} \circ f^{w_1,\dots,w_{k+(i-1)}}(z) = h_{u_i}(x')$ with $x' \in p^{(i)}(f^{w_1,\dots,w_{k+(i-1)}}(z))$.
	We have $x \in p^{(i)}(f^{w_1,\dots,w_{k+(i-1)}}(z))$ because $(f^{w_1,\dots,w_{k+(i-1)}}(z))_{[n+1,m]} =c(x)$ and $(f^{w_1,\dots,w_{k+(i-1)}}(z))_{[n]}$ = $h^{ \{ u_1 , \dots , u_{i-1} \} }(x)$.
	Let us prove that $h_{u_i}(x') = h_{u_i}(x)$. 
	For the sake of contradiction let us say that $h_{u_i}(x') \neq h_{u_i}(x)$.
	Thus, $h(x) \neq h(x')$.
	However, $x,x' \in p^{(i)}(f^{w_1,\dots,w_{k+(i-1)}}(z))$ thus $h^{\{ u_1 , \dots , u_{i-1} \}}(x) = h^{\{ u_1 , \dots , u_{i-1} \}}(x')$ and $c(x) = c(x')$.
	Consequently, $x$ and $x'$ are neighbors in the confusion graph but they have the same color. This is a contradiction.
	As a result, $h_{u_i}(x') = h_{u_i}(x)$.
	Thus, $ \forall i \in [0,n],f^{w_1,\dots,w_{k+i}}(z) = h^{\{ u_1 , \dots , u_{i} \}}(x) c(x)$.
	As a consequence, $f^w(z) = h(x)\ c(x)$ and $\pr_{[n]} \circ f^w = h \circ \pr_{[n]}$.
	And since $f$ has $k$ additional automata, we have $\kappa(h,u) \leq k = \lceil \log_q(\chi(G_{h,u})) \rceil$.
\end{proof}

\section{Proof of Lemma~\ref{lem_kp_lower_bound}}

To prove Lemma~\ref{lem_kp_lower_bound}, we can construct a couple $(h,u)$ such that $G_{h,u}$ has a clique of size $q^{n/2}$.
Since the chromatic number of a graph is at least \flo{the size of its biggest clique}, we have $\chi(G_{h,u}) \geq q^{n/2}$.
As a result, $\kappa_{h,u} = \log(\chi(G_{h,u})) \geq n/2$ and we get Lemma~\ref{lem_kp_lower_bound} from that.

\begin{lemma} 
	[Lemma~\ref{lem_kp_lower_bound}]
	For all $ q \geq 2$ and $n \in \mathbb{N},$ we have $\kappa_{n,q} \geq \lfloor n/2 \rfloor$.
\end{lemma}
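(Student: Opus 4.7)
My plan is to exhibit, for any $n$ and any $q \geq 2$, an explicit pair $(h,u)$ whose confusion graph $G_{h,u}$ contains a clique of size $q^{\lfloor n/2 \rfloor}$. Theorem~\ref{th1} will then yield $\kappa(h,u) \geq \lceil \log_q(\chi(G_{h,u})) \rceil \geq \lceil \log_q(q^{\lfloor n/2 \rfloor}) \rceil = \lfloor n/2 \rfloor$, which is exactly the claim since $\kappa_{n,q} \geq \kappa(h,u)$.

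The candidate $(h,u)$ mimics the swap used in Example~\ref{k_vs_km}. Set $k := \lfloor n/2 \rfloor$ and let $h \in F(n,q)$ be defined by $h_i(x) := x_{i+k}$ for $i \in [k]$, $h_i(x) := x_{i-k}$ for $i \in [n-k+1,n]$, and, if $n$ is odd, $h_{k+1}(x) := x_{k+1}$ (a trivial coordinate on the middle cell). Take $u$ to be the canonical order $(1,2,\dots,n)$.

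Now fix any $v \in [0,q[^{n-k}$ and define $C_v := \{x \in [0,q[^n \mid x_{[k+1,n]} = v\}$; this set has size $q^k$. I will verify that $C_v$ induces a clique in $G_{h,u}$. Pick two distinct $x, x' \in C_v$; they must differ on some coordinate in $[k]$. On the one hand $h(x)_{[n-k+1,n]} = x_{[k]} \neq x'_{[k]} = h(x')_{[n-k+1,n]}$, so $h(x) \neq h(x')$. On the other hand, after updating the first $k$ coordinates according to $u$, each position $i \in [k]$ is set to $x_{i+k} = v_i = x'_{i+k}$, while positions $[k+1,n]$ still hold $v$; hence $h^{\{u_1,\dots,u_k\}}(x) = h^{\{u_1,\dots,u_k\}}(x')$. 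Both conditions of Definition~\ref{def_confusion_graph} are met, so $x$ and $x'$ are adjacent in $G_{h,u}$. Thus $C_v$ is a clique of size $q^k$ and the chromatic number lower bound $\chi(G_{h,u}) \geq \omega(G_{h,u}) \geq q^k$ follows, concluding the proof by Theorem~\ref{th1}.

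Nothing here is really an obstacle: the only care needed is the parity handling when $n$ is odd (which the middle trivial coordinate absorbs cleanly) and the explicit verification that the "erase and restore" structure of the swap forces every pair of configurations in $C_v$ to become indistinguishable halfway through the schedule while keeping their $h$-images distinct.
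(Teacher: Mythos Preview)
Your approach is exactly the paper's: build the half-swap, take the canonical order, and exhibit a $q^{\lfloor n/2\rfloor}$-clique in $G_{h,u}$, then invoke Theorem~\ref{th1}. For even $n$ your argument is identical to the paper's and is correct.

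For odd $n$, however, your construction does not do what you claim. With $n=2k+1$ and your definition $h_i(x)=x_{i-k}$ for $i\in[n-k+1,n]=[k+2,2k+1]$, the indices $i-k$ range over $[2,k+1]$, so $h(x)_{[n-k+1,n]}=x_{[2,k+1]}$, \emph{not} $x_{[k]}$. In particular $x_1$ is never read by any coordinate of $h$. Hence if $x,x'\in C_v$ differ only at position~$1$, then $h(x)=h(x')$ and they are not adjacent in $G_{h,u}$; $C_v$ fails to be a clique. The fix is trivial: either place the idle coordinate at position $n$ (so the swap is between $[k]$ and $[k+1,2k]$, as the paper does), or use the shift $h_i(x)=x_{i-(n-k)}$ on the top block so that your stated identity $h(x)_{[n-k+1,n]}=x_{[k]}$ actually holds. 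With either correction your proof goes through verbatim.
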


\begin{proof}
	Let $k := \lfloor n/2 \rfloor$.
	Let us consider $h \in F(n,q)$ such that: 
	\begin{itemize}
		\item $\forall i \in [k],\ h_i: x \mapsto x_{i+k}$
		\item $\forall i \in [k+1,2k],\ h_i: x \mapsto x_{i-k}$
		\item If $n$ is odd let $h_n: x \mapsto x_n$.
	\end{itemize}
	We also consider the canonical sequential update schedule $u := (1,2,\dots,n)$. 
	Let us consider the set of all configurations $X$ which have only $\letterA$ in their second half. 
	In other words, $X := \{ x \in A^n\ |\ x_{[k+1,n]} = (\letterA)^{n-k} \}$ ($(\letterA)^{n-k}$ beeing a word of size $n-k$ containing only the letter $\letterA$).
	Let $x, x' \in X$ such that $x \neq x'$.
	We have $x_{[k+1,n]} = (\letterA)^{n-k} = x'_{[k+1,n]}$. Thus, $x_{[k]} \neq x'_{[k]}$ and $\exists i \in [k]$ such that $x_i \neq x'_i$ and $h_{i+k}(x) = x_i \neq x'_i = h_{i+k}(x')$. Thus, $h(x) \neq h(x')$.
	However, when we update the first half of the automata, $x$ and $x'$ both become the configuration $(\letterA)^n$. 
	Indeed, $\forall i \in [k], f_i(x) = x_{i+k} = \letterA $.
	Then, we have $h^{[k]}(x) = (\letterA)^{n} = h^{[k]}(x')$.
	As a result, $(x,x')$ are neighbors in $G_{h,u}$. As a consequence, every two distinct vertices of X are neighbors.
	Thus, $X$ is a clique. Moreover, $X$ is a clique of size $q^{k}$. Thus, $\chi( G_{h,u} ) \geq q^{k}$ and $\kappa(h,u) \geq
	\lceil log_q( \chi( G_{h,u} ) ) \rceil \geq \lceil log_q(q^{k}) \rceil = k = \lfloor n/2 \rfloor $. 
	Hence, $\forall q \geq 2 , \forall n \in \mathbb{N},\ \kappa_{n,q}  \geq \lfloor n/2 \rfloor$.
\end{proof}

\remark{
	In~\cite{Gadouleau2011}, Theorem~5 shows that if $h \in F(n,q)$ is a permutation, then for any $u \in Pi([n])$ we have $\kappa(h,u) \leq n/2$ if $n$ is even and $\lfloor n/2 \rfloor +1$ otherwise. As a result, the problem is almost solved for the permutations.
}

\section{Proof of Theorem~\ref{th2}}

\begin{theorem}
	[Theorem~\ref{th2}]
	For all $ n \in \mathbb{N},\ q \geq 2$ we have $\kappa_{n,q} \leq \lceil n/2 + log_q(n/2+1) \rceil$.
\end{theorem}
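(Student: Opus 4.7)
The plan is to apply Theorem~\ref{th1}: since $\kappa(h,u)=\lceil\log_q(\chi(G_{h,u}))\rceil$, it suffices to bound $\chi(G_{h,u})\le (n/2+1)\,q^{n/2}$ uniformly in $h$ and $u$. Without loss of generality I take $u$ to be the canonical order $(1,2,\dots,n)$, and for clarity I present the argument for $n$ even, writing $m=n/2$; the odd case is analogous with second half of size $\lceil n/2\rceil$. The strategy is to factor $G_{h,u}$ through an equivalence relation, bound the maximum degree of the quotient, and apply the greedy chromatic inequality.

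Define $x\sim x'$ iff $x_{[m+1,n]}=x'_{[m+1,n]}$ and $h(x)=h(x')$. Two equivalent configurations cannot be adjacent in $G_{h,u}$ since adjacency requires $h(x)\ne h(x')$, so the quotient graph $G'$ (vertices are $\sim$-classes, two classes adjacent iff some representatives are adjacent in $G_{h,u}$) satisfies $\chi(G_{h,u})\le\chi(G')$: a proper coloring of $G'$ lifts to $G_{h,u}$ by assigning to $x$ the color of $[x]$.

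The heart of the proof is to show that every class $[x]$ has at most $(m+1)q^m-1$ neighbors in $G'$. Split the neighbors $[y]$ by whether $y_{[m+1,n]}=x_{[m+1,n]}$. First family (same second half): the classes with second half equal to $x_{[m+1,n]}$ are determined by their image $h(\cdot)$, and $h$ restricted to the $q^m$ configurations with that prescribed second half takes at most $q^m$ distinct values; subtracting $[x]$ itself gives at most $q^m-1$ neighbors here. Second family (different second half): if $[y]$ is adjacent to $[x]$ via synchronization at index $i$, one has $x'_{[i+1,n]}=y'_{[i+1,n]}$ for suitable representatives, so differing second halves forces $i\ge m+1$. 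For such $i$, the equality $h^{[i]}(x')=h^{[i]}(y')$ pins down $h_{[1,i]}(y)=h_{[1,i]}(x)$ and $y_{[i+1,n]}=x_{[i+1,n]}$, leaving the class $[y]$ determined by the pair $(y_{[m+1,i]},\,h_{[i+1,n]}(y))\in A^{i-m}\times A^{n-i}$, hence at most $q^{(i-m)+(n-i)}=q^m$ classes per $i$. Summing over the $m$ indices $i\in\{m+1,\dots,n\}$ yields at most $m\,q^m$ neighbors in the second family, and the two bounds combine to $(m+1)q^m-1$.

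From $\Delta(G')\le(m+1)q^m-1$ the greedy bound gives $\chi(G')\le(m+1)q^m=(n/2+1)q^{n/2}$, and Theorem~\ref{th1} concludes $\kappa(h,u)\le\lceil\log_q((n/2+1)q^{n/2})\rceil=\lceil n/2+\log_q(n/2+1)\rceil$. The main obstacle is the neighbor count in the second family: the naive bound that sums $q^{n-m}$ over all $i\in[n]$ is way too loose and does not even depend on the split; the savings come precisely from observing that (i) all neighbors with the same second half are collectively bounded by $q^m$ rather than by $q^m$ per index, and (ii) differing second halves mechanically restrict the synchronization index to $i\ge m+1$, cutting the range in half.
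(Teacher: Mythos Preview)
Your proof is correct and follows essentially the same approach as the paper: factor $G_{h,u}$ through the equivalence relation ``same second half and same image'', bound the maximum degree of the quotient graph $G'$ by $(n/2+1)q^{n/2}-1$, and apply the greedy bound $\chi(G')\le\Delta(G')+1$ together with Theorem~\ref{th1}. The only cosmetic difference is in how the neighbor count is organized: the paper splits the neighbors of a class according to whether the synchronization index $i$ lies in $[n/2]$ or equals some fixed $i\in[n/2+1,n]$, whereas you first separate neighbors by whether their second half agrees with that of $[x]$ and then observe that disagreement forces $i\ge m+1$; the resulting estimates coincide term by term.
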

\begin{proof}
	Let $h \in F(n,q)$ and $A := [0,q[$.
	Without loss of generality, let us say that $u$ is the canonical sequential update schedule $(1,2,\dots,n)$.
	Let $E$ be the set of edges of the confusion graph $G_{h,u}$.
	Let $X = \{ X_1, \dots, X_p \}$ be a partition of $A^n$, such that $x,x'$ are in the same set $X_i$ if and only if the two following conditions are respected:
	\begin{itemize}
		\item They are equal on the second half of the coordinates which will be updated in $u$.
		In other words, $x_{ \{u(n/2+1),\dots, u(n) \} } = x'_{ \{u(n/2+1),\dots, u(n) \} } $ or, more simply, $x_{ ] n/2 , n ]} = x'_{ ] n/2 , n   ]}$ because we said that $u = (1,2,\dots,n)$.
		\item They have the same image by $h$. In other words, $h(x) = h(x')$.
	\end{itemize}
	
	For all $ x \in A^n,$ let us denote by $X(x)$ the set $X_i \in X$ which contains $x$.
	Let $x^{(1)} \in X_1,\ x^{(2)} \in X_2,\ \dots,\ x^{(p)} \in X_p$.
	Let us consider the undirected graph $G' = (X,E')$ where two sets $X_i$ and $X_{i'}$ are neighbors in $G'$ if and only if there are two configurations $x \in X_i$ and $x' \in X_{i'}$ neighbors in the confusion graph $G_{h,u}$.
	Without loss of generality, let us consider the neighbors $N$ of $X_1$ in $G'$.
	If $X_j \in N$ then $\exists x \in X_1, x' \in X_j$ such that $ \exists i \in [n], h^{[i]}(x) = h^{[i]}(x')$ and $h(x) \neq h(x')$.
	Let us split $N$ in $n/2+1$ sets:
	\begin{itemize}
		\item Let us denote by $N_{[n/2]}$ the set of sets $X_j$ such that $ \exists i \in [n/2], x \in X_1, x' \in X_j$ such that $h^{[i]}(x) = h^{[i]}(x')$ and $h(x) \neq h(x')$.
		Since $h^{[i]}(x') = h^{[i]}(x),$ we have $x'_{]i,n]}= x_{]i,n]}$ and $x'_{]n/2,n]} = x_{]n/2,n]} = x^{(1)}_{]n/2,n]}$ because $i \leq n/2$.
		In other words, $\forall X_j \in N_{[n/2]},$ we have $x' \in X_j$ such that $x'_{]n/2,n]} = x^{(1)}_{]n/2,n]} $.
		However, there is only $q^{n/2}$ such configurations $x'$.
		Thus, $|N_{[n/2]}| \leq q^{n/2}$.
		\item For all $i \in [n/2+1,n]$, let us denote by $N_i$, the set of sets $X_j$ such that, $\exists x \in X_1, x' \in X_j$ such that $h^{[i]}(x) = h^{[i]}(x')$ and $h(x) \neq h(x')$.
		Let $X_j \in N_i$ and let $x \in X_1, x' \in X_j$ such that $h^{[i]}(x) = h^{[i]}(x')$.
		Thus, we have $x^{(j)}_{]i,n]} = x'_{]i,n]} = x_{]i,n]} = x^{(1)}_{]i,n]}$ because $i > n/2$.
		Thus, the value of $x^{(j)}_{[n/2+1,n]}$ is fixed on the interval $[i,n]$ and can vary only on the interval $[n/2+1,i]$. 
		As a result,the second half of $x^{(j)}$ can take $q^{i-n/2}$ values.
		Furthermore, $h_{[i]}(x^{(j)}) = h_{[i]}(x) = h_{[i]}(x') = h_{[i]}(x^{(1)})$.
		Thus, the value of $h(x^{(j)})$ is fixed on the interval $[i]$ and can vary only on the interval $[i,n]$. 
		As a result, $h(x^{(j)})$ can take $q^{n-i}$ different values.
		Now if two configurations $x'$ and $x''$ have the same image by $h$ and are equal one their second half then they are in the same set $X_j$.
		Thus, $|N_{i}| \leq q^{i-n/2}*q^{n-i} = q^{n/2}$.
	\end{itemize}
	We have $N = N_{[n/2]} \cup N_{n/2+1} \cup \dots \cup N_{n}$.
	Thus, $|N| \leq (n/2+1)q^{n/2}$.
	As a consequence, the degree of $X^1$ in $G'$ is less than $(n/2+1)q^{n/2}$ (strictly less because $X^1$ is in $N$ but is not neighbor of himself).
	As a result, $\chi(G') \leq d(G')+1 \leq (n/2+1)q^{n/2}$ with $d(G')$ the degree of $G'$.
	We can see that any coloring of this graph $G'$ gives a proper coloring of the confusion graph.
	Indeed, we can color all the configurations of a set $X_i$ in $G_{h,u}$ as we color $X_i$ in $G'$.
	If two configurations $x$ and $x'$ are neighbors in the confusion graph $G_{h,u}$, then $X(x)$ and $X(x')$ are neighbors in $G'$ and will not have the same color.
	Thus, $\chi(G_{h,u}) \leq \chi(G') \leq (n/2+1)*q^{n/2}$.
	As a consequence, according to Theorem~\ref{th1}, we have, $\kappa(h,u) \leq \lceil n/2 + \mathrm{log_q}(n/2 +1) \rceil$.
	Hence, $ \forall n \in \mathbb{N}, \kappa_{n,q} \leq \lceil n/2 + \mathrm{log_q}(n/2 +1) \rceil$.
\end{proof}

\section{Proof of Lemma~\ref{lem_a_b}}
\begin{lemma}[Lemma~\ref{lem_a_b}]
	Let $n, k \in \mathbb{N} $ and $q \geq 2$.
	If there is a function $b: \binom{[2k]}{k} \to  [0,q[^n$ such that the sets $b(E)[E]$ with $E \in \binom{[2k]}{k}$ are disjoint then there exists $h \in F(n,q)$ without trivial coordinate function, with $\kappa^{min}(h) \geq k$.
\end{lemma}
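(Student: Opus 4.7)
My plan is to build an explicit $h \in F(n,q)$ whose behavior on each ``hook'' set $b(E)[E]$ forces all of $b(E)[E]$ to form a clique in the confusion graph, regardless of the update order. Concretely, for each $E \in \binom{[2k]}{k}$, let $\pi_E : E \to [2k]\setminus E$ be the order-preserving bijection, and for $x \in b(E)[E]$ define $h_j(x) = x_{\pi_E(j)}$ for $j \in E$ and $h_j(x) = x_{\pi_E^{-1}(j)}$ for $j \in [2k]\setminus E$ --- the block swap suggested in the paper's sketch. The disjointness hypothesis on the sets $b(E)[E]$ is exactly what makes this definition consistent. On all coordinates $j \in [2k+1,n]$, and on configurations lying outside every $b(E)[E]$, I will set $h_j(x) := x_j + 1 \bmod q$ so that no coordinate function of $h$ equals the identity; non-triviality on a coordinate $j \in [2k]$ follows by picking any $E$ with $j \in E$ and any $x \in b(E)[E]$ with $x_j \neq b(E)_{\pi_E(j)}$, which is possible because $x_j$ varies freely on $b(E)[E]$ while $x_{\pi_E(j)}$ is frozen.

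With $h$ fixed, I will take an arbitrary $u \in \Pi([n])$, let $E$ be the set of the first $k$ elements of $[2k]$ met in $u$, and let $i$ be the step at which the last element of $E$ is updated, so $\{u_1,\dots,u_i\} \cap [2k] = E$. The central step is to verify that $b(E)[E]$, which has size $q^k$, is a clique of $G_{h,u}$. For distinct $y, y' \in b(E)[E]$, they agree outside $E$, so the swap gives $h_{[2k]\setminus E}(y) = y_E \neq y'_E = h_{[2k]\setminus E}(y')$, hence $h(y) \neq h(y')$. At the same time, after the synchronous partial update $h^{\{u_1,\dots,u_i\}}$ one checks three cases: coordinates in $E$ receive $y_{\pi_E(j)} = y'_{\pi_E(j)}$ (equal since $\pi_E(j) \notin E$); coordinates in $\{u_1,\dots,u_i\}\cap [2k+1,n]$ receive $y_j + 1 = y'_j + 1$ (equal since $y, y'$ agree on $[2k+1,n]$); and the non-updated coordinates are left unchanged and already lie outside $E$ where $y, y'$ agree. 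Hence $h^{\{u_1,\dots,u_i\}}(y) = h^{\{u_1,\dots,u_i\}}(y')$, placing $y$ and $y'$ as neighbors in $G_{h,u}$.

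Once the clique is in place, Theorem~\ref{th1} closes the argument: $\chi(G_{h,u}) \geq |b(E)[E]| = q^k$ yields $\kappa(h,u) \geq \lceil \log_q(q^k) \rceil = k$, and taking the minimum over $u$ gives $\kappa^{\min}(h) \geq k$. The main obstacle I anticipate is the second half of the clique verification: every update performed before step $i$ must produce identical values on $y$ and $y'$, which only differ on $E$. Coordinates in $E$ are handled automatically by the swap structure, but any coordinate of $[2k+1,n]$ scheduled early by $u$ must also be insensitive to $x_E$. This pressure is what dictates the simple choice $h_j(x) = x_j + 1$ for $j \in [2k+1,n]$: it is independent of $x_{[2k]}$ while remaining non-trivial, so it costs nothing for the non-triviality conclusion.
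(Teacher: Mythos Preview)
Your argument is correct and follows essentially the same approach as the paper: define $h$ on each $b(E)[E]$ as the block swap between $E$ and $[2k]\setminus E$, then for arbitrary $u$ take $E$ to be the first $k$ elements of $[2k]$ in $u$ and verify that $b(E)[E]$ is a $q^k$-clique in $G_{h,u}$, concluding via Theorem~\ref{th1}. The only cosmetic difference is that the paper sets $h_j(x)=0$ on coordinates $j\in[2k+1,n]$ and on configurations outside $B$, whereas you use $x_j+1\bmod q$; both choices are constant in $x_E$ and non-trivial, so the clique and non-triviality verifications go through identically.
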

\begin{proof}
	Let $B := \bigcup\limits_{E \in \binom{[2k]}{k} } b(E)[E]$.
	Let $a: B \to [0,q[^n$ such that $\forall E \in \binom{[2k]}{k},\ \forall x \in b(E)[E],\ a(x) = E$.
	Let $h \in F(n,q)$ such that:
	$\forall x \in  B$, 
	\begin{itemize}
		\item $h_{a(x)}(x) = x_{[2k]\ \setminus \ a(x)}$.
		\item $h_{[2k]\ \setminus \ a(x)}(x) = x_{a(x)}$.
		\item $\forall i \in [2k+1,n], h_i(x) = \letterA$.
	\end{itemize} 
	and $\forall x \in [0,q[^n\ \setminus\ B,\ h(x) = (\letterA)^n$.
	We can see that $h$ does not have any trivial coordinate function.
	Indeed, for all $i \in [2k+1,n]$ we have $h_i: x \mapsto 0$ which is nontrivial. 
	Furthermore, if we take $x,y \in b(E)[E]$ with $E = [k+1,2k]$, and $x_E = (\letterA)^k$ and $y_E = (\letterB)^k$, we see that 
	$$\forall i \in [k],\ h_{i}(x) = x_{n/2+i} = \letterB \neq \letterA = y_{n/2+i} = h_i(y).$$
	However, $\forall i \in [k], i \notin E$ and thus $x_i = y_i$ because $x,y \in E$. Thus, either $h_i(x) \neq x_i$ or $h_i(y) = y_i$.
	Either way, $h_i$ is nontrivial.  
	Thus, for all $i \in [k]$, $h_i$ is nontrivial.
	The same way, we can prove that there are no trivial coordinate functions whose index is in $[k+1,2k]$.
	As a result, $h$ does not have any trivial coordinate function.
	Let us prove that $\forall u \in \Pi([n]),\ \kappa(h,u) \geq k$.
	Let us consider the sequential update schedule $u \in \Pi([n])$.
	Let $E \in \binom{[2k]}{k}$ be the set of the $k$ first automata of $[2k]$ updated in $u$. 
	Let $E' =  [2k]\ \setminus E$. Furthermore, let $i$ be the first step at which all automata of $E$ are updated in $u$.
	In other words, we have $E \subseteq \{ u_1,\dots, u_i \} $ and $E' \cap \{ u_1,\dots, u_i \} = \emptyset$.
	Let $z = b(E)$.  We will prove that $z[E]$ is a clique in the confusion graph $G_{h,u}$.
	Let $x,y \in z[E]$ with $x \neq y$.
	First let us prove that $h^{\{u_1,\dots,u_i\}}(x) = h^{\{u_1,\dots,u_i\}}(y ) $.
	We have:
	\begin{itemize}
		\item $h^{\{u_1,\dots,u_i\}}(x)_{E} = h^{\{u_1,\dots,u_i\}}(x)_{a(x)} = x_{[2k] \setminus a(x)} = x_{E'}  = z_{E'} = y_{E'} = y_{[2k] \setminus a(y)} = h^{\{u_1,\dots,u_i\}}(y)_{a(y)} = h^{\{u_1,\dots,u_i\}}(y)_{E}$.
		\item $h^{\{u_1,\dots,u_i\}}(x)_{E'} = (x)_{E'} = z_{E'} = y_{E'} = h^{\{u_1,\dots,u_i\}}(y)_{E'}$
		because $E' \cap \{ u_1,\dots, u_i \} = \emptyset$.
		\item $\forall j \in [2k+1,n],$ with $j \in  \{u_1,\dots,u_i\}$ we have $h^{\{u_1,\dots,u_i\}}(x)_{j} = h_j(x) = \letterA = h_j(y) = h^{\{u_1,\dots,u_i\}}(y)_{j}$.
		\item $\forall j \in [2k+1,n],$ with $j \not\in  \{u_1,\dots,u_i\}$ we have $h^{\{u_1,\dots,u_i\}}(x)_{j} = x_j = z_j = y_j = h^{\{u_1,\dots,u_i\}}(y)_{j}$.
	\end{itemize}
	As a result, $h^{\{u_1,\dots,u_i\}}(x) = h^{\{u_1,\dots,u_i\}}(y)$.
	Now, $x \neq y$ and $x,y \in z[E]$. 
	Thus, $x_E \neq y_E$ and $h(x)_{E'} = x_E \neq y_E = h(y)_{E'}$.
	As a result, $x$ and $y$ are neighbors in $G_{h,u}$ and then $z[E]$ is a clique. 
	Furthermore, $z[E]$ is of size $q^k$.
	Thus , $\chi(G_{h,u}) \geq q^k$.
	As a consequence, for any sequential update schedule $u$ we have $\kappa(h,u) \geq k$ and then $\kappa^{min}(h) \geq k$.
\end{proof}

\section{Proof of Theorem~\ref{th_km_2}}

\begin{theorem} [Theorem~\ref{th_km_2}]
	For all $ q \geq 2$ and $n \in \mathbb{N}$, $\kappa^{min}_{n,q} \geq \lfloor n/3 \rfloor$.
\end{theorem}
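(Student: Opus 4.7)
The plan is to apply Lemma~\ref{lem_a_b}: setting $k := \lfloor n/3 \rfloor$, it suffices to construct a map $b : \binom{[2k]}{k} \to [0,q[^n$ whose sets $b(E)[E]$ are pairwise disjoint. Since $q \geq 2$ we may use just the letters $0,1 \in [0,q[$, so the construction is uniform in $q$, and coordinates in $[3k+1,n]$ play no role and can be fixed to $0$.

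The encoding records, for each $i \in [2k]$, the bit $s_{i+1}$ defined by $s_{i+1}=1$ if $i+1 \in E$ and $s_{i+1}=0$ otherwise (with the convention $s_{2k+1}:=0$), placing it at a position outside $E$ so that it is visible through $b(E)[E]$. Concretely: if $i \in \bar E$, set $b(E)_i := s_{i+1}$; if $i = e_j$ is the $j$-th smallest element of $E$, set $b(E)_{2k+j} := s_{i+1}$. The slot for $j = k$ would carry $s_{e_k+1}$, which is automatically $0$ since $e_k$ is the largest element of $E$, so coordinate $3k$ is free and we repurpose it to hold the boundary bit $b(E)_{3k} := s_1$. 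All remaining coordinates (in particular $b(E)_i$ for $i \in E \cap [2k]$) are set to $0$.

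To prove pairwise disjointness, I would exhibit a deterministic decoder $D : [0,q[^n \to \{0,1\}^{2k}$ such that $D(y)$ equals the characteristic vector of $E$ whenever $y \in b(E)[E]$; since $D$ depends only on $y$, any configuration $y$ lying in both $b(E)[E]$ and $b(E')[E']$ would force $E = E'$. The decoder first reads $s_1 \leftarrow y_{3k}$ and then, for $i = 2, \ldots, 2k$, computes $s_i$ from the already-decoded $s_{i-1}$ together with a running counter $j$ equal to the number of $1$'s decoded so far: if $s_{i-1} = 0$ set $s_i \leftarrow y_{i-1}$ (since then $i-1 \in \bar E$); if $s_{i-1} = 1$ and $j < k$ set $s_i \leftarrow y_{2k+j}$ (since then $i-1 = e_j$); and if $s_{i-1} = 1$ with $j = k$ set $s_i := 0$ (since no element of $E$ exceeds $e_k$).

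The main technical step is to verify by induction on $i$ that, when $y \in b(E)[E]$, the counter $j$ is at every moment equal to the number of elements of $E$ in $[1,i-1]$ and the coordinate consulted by the decoder is one of those fixed by $b(E)$ (never a free coordinate indexed by $E \cap [2k]$). With this invariant in hand, the decoder outputs the characteristic vector of $E$ on any $y \in b(E)[E]$; the sets $b(E)[E]$ are therefore pairwise disjoint, and Lemma~\ref{lem_a_b} yields an $h \in F(n,q)$ with $\kappa^{\min}(h) \geq k = \lfloor n/3 \rfloor$, proving the theorem.
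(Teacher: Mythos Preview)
Your proposal is correct and follows essentially the same approach as the paper: the paper's encoding likewise stores, at each coordinate $e \in \bar E$, the membership bit of $e+1$, and uses the $k$ extra coordinates $[2k+1,3k]$ to store the boundary bit $s_1$ together with the successor bits $s_{e_\ell+1}$ for $\ell \in [k-1]$; the decoder is then the same left-to-right scan with a counter. The only cosmetic differences are that the paper swaps the $0/1$ convention and places $s_1$ at coordinate $2k+1$ rather than $3k$ (your remark that $s_{e_k+1}=0$ frees the last slot is a cleaner justification than the paper gives for why $k$ extra coordinates suffice).
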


\begin{proof}
	Let $A := [0,q[$.
	\flo{In this proof, $c^i$ refers to $i$ times the composition of $c$.}
	Let $n = 3k$. (if $n=3k+1$ or $n=3k+2$ we just add one or two useless automata and the demonstration is the same).
	Let $b: \binom{[2k]}{k} \to A^n$ such that 
	$\forall E = \{e_1,e_2,\dots,e_k\} \in \binom{[2k]}{k},\ \forall x \in b(E)[E]$ we have:
	\begin{itemize}
		\item $\forall e \in \bar{E} = [2k]\ \setminus E,\ x_{e} = \letterA$ if $e+1 \in E$ and $\letterB$ otherwise.
		\item $x_{2k+1} = \letterA$ if $1 \in E$ and $\letterB$ otherwise.
		\item $\forall \ell \in [k-1],\ x_{2k+\ell+1} = \letterA$ if $j+1 $(mod $2k$)$ \in E$ and $\letterB$ otherwise with $j = e_\ell$.
	\end{itemize}
	Let $a:B \to \binom{[2k]}{k}$ be the function which decodes the subset encoded in a configuration such that $a = g \circ c^{2k} \circ h$ with:
	
	\begin{itemize}
		\item $h: x \mapsto (x,\{1\},\{ \}, 1)$ if $x_{2k+1} = \letterA$ and $(x,\{ \},\{1\}, 1)$ otherwise.
		\item $c$ such that for all $x \in B$, $I$, $\bar{I}$ subsets of $[n]$ and $e \in [2k]$:
		\begin{itemize}
			\item If $e \in \bar{I}:$
			\begin{itemize}
				\item If $x_e = \letterA$ then $c(x,I,\bar{I},e) = (x, I \cup \{e+1\},\bar{I},e+1)$.
				\item If $x_e = \letterB$ then $c(x,I,\bar{I},e) = (x, I,\bar{I} \cup \{e+1\},e+1)$.
			\end{itemize}
			\item $e \in I:$
			\begin{itemize}
				\item If $|I| = k$ then $c(x,I,\bar{I},e) = (x, I,\bar{I} \cup \{e+1\},e+1)$. 
				\item Otherwise, let $\ell = |I|$ and $b = x_{2k+\ell+1}$.
				\begin{itemize}
					\item If $b=\letterA$ then $c(x,I,\bar{I},e) = (x, I \cup \{e+1\},\bar{I},e+1)$. 
					\item If $b=\letterB$ then $c(x,I,\bar{I},e) = (x, I,\bar{I} \cup \{e+1\},e+1)$.
				\end{itemize}
				
			\end{itemize}
		\end{itemize}
		\item $g: (x,I,\bar{I},q) \mapsto I$.
	\end{itemize}
	By induction, let us prove that: 
	$$\forall i \in [2k],\ \forall x \in b(E)[E],\ c^{i-1}(h(x)) = (x, E \cap [i], \bar{E} \cap [i],i).$$
	First, for $i =1$ we have $c^{i-1}(h(x)) = c^0(h(x)) = h(x)$.
	There are $2$ cases: 
	\begin{itemize}
		\item If $1 \in E$, then $x_{2k+1} = \letterA$ because  $x_{2k+1} = \letterA$ if $1 \in E$ and $\letterB$ otherwise.
		Thus, $h(x) = (x,\{1\},\{\},1)$.
		Furthermore, $E \cap [1] = \{1\}$ and $\bar{E} \cap [1] = \{\}$.
		As a result, we have $c^{i-1}(h(x)) = (x, E \cap [1], \bar{E} \cap [1],1)$.
		\item If $1 \in \bar{E}$, then $x_{2k+1} = \letterB$ because  $x_{2k+1} = \letterA$ if $1 \in E$ and $\letterB$ otherwise.
		Thus, $h(x) = (x,\{ \},\{1\},1)$.
		Furthermore, $E \cap [1] = \{\}$ and $\bar{E} \cap [1] = \{1\}$.
		As a result, we have $c^{i-1}(h(x)) = (x, E \cap [1], \bar{E} \cap [1],1)$.
	\end{itemize} 
	Next, let us suppose that for $i \in [2k[$, we have $c^{i-1}(h(x)) = (x, E \cap [i], \bar{E} \cap [i],i)$.
	Let $I =  E \cap [i],\ \bar{I} = \bar{E} \cap [i],\ e = i$.
	Let $c^{i-1}(h(x))= (x,I,\bar{I},e)$.
	\begin{itemize}
		\item if $e \in \bar{E},$ then we have $e \in \bar{I}$. There are two cases:
		\begin{itemize}
			\item If $e+1 \in E$ then $x_e = \letterA$ because $\forall e \in \bar{E},\ x_{e} = \letterA$ if $e+1 \in E$ and $\letterB$ otherwise.
			Then $c^{i}(h(x)) = (x, I \cup \{e+1\},\bar{I},e+1)$.
			As a result,  $c^{i}(h(x)) = (x, E \cap [i+1], \bar{E} \cap [i+1],i+1)$.
			\item If $e+1 \in \bar{E}$ then $x_e = \letterB$ because $\forall e \in \bar{E},\ x_{e} = \letterA$ if $e+1 \in E$ and $\letterB$ otherwise.
			Then $c^{i}(h(x)) = (x, I,\bar{I}\cup \{e+1\},e+1)$.
			As a result,  $c^{i}(h(x)) = (x, E \cap [i+1], \bar{E} \cap [i+1],i+1)$.				 
		\end{itemize}
		\item if $e \in E$ then we have $e \in I$. There are two cases:
		\begin{itemize}
			\item If $e+1 \in E$. Then we have $|I|<k$ because $I = E \cap [i] \subseteq E$ and $|E| = k$ and $(e+1) \in E \setminus I$.
			Let $\ell=|I|$. We have $e = e_\ell$.
			We have $x_{2k+\ell+1} = \letterA$ because $\forall j \in [k-1],\ x_{2k+\ell+1} = \letterA$ if $j+1 \in E$ and $\letterB$ otherwise with $j = e_\ell$.
			Then $c^{i}(h(x)) = (x, I \cup \{e+1\} ,\bar{I},e+1)$.
			As a result,  $c^{i}(h(x)) = (x, E \cap [i+1], \bar{E} \cap [i+1],i+1)$.	
			\item $e+1 \in \bar{E}$. There are two cases: 
			\begin{itemize}
				\item If $e = e_k$. Then $|I|=k$, thus 
				$c^{i}(h(x)) = (x, I,\bar{I} \cup \{e+1\},e+1)$.
				As a result,  $c^{i}(h(x)) = (x, E \cap [i+1], \bar{E} \cap [i+1],i+1)$.	
				\item If $e = e_i$ with $i<k$. Then we have $|I|<k$. 
				Let $\ell = |I|$. We have $e = e_\ell$.
				We have $x_{2k+\ell+1} = \letterB$ because $\forall j \in [k-1],\ x_{2k+\ell+1} = \letterA$ if $j+1 \in E$ and $\letterB$ otherwise with $j = e_\ell$.
				Then $c^{i}(h(x)) = (x, I,\bar{I}\cup \{e+1\},e+1)$.
				As a result,  $c^{i}(h(x)) = (x, E \cap [i+1], \bar{E} \cap [i+1],i+1)$.	
			\end{itemize}
			
		\end{itemize}
	\end{itemize}
	By induction, we have
	$\forall x \in A^n,\ \forall i \in [2k],\ $
	$$c^{i-1}(h(x)) = (x,E^0 \cap V(i),E^1 \cap V(i),\bar{E}^0 \cap V(i),\bar{E}^1 \cap V(i), m+i+1, r(m+i+1)).$$
	In particular, we have $c^{2k}(h(x)) = (x,E,\bar{E},q)$.
	As a consequence, $a(x) = E$.
	Thus, all the sets $b(E)[E]$ with $E \in \binom{[2k]}{k}$ are disjoint.
	Using Lemma~\ref{lem_a_b}, we conclude that $\kappa^{min}_{n,q} \geq \lfloor n/3 \rfloor$.
	
\end{proof}

\section{Proof of Theorem~\ref{th_km_s}}

\begin{theorem} [Theorem~\ref{th_km_s}]
	For all $q \geq 4 , n \in \mathbb{N}$, $\kappa^{\min}_{n,q} \geq \lfloor n/2 - \log_q(n)\rfloor$.
\end{theorem}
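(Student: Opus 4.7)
The plan is to invoke Lemma~\ref{lem_a_b} with $k := \lfloor n/2 - \log_q(n) \rfloor$, reducing the theorem to exhibiting a function $b : \binom{[2k]}{k} \to [0,q[^n$ whose images $b(E)[E]$ are pairwise disjoint. The point of the hypothesis $q \geq 4$ is that each coordinate now carries two bits of information, so the encoding used in the proof of Theorem~\ref{th_km_2} (which spent one bit per $\bar{E}$-position on successor data plus an entire auxiliary block of $k$ coordinates on the ``$e_\ell + 1 \in E$?'' bits) can be compressed: the auxiliary bits are tucked into the second bit of each $\bar{E}$-coordinate, leaving only a short anchor of size $\lceil \log_q(2k) \rceil$ outside $[2k]$, used to specify the starting point of the decoder.

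Concretely, I would split the $n$ coordinates as $[2k] \cup [2k+1, n]$ with $r := n - 2k$, observing that the choice of $k$ forces $r \geq \lceil \log_q(2k) \rceil$ (trivial cases $k \leq 0$ are handled separately). For each $E$ I would pick a starting position $p(E) \in \bar{E}$ via the cycle lemma applied to the $\pm 1$-sequence obtained by writing $+1$ on $\bar{E}$-positions and $-1$ on $E$-positions along $1, \ldots, 2k$: since the total is $0$, some cyclic rotation has all prefix sums non-negative, and the first term of such a rotation is necessarily $+1$, so $p(E) \in \bar{E}$. The anchor stores $p(E)$, and at each $f_j \in \bar{E}$, numbered so that $f_1, f_2, \ldots, f_k$ is the cyclic enumeration of $\bar{E}$ starting from $p$, we store the two-bit symbol $(A_j, B_j) \in \{0,1,2,3\} \subseteq [0,q[$ where $A_j = 1$ iff $f_j + 1 \in E$ and $B_j = 1$ iff $e_j + 1 \in E$ for $j < k$, the slack bit $B_k$ being set arbitrarily; positions in $E$ receive an arbitrary fixed value.

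The decoder reconstructs $E$ from any $y \in b(E)[E]$ by reading $p$ from the anchor and walking cyclically $v_1 = p, v_2, \ldots, v_{2k}$ around $[2k]$: upon arriving at a position $v_i$ already known to lie in $\bar{E}$ (which is true at the start, since $v_1 = p \in \bar{E}$), it reads $y_{v_i}$ to extract $(A_j, B_j)$ and uses $A_j$ to decide whether $v_{i+1} \in E$; upon arriving at a position $v_i$ already known to lie in $E$, say $v_i = e_\ell$, it uses the previously extracted $B_\ell$ to the same end. The cycle lemma guarantees that the running count of $\bar{E}$-visits weakly dominates the running count of $E$-visits throughout the walk, so by the time the decoder reaches $e_\ell$ it has already visited $f_\ell$ and hence has $B_\ell$ in hand; the same balance argument forces $e_k$ to coincide with $v_{2k}$, confirming that no decision is ever needed after $e_k$ and that $B_k$ can truly be wasted. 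Since the decoder only inspects coordinates in $\bar{E}$ and in the anchor, it returns $E$ uniformly over $b(E)[E]$, which shows that distinct $E$'s give disjoint images; Lemma~\ref{lem_a_b} then delivers $\kappa^{\min}_{n,q} \geq k$.

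The main obstacle is the scheduling: the compact encoding only works because the cycle-lemma choice of $p$ guarantees that the $\ell$-th auxiliary bit $B_\ell$ is available by the time the decoder reaches $e_\ell$. A naive choice such as $p = \min \bar{E}$ does not suffice, as one can easily exhibit sets $E$ (for instance $\bar{E} = \{1,4\}$, $E = \{2,3\}$ in $[4]$) for which $e_\ell$ is reached before $f_\ell$ and the required bit has not yet been read; the cycle lemma is precisely what rules out such pathological interleavings.
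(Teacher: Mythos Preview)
Your proposal is correct and follows essentially the same strategy as the paper: both choose the starting position via the cycle-lemma / maximum-prefix-sum trick (the paper calls this position $m(E)$), store it in an anchor block of size $\lceil \log_q(2k)\rceil$, and exploit $q\ge 4$ to pack two bits per $\bar E$-coordinate so that the balance invariant guarantees the decoder always has the required successor bit in hand. The only difference is bookkeeping: the paper splits $\bar E$ into $\bar E^0,\bar E^1$ (and $E$ into $E^0,E^1$) and stores the auxiliary bit for the $j$-th $E^1$-element at the $j$-th $\bar E^1$-position, whereas you index uniformly by the $j$-th $\bar E$-position against the $j$-th $E$-position --- same balance argument, slightly different packaging.
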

\begin{proof}
	Let $A := [0,q[$.
	Let $n = 2k + \lceil log_q(2k) \rceil$.
	Only in this proof, to simplify the use of modulo, we index the coordinates starting from $0$ and not from $1$.
	Furthermore, each addition or subtraction is done modulo $2k$, and we will consider that if $a < b$ then $[b,a] = [a,2k[ \cup [0,b]$.
	For all $ I \subseteq [0,2k[,$ let $\triangle_I: E \mapsto |I \cap E| - |I \setminus E| $.
	Let us consider the two functions $M:\binom{[0,2k[}{k} \to [-k,k]$ and $m:\binom{[0,2k[}{k} \to [0,2k[$ such that $\forall E \in \binom{[0,2k[}{k}$,
	\begin{itemize}
		\item $M(E) := \max(\{ \triangle_{ [i] } (E) )\ |\ i \in [0,2k[ \})$.
		\item $\triangle_{ [m(E)] } (E) = M$.
	\end{itemize}
	For instance if we have $k := 4$, and $E := \{ 2,4,5,6\}$ then,
	$$\begin{array}{|c|c|c|c|c|c|c|c|c|}
	\hline
	 i	& 0 & 1 & 2 & 3 & 4 & 5 & 6 & 7 \\ 
	\hline
	\in E	& No & No & Yes & No & Yes & Yes & Yes & No \\ 
	\hline
	\triangle_{ [0,i] } (E) & -1	& -2 & -1 & -2 & -1 & 0 & 1 & 0 \\
	\hline
	\end{array} .$$
	Furthermore, $m(E) = 6$ and $M(E) = \triangle_{ [0,6] } (E) = 1$.
	For all $E \in \binom{[0;2k[}{k}$, let us denote by $E^0, E^1, \bar{E}^0, \bar{E}^1$ the subsets of $[0,2k[$ such that
	\begin{itemize}
		\item $E^0 = \{e \in E\ |\ e - 1 \in \bar{E} \}$.
		\item $E^1 = E \setminus E^0 = \{e_1,e_2, \dots, e_p\}$ with $ e_1 - m(E) -1 < e_2 - m(E) -1 < \dots < e_p -m(E) -1$.
		\item $\bar{E}^0 = \{e' \in \bar{E}\ |\ e' + 1 \in E\} $.
		\item $\bar{E}^1 = \bar{E} \setminus \bar{E}^0 = \{e'_1,e'_2, \dots, e'_p\} $ with $e'_1 -m(E) -1 < e'_2 -m(E) -1< \dots < e'_p -m(E) -1$.
	\end{itemize}
	In other words, we sort the elements of $E^1$ and $\bar{E}^1$ in the order $m+1, m+2, \dots, 2k-1, 0, 1, \dots, m$.
	If we take again our example where $k := 4$, and $E := \{ 2,4,5,6\}$ we have $E^0 = \{2,4\},\ E^1 = \{e_1=5,e_2=6\},\ \bar{E}^0 = \{1,3\}$ and $\bar{E}^1 = \{e'_1 = 7, e'_2 = 0\}$.
	Inded we have $e_1 - m(E) - 1 = 5 -6 -1 = 6 \leq e_2 - m(E) -1 = 6-6-1= 7$ and $e'_1 - m(E) -1 = 7 -6 -1 = 0 \leq e'_2 - m(E) - 1 = 0 - 6 - 1 = 1$.
	Let $v:[0,2k[ \to A^{n-2k}$ be an injective function
	and let $v^{-1}$ be the inverse function of $v$. 
	Let $b:  \binom{[0,2k[}{k} \to A^n$ such that if $x = b(E)$ then we have,
	\begin{itemize}
		\item $x_E = (\letterA)^k$.
		\item $\forall e \in \bar{E}^0,$ $x_{e} = \letterA $ if $e + 2 \in E$ and $\letterB$ otherwise.
		\item $\forall e'_j \in \bar{E}^1,\ x_{e'_j} = \letterC$ if $e_j+ 1\in E$ and $\letterD$ otherwise.
		\item $x_{[2k,n[} = v(m(E))$.
	\end{itemize}
	Again, with the same example, for all $y \in b(E)[E]$ we have:
	$$\begin{array}{|c|c|c|c|c|c|c|c|c|}
	\hline
	i	& 0 & 1 & 2 & 3 & 4 & 5 & 6 & 7 \\ 
	\hline
	\in E	& No & No & Yes & No & Yes & Yes & Yes & No \\ 
	\hline
	y_i & \letterD	& \letterB & y_2 & \letterA & y_4 & y_5 & y_6 & \letterC \\
	\hline
	\end{array} .$$
	Indeed, 
	\begin{itemize}
		\item $y_3 =\letterA$ because $3 \in \bar{E}^0$ and $3+2 \in E$.
		\item $y_1 = \letterB$ because $1 \in \bar{E}^0$ and $1+2 \notin E$.
		\item $y_7 = \letterC$ because $e'_1 = 7 \in \bar{E}^1$ and $e_1 + 1 = 5+1 \in E$.
		\item $y_0 = \letterD$ because $e'_2 = 0 \in \bar{E}^1$ and $e_2+1 = 6+1 \notin E$.
 	\end{itemize}
	
	Let $ B =: \{ x \in b(E)[E]\ |\ E \in \binom{[2k]}{k} \}$ be the set of configuration which encodes a set $E$.
	Let us consider the function $a: B \to \binom{[2k]}{k}$ which decodes the set encoded by any configuration of $B$ such that $a = g \circ c^{2k} \circ h$ with:
	\begin{itemize}
		\item 
		$h:x \mapsto (x,\emptyset,\emptyset,\emptyset,\emptyset, m+1,0)$ with $m = v^{-1}(x_{[2k,n[})$.
		\item $c$ such that for all $I^0,I^1,\bar{I}^0,\bar{I}^1$ subsets of $[0,2k[$, $e \in [0,2k[$ and $q \in [0,3[$,
		\begin{itemize}
			\item if $q=\letterA$:
			\begin{itemize}
				\item if $x_e = \letterA$ or $\letterB$ then $c(x,I^0,I^1,\bar{I}^0,\bar{I}^1,e,q) = (x,I^0,I^1,\bar{I}^0 \cup \{e\},\bar{I}^1 ,e+1,\letterB)$.
				\item if $x_e = \letterC$ or $\letterD$ then $c(x,I^0,I^1,\bar{I}^0,\bar{I}^1,e,q) = (x,I^0,I^1,\bar{I}^0 ,\bar{I}^1\cup \{e\},e+1,\letterA)$.
			\end{itemize}
			\item if $q=\letterB$:
			\begin{itemize}
				\item if $x_{e-1} = \letterA$ then  $c(x,I^0,I^1,\bar{I}^0,\bar{I}^1,e,q) = (x,I^0 \cup \{e\},I^1,\bar{I}^0,\bar{I}^1,e+1,\letterC)$.
				\item if $x_{e-1} = \letterB$ then  $c(x,I^0,I^1,\bar{I}^0,\bar{I}^1,e,q) = (x,I^0 \cup \{e\},I^1,\bar{I}^0,\bar{I}^1,e+1,\letterA)$.
			\end{itemize}
			\item if $q=\letterC$, let $j=|I^1|$, $e' = \bar{I}^1_j$ (the $j$-th element of $\bar{I}^1$ when we sort them it in the order $m+1,m+2,\dots,2k-1,0,1, \dots ,m$ ).
			\begin{itemize}
				\item if $x_{e'} = \letterC$ then $c(x,I^0,I^1,\bar{I}^0,\bar{I}^1,e,q) = (x,I^0 ,I^1 \cup \{e\},\bar{I}^0,\bar{I}^1,e+1,\letterC)$.
				\item if $x_{e'} = \letterD$ then $c(x,I^0,I^1,\bar{I}^0,\bar{I}^1,e,q) = (x,I^0 ,I^1 \cup \{e\},\bar{I}^0,\bar{I}^1,e+1,\letterA)$.
			\end{itemize}
		\end{itemize}
		\item $g: (x,I^0,I^1,\bar{I}^0,\bar{I}^1,e,q) \mapsto I^0 \cup I^1 $.
	\end{itemize}
	With the same example, let $y \in b(E)[E]$ and let us compute $a(y)$.
	\begin{gather*}
		y = (3,1,y_2,0,y_4,y_5,y_6,2)v(m(E)) \\
		\xrightarrow{h}  (y, \emptyset, \emptyset, \emptyset, \emptyset, 7, 0)\\
		\xrightarrow{c}  (y, \emptyset, \emptyset, \emptyset, \{7\}, 0, 0)\\
		\xrightarrow{c}  (y, \emptyset, \emptyset, \emptyset, \{7,0\}, 1, 0)\\		
		\xrightarrow{c}  (y, \emptyset, \emptyset, \{1\}, \{7,0\}, 2, 1)\\
		\xrightarrow{c}  (y, \{2\}, \emptyset, \{1\}, \{7,0\}, 3, 0)\\
		\xrightarrow{c}  (y, \{2\}, \emptyset, \{1,3\}, \{7,0\}, 4, 1)\\
		\xrightarrow{c}  (y, \{2,4\}, \emptyset, \{1,3\}, \{7,0\}, 5, 2)\\
		\xrightarrow{c}  (y, \{2,4\}, \{5\} \{1,3\}, \{7,0\}, 6, 2)\\
		\xrightarrow{c}  (y, \{2,4\}, \{5,6\} \{1,3\}, \{7,0\}, 7, 0)\\
		\xrightarrow{g}  \{2,4,5,6\} = E.\\
	\end{gather*}
	Thus, we have $\forall y \in b(E)[E]$, $a(y) = E$.
	If we can prove that for all $E \in \binom{[0,2k[}{k}$ and for all $y \in b(E)[E]$, we have $a(y) = E$, then we prove that the sets $b(E)[E]$ are disjoint. Furthermore, with Lemma~\ref{lem_a_b}, we can conclude that $\kappa^{min}_{2k+log(k),q} \geq k$.
	In the remaining of the proof, we prove it formally for all $k$ and $E$.
	Let $k \in \mathbb{N}$, $E \in \binom{[0,2k[}{k}$.
	Let $r_E: \ell \mapsto \begin{cases}
	\letterA &\text{if } \ell \in \bar{E} \\
	\letterB &\text{if } \ell \in E^0\\
	\letterC &\text{otherwise }
	\end{cases}$.
	By induction, let us prove that for all $i \in [0,2k]$,
	$$c^i(h(x)) = (x,E^0 \cap V(i),E^1 \cap V(i),\bar{E}^0 \cap V(i),\bar{E}^1 \cap V(i), m+i+1, r(m+i+1)).$$
	with $V(0) = \emptyset$, and $\forall i \in [2k], V(i) = [m+1,m+i]$.
	First, let us prove that $m+1 \notin E$.
	For the sake of contradiction, let us say that $m+1 \in E$.
	Then we have $\vartriangle_{[m+1]}(E) = \vartriangle_{[m]}(E) + \vartriangle_{[m+1,m+1]}(E) = M(E) +1.$
	This is absurd because $M(E) = \max(\{ \triangle_{ [i] } (E) )\ |\ i \in [2k] \})$.
	As a result $m+1 \notin E$.
	Thus, $r_E(m+0+1)=\letterA$.
	Furthermore, 	\begin{gather*}
c^0(h(x)) = h(x) = (x,\{\},\{\},\{\},\{\},m+0+1,0) \\
= (x,E^0 \cap V(0),E^1 \cap V(0),\bar{E}^0 \cap V(0),\bar{E}^1 \cap V(0), m+0+1, r_E(m+0+1) ).
	\end{gather*}

	Next, let us suppose that the induction hypothesis hold for $i \in [0,2k[$.
	Let $I = E \cap V(i)$,
	$\bar{I} = \bar{E} \cap V(i)$, 
	$I^0 = E^0 \cap V(i)$, 
	$I^1 = E^1 \cap V(i)$, 
	$\bar{I}^0 = \bar{E}^0 \cap V(i)$, 
	$\bar{I}^1 = \bar{E}^1 \cap V(i)$, 
	$e = m+i+1$ and $q=r_E(e)$.
	Thus, $c^i(h(x)) = (x,I^0,I^1,\bar{I}^0,\bar{I}^1,e,q)$.
	There are four cases:
	\begin{itemize}
		\item $e = m+i+1 \in \bar{E}^0$.
		As a consequence, we have $q = r_E(e) = \letterA$.
		Furthermore, we have $x_{e} = \letterA$ or $\letterB$ because $\forall e \in \bar{E}^0,\ x_{e} = \letterA$ if $e+2 \in E$ and $\letterB$ otherwise.
		Thus, $c^{i+1}(h(x)) = (x,I^0,I^1,\bar{I}^0 \cup \{e\} ,\bar{I}^1,e+1,1)$.
		By definition of $\bar{E}^0$, we have $e+1 \in E^0$ and then $r_E(e+1) = \letterB$.
		Thus, $c^{i+1}(h(x))= (x,E^0 \cap V(i+1),E^1 \cap (V(i+1)),\bar{E}^0 \cap V(i+1),\bar{E}^1 \cap V(i+1), m+i+2, r_E(m+i+2))$.
		\item $e = m+i+1 \in \bar{E}^1$.
		As a consequence, we have $q = r_E(e) = \letterA$. 
		Furthermore, we have $x_{e} = \letterC$ or $\letterD$ because $\forall e'_i \in \bar{E}^1,$ $x_{e'_i} = \letterC $ if $e_i + 1 \in E$ and $\letterD$ otherwise. 
		Thus, $c^{i+1}(h(x)) = (x,I^0,I^1,\bar{I}^0,\bar{I}^1 \cup \{e\},e+1,1)$.
		By definition of $\bar{E}^1$, we have $e+1 \in \bar{E}$ and then $r_E(e+1) = \letterA$.
		Thus, $c^{i+1}(h(x)) = (x,E^0 \cap V(i+1),E^1 \cap V(i+1),\bar{E}^0 \cap V(i+1),\bar{E}^1 \cap V(i+1), m+i+2, r_E(m+i+2))$.
				
		\item $e = m+i+1 \in E^0$.
		By induction hypothesis, we have $q = r_E(e) = \letterB$.
		Furthermore, by definition of $E^0$, $e-1 \in \bar{E}^0$.
		There are two subcases:
		\begin{itemize}
			\item $e+1 \in E$.
			We have $x_{e-1} = \letterA$ because $\forall (e-1) \in \bar{E}^0,\ x_{e-1} = \letterA$ if $(e-1)+2 = e+1 \in E$ and $\letterB$ otherwise.
			Thus, $c^{i+1}(h(x)) = (x,I^0 \cup \{e\} ,I^1,\bar{I}^0,\bar{I}^1,e+1,\letterC)$.
			And since $e+1 \in E$ and $e \notin \bar{E}$, then $e+1 \in E^1$ and $r_E(e+1) = 2$.
			As a result, $c^{i+1}(h(x)) = (x,E^0 \cap V(i+1),E^1 \cap V(i+1),\bar{E}^0 \cap V(i+1),\bar{E}^1 \cap V(i+1), m+i+2, r_E(m+i+2))$.
			\item $e+1 \in \bar{E}$.
			We have $x_{e-1} = \letterB$ because $\forall (e-1) \in \bar{E}^0,\ x_{e-1} = \letterA$ if $(e-1)+2 = e+1 \in E$ and $\letterB$ otherwise.
			Thus, $c^{i+1}(h(x)) = (x,I^0 \cup \{e\},I^1,\bar{I}^0,\bar{I}^1,e+1,\letterA)$.
			And since $e+1 \in \bar{E}$, $r_E(e) = \letterA$.
			As a result, $c^{i+1}(h(x)) = (x,E^0 \cap V(i+1),E^1 \cap V(i+1),\bar{E}^0 \cap V(i+1),\bar{E}^1 \cap V(i+1), m+i+2, r_E(m+i+2))$.
			
		\end{itemize}
		\item $e = m+i+1 \in E^1$.
		As a consequence, we have $q = r_E(e) = \letterC$.
		Let $j = |I^1|$.
		Let us prove that $|\bar{I}^1| < |I^1|$.
		First, we have $|I^0| = |\bar{I}^0|$.
		Indeed, for all $u \in \bar{I}^0$, we have also $u \in \bar{E}^0$ and then $u+1 \in E^0$.
		Furthermore, $e \in E^1$ and thus $e-1 = m+i \notin \bar{E}^1$.
		Thus, $u \in [m+1,m+i+1[$, $u+1 \in [m+1,m+i+1]$. Consequently, $u+1 \in V(i)$ and $u+1 \in I^0$.
		As a result, for all $u \in \bar{I}^0,$ we have $u+1 \in I^0$.
		As a consequence, $|\bar{I}^0| \leq |I^0|$.
		Reversely, for all $v \in I^0$, $v \in E^0$ and then $v-1 \in \bar{E}^0$.
		Furthermore, $m+1 \in \bar{E}$.
		Thus, $v \in ]m+1,m+i+1]$ and $v-1 \in [m+1,m+i+1]$. As a result, $v-1 \in V(i)$ and $v-1 \in \bar{I}^0$.
		Consequently, for all $v \in I^0,$ we have $v-1 \in \bar{I}^0$.
		As a consequence, $|I^0| \leq |\bar{I}^0|$ and then $|I^0| = |\bar{I}^0|$.
		Now, $\vartriangle_{[m+i+1]}(E)$ = $\vartriangle_{[m]}(E)$ + $\vartriangle_{[m+1,m+i]}(E)$ + $|\{e\}| = M(E) + |I| - |\bar{I}| + 1 = M(E) + |I^0| + |I^1| - |\bar{I}^0| - |\bar{I}^1| + 1 = M(E)+ |I^1| - |\bar{I}^1| + 1$.
		If $|I^1| \geq |\bar{I^1}|$ then $\vartriangle_{[m+i+1]}(E) > M(E)$ which is absurd.
		Thus, $|I^1| < |\bar{I^1}|$.
		Let $e' = |\bar{I}^1_{j}|$.
		We have $e = e_{j}$ and $e'= e'_{j}$.
		There are two cases:
		\begin{itemize}
			\item $e+1 \in E$.
			Then $x_{e'} = \letterC$ because $\forall e'_j \in \bar{E}^1,$ $x_{e'_j} = \letterC $ if $e_j + 1 \in E$ and $\letterD$ otherwise.
			Thus, $c^{i+1}(h(x)) = (x,I^0 ,I^1 \cup \{e\},\bar{I}^0,\bar{I}^1,e+1,\letterC)$.
			Furthermore, $e+1 \in E$ and $e \notin \bar{E}$ then $e+1 \in E^1$ and $r_E(e+1)=\letterC$.
			As a result, $c^{i+1}(h(x)) = (x,E^0 \cap V(i+1),E^1 \cap V(i+1),\bar{E}^0 \cap V(i+1),\bar{E}^1 \cap V(i+1), m+i+2, r_E(m+i+2))$.
			\item $e+1 \in \bar{E}$.
			Then $x_{e'} = \letterD$ because $\forall e'_j \in \bar{E}^1,$ $x_{e'_j} = \letterC $ if $e_j + 1 \in E$ and $\letterD$ otherwise.
			Thus, $c^{i+1}(h(x)) = (x,I^0 ,I^1 \cup \{e\},\bar{I}^0,\bar{I}^1,e+1,0)$.
			Furthermore, $e+1 \in E$ and $e \notin \bar{E}$ then $e+1 \in E^1$ and $r_E(e+1)=\letterA$.
			As a result, $c^{i+1}(h(x)) = (x,E^0 \cap V(i+1),E^1 \cap V(i+1),\bar{E}^0 \cap V(i+1),\bar{E}^1 \cap V(i+1), m+i+2, r_E(m+i+2))$.
		\end{itemize}
	\end{itemize}
	By induction, we can see that $\forall i \in [2k[,\ c^{i+1}(h(x)) = (x,E^0 \cap V(i+1),E^1 \cap V(i+1),\bar{E}^0 \cap V(i+1),\bar{E}^1 \cap V(i+1), m+i+2, r_E(m+i+2))$.
	As a result, $a(x) = g(c^{2k}(h(x))) = g(x,E^0,E^1,\bar{E}^0,\bar{E}^1,m+2k,r_E(m+2k+1)) = E^0 \cup E^1 = E$.
	Since there is a function $a$ such that $\forall E \in \binom{[0,2k[}{k},$ $\forall x \in b(E)[E]$, $a(x) = E$, we know that all the sets $b(E)[E]$ are disjoint.
	Using Lemma~\ref{lem_a_b}, we conclude that $\kappa^{min}_{2k+log(k),q} \geq k$.

\end{proof}

\section{Proof of Lemma~\ref{lem_procedural_complexity_1}}

\begin{lemma} [Lemma~\ref{lem_procedural_complexity_1}]
	Let $h \in  F(n,q)$ and $k := \kappa^{\min}(h)$.
	We have $  \Omega(h) + k \leq \mathcal{L}^*(h) $.
\end{lemma}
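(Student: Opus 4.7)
The strategy is to extract from any optimal procedural algorithm for $h$ both a sequential update order on $[n]$ and a proper coloring of the associated confusion graph, then invoke Theorem~\ref{th1}. Fix $t := \mathcal{L}^*(h)$ and a witnessing sequence $g^{(1)}, \dots, g^{(t)} \in F^*(m,q)$ with $\pr_{[n]} \circ g^{(t)} \circ \cdots \circ g^{(1)} = h \circ \pr_{[n]}$; after dropping identity steps we may assume each $g^{(s)}$ updates a unique coordinate $a_s \in [m]$. For every coordinate $p$ occurring as some $a_s$, set $\tau(p) := \max\{s : a_s = p\}$. Every $p \in [n]$ with $h_p$ non-trivial must occur as some $a_s$---otherwise evaluating the composition at $x\,0^{m-n}$ would yield $x_p$ for every $x$, forcing $h_p$ trivial---so at least $\Omega(h)$ distinct values of $\tau$ come from $[n]$. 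Writing $J := \{s \in [t] : a_s \in\, ]n,m] \text{ or } s < \tau(a_s)\} = \{j_1 < \cdots < j_\ell\}$, this counting gives $\ell \leq \mathcal{L}^*(h) - \Omega(h)$.

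Next, define $w \in \Pi([n])$ by listing the non-trivial coordinates of $h$ in order of increasing $\tau$-value and appending the trivial coordinates at the end. Writing $y_x(s) := g^{(s)} \circ \cdots \circ g^{(1)}(x\,0^{m-n})$, define the candidate coloring $c : [0,q[^n \to [0,q[^\ell$ by
\[
c_i(x) := y_x(j_i)_{a_{j_i}},
\]
which records, on input $x$, the value produced at each wasted step. By Theorem~\ref{th1}, once $c$ is shown to be a proper coloring of $G_{h,w}$ we obtain $\chi(G_{h,w}) \leq q^\ell$, hence $\kappa^{\min}(h) \leq \kappa(h,w) \leq \ell \leq \mathcal{L}^*(h) - \Omega(h)$, which is the desired inequality $\Omega(h) + \kappa^{\min}(h) \leq \mathcal{L}^*(h)$.

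The bulk of the work, and the expected main obstacle, is proving that $c$ is proper. I would assume for contradiction that $x \neq x'$ are neighbours in $G_{h,w}$ with witness $i$, so that $x_U = x'_U$ for $U := [n] \setminus W_i$ with $W_i := \{w_1,\dots,w_i\}$, $h_{w_j}(x) = h_{w_j}(x')$ for $j \leq i$, and $h(x) \neq h(x')$; and further assume $c(x) = c(x')$. A short preliminary step rules out $w_i$ being trivial (since then $W_i$ would contain every non-trivial coordinate of $h$, forcing $h(x) = h(x')$), so $\tau(w_i)$ is defined and equals $\max\{\tau(w_j) : j \leq i\}$. The core is then an induction on $s$ establishing the invariant
\[
y_x(s) = y_{x'}(s) \text{ on } ([m] \setminus W_i) \cup \{p \in W_i : \tau(p) \leq s\} \qquad \text{for } 0 \leq s \leq \tau(w_i).
\]
The inductive step at $s+1$ splits on $a_{s+1}$: if $a_{s+1} \in\, ]n,m]$, or if $a_{s+1} \in [n]$ with $s+1 < \tau(a_{s+1})$, then $s+1 \in J$ and the coloring equation directly gives agreement at $a_{s+1}$; if $a_{s+1} \in W_i$ with $s+1 = \tau(a_{s+1})$, then the neighbour equality $h_{a_{s+1}}(x) = h_{a_{s+1}}(x')$ gives agreement; the remaining case---$a_{s+1} \in U$ with $s+1 = \tau(a_{s+1})$---is ruled out because the $\tau$-based definition of $w$ forces $\tau(p) > \tau(w_i) \geq s+1$ for every $p \in U$. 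Evaluating the invariant at $s = \tau(w_i)$ gives $y_x(\tau(w_i)) = y_{x'}(\tau(w_i))$ on all of $[m]$; from that point both runs evolve identically, so $h(x) = y_x(t)_{[n]} = y_{x'}(t)_{[n]} = h(x')$, contradicting $h(x) \neq h(x')$. The delicate part is precisely this case analysis, which hinges on ordering $w$ by $\tau$ rather than arbitrarily, and on the coloring controlling exactly the coordinates that are updated at wasted steps.
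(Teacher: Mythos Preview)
Your approach is essentially the paper's: from an optimal procedure you extract an update order $w$ on $[n]$ via the last-update times $\tau$, define the coloring by the values produced at the ``wasted'' steps $J$, and verify it is a proper coloring of $G_{h,w}$ so as to conclude via Theorem~\ref{th1}. The paper organizes the verification slightly differently (it checks equality of the two trajectories at the single time $\tau(w_i)$ by a case split on coordinates rather than a step-by-step induction), but the mechanism is identical.

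There is one small gap in your case analysis. You claim the case $a_{s+1}\in U$ with $s+1=\tau(a_{s+1})$ is ruled out because $\tau(p)>\tau(w_i)$ for every $p\in U$; this is true for every \emph{non-trivial} $p\in U$ (such $p$ appear after $w_i$ in the $\tau$-ordered block of $w$), but a \emph{trivial} coordinate $p$ that happens to be updated by the procedure could have $\tau(p)\le\tau(w_i)$ and still lie in $U$, since you placed all trivial coordinates at the end of $w$ regardless of their $\tau$-value. The fix is immediate: either order $w$ by $\tau$ over \emph{all} updated coordinates of $[n]$ (trivial or not) and place only the never-updated ones at the end---this is precisely the paper's choice and makes your ``ruled out'' claim literally true---or handle that residual case directly by noting that for trivial $p\in U$ one has $y_x(\tau(p))_p=h_p(x)=x_p=x'_p=h_p(x')=y_{x'}(\tau(p))_p$.
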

\begin{proof}
	Let $h \in  F(n,q)$, $k := \kappa^{\min}(h)$, $A := [0,q[$ and $m := n+k$.
	By definition of $\kappa^{min}(h)$ there exists $f \in F(m,q)$ and $w \in \Pi([m])$ such that $f^w$ simulates $h$.
	Thus, $\pr_{[n]} \circ f^{w_n} \circ \dots f^{w_1} = h \circ \pr_{[n]}$.
	By definition, $\forall i \in [m]$, $f^{i}$ does not update more than one coordinate.
	Then, $f^{w_1}, \dots, f^{w_{m}} \in F^*(m,q)$.
	Let us consider the set $T$ of the coordinates of the trivial functions of $h$ and let $w' \in \Pi( [m] \setminus T)$ be an order respecting $w$ which does update the coordinates of $T$. In other words, $\forall i,j \in \Pi( [m] \setminus T),$ if $w(i) < w(j)$ then $w'(i) < w'(j)$.
	Let us prove that $f^{w'} = f^w$. 
	Let $h_i$ be a trivial coordinate function. Thus, $\forall x \in A^n, h_i(x) = x_i$.
	And for all $y \in A^k$ and $z:= xy$, we have $(f^w(z))_i = h_i(x) = x_i = z_i$.
	Furthermore, since $w \in \Pi([m])$, the coordinate $i$ is updated only one time in $w$ in step $j := w(i)$.
	Thus, $f_{w_j} \circ f^{w_1,\dots, w_{j-1}}(z) = (f^w(z))_i = z_i$.
	Furthermore, since $i$ is not updated before the step $j$, we have $(f^{w_1,\dots, w_{j-1}}(z))_i = z_i$.
	As a result, $f^{w_j} \circ f^{w_1,\dots, w_{j-1}} = f^{w_1,\dots, w_{j-1}}$, $f^w = f^{w_1,\dots,w_{j-1},w_j, \dots, w_m}$.
	Using the same method for all $j$ such that $h_{w_j}$ is trivial we get $f^w = f^{w'}$.
	The order $w'$ is of size $\Omega(h) + k$.
	As a result, we have $\mathcal{L}(h|n + k) \leq \Omega(h) + k $.
	And by definition of $\mathcal{L}^*(h)$ we have $\mathcal{L}^*(h) \leq \mathcal{L}(h|n + k)$.
\end{proof}

\section{Proof of Label~\ref{lem_procedural_complexity_2}}

\begin{lemma} [Label~\ref{lem_procedural_complexity_2}]
	Let $h \in  F(n,q)$ and $k := \kappa^{\min}(h)$.
	We have $  \Omega(h) + k \leq \mathcal{L}^*(h) $.
\end{lemma}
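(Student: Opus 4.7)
The plan is to take a length-$t$ sequence $g^{(1)},\ldots,g^{(t)} \in F^*(m,q)$ computing $h$ with $t = \mathcal{L}^*(h)$, to extract from it an update schedule $w \in \Pi([n])$, and to exhibit a proper coloring of the confusion graph $G_{h,w}$ using at most $q^{t-\Omega(h)}$ colors. By Theorem~\ref{th1} this yields $\kappa^{\min}(h) \leq \kappa(h,w) \leq \lceil \log_q \chi(G_{h,w}) \rceil \leq t - \Omega(h)$, which is exactly $\Omega(h) + k \leq \mathcal{L}^*(h)$.

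To choose $w$, let $R \subseteq [n]$ be the set of coordinates touched at least once by the sequence; every non-trivial coordinate of $h$ must be written to, so $|R| \geq \Omega(h)$. For $a \in R$, let $\mathrm{last}(a)$ denote the index of its final update, set $J' := \{\mathrm{last}(a) : a \in R\}$ and $J := [t] \setminus J'$, so $|J| = t - |R| \leq t - \Omega(h)$. Define $w \in \Pi([n])$ by listing $[n] \setminus R$ first in any order, then listing the elements of $R$ by increasing $\mathrm{last}(a)$, so that $w$ orders the coordinates in the same order the original sequence finalizes them.

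For the coloring, write $J = \{j_1 < \cdots < j_{|J|}\}$, denote by $b_i$ the coordinate written at step $j_i$, and set $c_i(x) := \bigl(g^{(j_i)} \circ \cdots \circ g^{(1)}(x\, 0^{m-n})\bigr)_{b_i}$. Suppose for contradiction that $c(x) = c(x')$ while $x, x'$ are neighbors in $G_{h,w}$ with witness $i$, so $x_{w_j} = x'_{w_j}$ for $j > i$, $h_{w_j}(x) = h_{w_j}(x')$ for $j \leq i$, and $h(x) \neq h(x')$. Pick the smallest $j^* > i$ with $h_{w_{j^*}}(x) \neq h_{w_{j^*}}(x')$ and set $a := w_{j^*}$; the non-triviality of $h_a$ forces $a \in R$. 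Now compare, coordinate by coordinate for $b \neq a$, the state of the $m$-automaton network just before step $\mathrm{last}(a)$ in the two runs starting from $x\, 0^{m-n}$ and $x'\, 0^{m-n}$. A coordinate in $]n,m]$ has only $J$-steps writing to it, so its value agrees via $c(x) = c(x')$; a coordinate in $[n] \setminus R$ is never written, and its initial value $x_b = x'_b$ follows from the neighbor condition together with the triviality of $h_b$; a coordinate $b \in R$ with $\mathrm{last}(b) < \mathrm{last}(a)$ has already been finally updated and sits at some position $j'' < j^*$ in $w$, so the minimality of $j^*$ combined with the neighbor condition yields $h_b(x) = h_b(x')$; and a coordinate $b \in R$ with $\mathrm{last}(b) > \mathrm{last}(a)$ sits at a position $> j^*$ in $w$, and its value before step $\mathrm{last}(a)$ is either still its initial $x_b = x'_b$ (by the neighbor condition) or comes from a prior $J$-step (agreeing by $c$). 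Thus $g^{(\mathrm{last}(a))}$ operates on identical states in the two runs and writes the same value, giving $h_a(x) = h_a(x')$, contradicting the choice of $j^*$.

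The main obstacle is the case analysis above: the schedule $w$, which orders $R$ by final-update index, is crafted precisely so that every relevant coordinate falls into one of four classes in which agreement follows either from $c(x) = c(x')$, from the triviality of $h_b$, from the minimality of $j^*$, or from the direct neighbor condition on inputs. Once properness of $c$ is verified, we obtain $\chi(G_{h,w}) \leq q^{|J|}$ and Theorem~\ref{th1} closes the argument.
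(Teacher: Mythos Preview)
Your proposal is correct and follows essentially the same approach as the paper: extract from an optimal procedure a permutation $w$ ordering the touched coordinates by their final-update time, and use the values written at the non-final steps $J$ as a proper coloring of $G_{h,w}$, concluding via Theorem~\ref{th1}. One tiny omission: your case analysis covers $b\neq a$, but the state just before step $\mathrm{last}(a)$ must also agree on coordinate $a$ itself; this follows by the same reasoning as your Case~4 (its current value is either the untouched initial value $x_a=x'_a$, since $j^*>i$, or was written by a prior $J$-step and hence agrees via $c$).
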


\begin{proof}
	Let $\ell  :=  \mathcal{L}^*(h),$ $m \leq n$ and $g^{(1)},\dots, g^{(\ell)} \in F^*(m,q)$ such that $ \pr_{[n]} \circ g^{(\ell)} \circ \dots \circ g^{(1)} =  h \circ \pr_{[n]}$.
	We can assume that for all $i \in [\ell]$, the function $g^{(i)}$ updates one coordinate. Otherwise, $g^{i}$ would be the identity function and we could remove it and have $\ell > \mathcal{L}^*(h)$ which is absurd.
	Let $u \in [m]^t$ such that, for all $i \in [\ell]$, $u_i$ is the coordinate updated by $g^{(i)}$.
	Let $I=\{i_1,i_2,\dots,i_p \}$ with $i_1<i_2<\dots<i_p$ the set of steps where a coordinate of $[n]$ is updated for the last time in $u$. In other words, $\forall j \in [p],\ u_{i_j} \in [n]$ and $\forall i \in [i_j+1,\ell], u_{i} \neq u_{i_j}$. 
	We know that $\Omega(h) \leq p$ because, to compute $h$, each coordinate of a nontrivial function of $h$ needs to be updated at least once.
	Indeed, if $h_i$ is nontrivial, then $\exists x \in A^n, h_i(x) \neq x_i$. 	
	If $i$ is not updated in $u$, then $\forall y \in A^{m-n},\ \pr_{i} \circ g^{(\ell)} \circ \dots \circ g^{(1)}(xy) = x_i \neq h_i(x)$ and $h$ is not computed.
	Let $k := \ell - p \leq \mathcal{L}^*(h) - \Omega(h)$.
	Let $v$ be an oder which updates all the coordinates of $[n]$ not updated by $g^{(1)}, \dots, g^{\ell}$.
	Let $u' := (u_{i_1},u_{i_2},\dots,u_{i_p})$ be an order which updates the coordinates of $[n]$ updated by $g^{(1)}, \dots, g^{\ell}$ in the same order that $u$ updates them for the last time.
	Let $w:= u v \in \Pi([n])$ be a permutation of $[n]$.
	Let $J=\{j_1,j_2,\dots,j_{k}\} = [\ell] \setminus I$ with $j_1 < j_2  < \dots < j_{k}$.
	For all $i \in [n]$, let $\tilde{g}^{(i)}: A^m \to A$ be the function which return the value of the coordinate updated by $g^{(i)}$. In other words, $\tilde{g}^{(i)} = pr_{u_i} \circ g^{(i)}$.
	Let $y := (\letterA)^{m-n}$ (a word of size $m-n$ containing only the letter $\letterA$).
	Let $c:A^n \to A^{k},$ such that, $\forall x \in A^n$, $\forall i \in [k], c_i(x) = \tilde{g}^{(j_i)} \circ g^{(j_i-1)} \circ ... \circ g^{(1)}(xy)$.
	Let us prove that $c$ give a proper coloring of the confusion graph $G_{h,w}$.
	Let $x, x' \in A^n$,be neighbors in the confusion graph $G_{h,w}$.
	In other words, $h(x) \neq h(x')$ but $\exists i \in [n], h^{  \{ w_1, w_2,\dots, w_{i} \} }(x) = h^{  \{ w_1, w_2,\dots, w_{i} \} }(x')$.
	For the sake of contradiction, let us say that $c(x) = c(x')$.
	Let $z := xy$ and $z' := x'y$.
	Let $e = \max(\{i \in [n]\ |\ h^{  \{ w_1,\dots, w_{i} \} }(x) = h^{  \{ w_1,\dots, w_{i} \} }(x') \})$.
	Let $b \in [p]$ be the last step of $u$ in which the coordinate $w_e$ is updated.
	Let $r := g^{(b)} \circ \dots \circ g^{(1)} (z)$ and $r' := g^{(b)} \circ \dots \circ g^{(1)} (z')$.
	Let us prove that $r = r'$.
	For all $a \in [m]$ not yet updated in $u$ at step $b$:
	\begin{itemize}
		\item if $a \in [n]$ then $r_a = x_a = x'_a = r'_a$ because $h^{  \{ w_1, w_2,\dots, w_{e} \} }(x) = h^{  \{ w_1, w_2,\dots, w_{e} \} }(x')$ and thus $x_{ [n] \setminus \{ w_1, \dots, w_e \} } = x'_{ [n] \setminus \{ w_1, \dots, w_e \} }$.
		\item if $a \in [n+1,m]$ then $r_a = z_a = y_{a-n} = z'_a = r'_a$.
	\end{itemize}
	For all $a \in [m]$ already updated in $u$ at step $b$:
	\begin{itemize}
		\item if $a \in [n]$ and $a$ is updated for the last time then $r_a = h_a(x) = h_a(x') = r'_a$ because $h^{  \{ w_1, w_2,\dots, w_{i} \} }(x) = h^{  \{ w_1, w_2,\dots, w_{i} \} }(x')$ and thus $h(x)_{ [n] \setminus \{ w_1, \dots, w_e \} } = h(x')_{ [n] \setminus \{ w_1, \dots, w_e \} }$.
		\item Otherwise, let $d < b$ be the last step in $u$ before $b$ such that $a$ is updated.
		In other words, $u_{d} = a$ and $\forall i \in ] d, b [, u_i \neq a$.
		We have $r_a = \tilde{g}^{(d)} \circ g^{(d-1)} \circ \dots \circ g^{(1)}(z) = c_{d}(x) = c_{d'}(x) = \tilde{g}^{(d)} \circ g^{(d-1)} \circ \dots \circ g^{(1)}(z') = r'_a$.
	\end{itemize}
	Thus, we have $g^{(b)} \circ \dots \circ g^{(1)} (z) = g^{(b)} \circ \dots \circ g^{(1)} (z')$ and thus $g^{(\ell)} \circ \dots \circ g^{(1)} (z) = g^{(\ell)} \circ \dots \circ g^{(1)} (z')$.
	However, $pr_{[n]} \circ g^{(\ell)} \circ \dots \circ g^{(1)} (z) = h(x) \neq h(x') \neq g^{(\ell)} \circ \dots \circ g^{(1)} (z')$.
	This is absurd, so if two configurations $x,x'$ are neighbors in the confusion graph then $c(x) \neq c(x')$.
	Thus, $c$ gives a proper coloring of the confusion graph $G_{h,w}$ and it uses at most $q^{k} = q^{\ell - p} \leq q^{\mathcal{L}^*(h) - \Omega(h)}$ colors.
	As a result, $\kappa(h,w) \leq \mathcal{L}^*(h) - \Omega(h)$.
\end{proof}

\section{Proof of Lemma~\ref{lem_path_a}}

\begin{lemma}[  Lemma~\ref{lem_path_a} ]
	Let $G = ([n],E)$ be an undirected graph and let $s = \Pw(G)$.
	Then there are functions $ c: [n] \to [s]$ and $u \in \Pi([n])$ with the following property.
	\flo{For all $i\in [n],$  we have either $1$) for all $k$ neighbor of $i$ in $G$ we have $u(i) \leq u(k)$ or $2$) for all $j,k \in [n]$ with $c(i) = c(j)$, $u(i) < u(j)$ and $k$ neighbor of $j$ in $G$ we have $u(i) \leq u(k)$.}
\end{lemma}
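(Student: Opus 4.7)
The plan is to take a nice path decomposition $(X_1,\ldots,X_p)$ of $G$ of width $s$---any path decomposition can be refined, at the same width, into one where consecutive bags differ by a single add or remove operation---and to define $c$ and $u$ by a greedy left-to-right scan of the bags. In such a decomposition each step removes at most one vertex, so the values $r_v$ (the index of the last bag containing $v$) are pairwise distinct; I take $u(v)$ to be the rank of $r_v$ under increasing $r$.

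For $c$ I maintain a pool of available colors initialized to $[s]$. When a vertex $v$ is added at its first bag $X_{l_v}$, if the pool is nonempty I assign to $c(v)$ any color drawn from it; otherwise $|X_{l_v}|=s+1$ and, since the bag cannot grow further without exceeding the width, the next operation must be a remove---say of some vertex $j$. I set $c(v):=c(j)$ if $j\neq v$, and $c(v):=c(j')$ for an arbitrary $j'\in X_{l_v-1}$ in the singleton case $j=v$ (i.e.\ $l_v=r_v$). When a vertex is removed, its color returns to the pool if no currently active vertex still carries it.

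The verification of the two properties rests on the following chain structure: the vertices sharing a common color form, in $u$-order, a sequence $v^{(1)},\ldots,v^{(k)}$ in which consecutive members satisfy $l_{v^{(t+1)}}\geq r_{v^{(t)}}$, with equality when $v^{(t+1)}$ was obtained by (non-singleton) sharing---then $v^{(t)}$ was the next-to-be-removed vertex, so $r_{v^{(t)}}=l_{v^{(t+1)}}$---and strict inequality when $v^{(t+1)}$ drew a fresh color from the pool after $v^{(t)}$'s release. Composing these inequalities yields $l_{v^{(j)}}\geq r_{v^{(i)}}$ for all $i<j$; since any neighbor $k$ of $v^{(j)}$ must share a bag with $v^{(j)}$, one gets $r_k\geq l_{v^{(j)}}\geq r_{v^{(i)}}$, and distinctness of the $r$-values turns this into $u(k)>u(v^{(i)})$, which is property~(2) for $v^{(i)}$.

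The delicate point is the singleton exception. If $v$ is a singleton ($l_v=r_v$) grafted onto the color of a resident $j'$ whose $r_{j'}>r_v$, the chain inequality $l_{j'}\geq r_v$ fails, and property~(2) for $v$ may not hold. Property~(1) rescues $v$: every neighbor of $v$ lies in the single bag $X_{r_v}$ containing $v$, so has $r_k>r_v$ and hence $u(k)>u(v)$. One still checks that inserting $v$ between the preceding chain member $j'_{\mathrm{prev}}$ and $j'$ does not break property~(2) for $j'_{\mathrm{prev}}$: since $r_{j'_{\mathrm{prev}}}=l_{j'}\leq l_v-1<r_v$, every neighbor of $v$ satisfies $r_k\geq r_v>r_{j'_{\mathrm{prev}}}$, so the property is preserved.
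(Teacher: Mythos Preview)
Your construction is the paper's: order $u$ by last bag, assign $c$ greedily in first-bag order with sharing at full bags, and invoke property~(1) for the single-bag vertices; passing to a nice decomposition so that the $r_v$ are distinct is a harmless refinement of the paper's ``no bag contained in another'' normalization. The one soft spot is the ``composing'' step: when a singleton $v^{(t)}$ is immediately followed in $u$-order by the resident $j'$ it grafted onto, one has $l_{j'}<r_{v^{(t)}}$ and the chain of inequalities $l_{v^{(t+1)}}\ge r_{v^{(t)}}$ breaks there, so the transitive bound $l_{v^{(j)}}\ge r_{v^{(i)}}$ is not obtained by literal composition through the full chain---the clean fix (implicit in your pool bookkeeping) is to observe directly that two \emph{non-singleton} same-color vertices are never simultaneously active, hence $r_a\le l_b$ for any non-singleton pair $a,b$ with $r_a<r_b$, which is exactly what property~(2) for each non-singleton $v^{(i)}$ requires.
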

\begin{proof}
	Let $G = ([n],E)$, $s = \Pw(G) $ and $X_1, \dots, X_{p}$ a minimal path decomposition of $G$.
	In other words:
	\begin{itemize}
		\item $\forall i \in [n], \forall a,b \in [p]$ with $a<b$, 
		if $ i \in X_a$ and $ i \in X_{b}$ then $\forall \ell \in [a,b]$, $i \in X_\ell$.
		\item $\forall (i,j) \in E, \exists a \in [p]$ such that $i \in X_a$ and $j \in X_a$.
		\item $ \forall i \in [n], \exists a \in [p]$ such that $i \in X_a$.
		\item $\forall i \in [p],\ |X_i| \leq s+1$.
	\end{itemize}
	For all $i \in [n],$ let $X(i) = \{ X \in \{X_1,\dots,X_p\}\ |\ i \in X \}$.
	Let $b: i \mapsto \min(\{ j\ |\ X_j \in X(i) \})$ and $e: i \mapsto \max(\{ j\ |\ X_j \in X(i) \})$.
	We will assume that $\forall \{a,b\} \subseteq [p],$ we do not have $X_a \subseteq X_b$ 
	since otherwise we could remove $X_a$ and still have a valid path decomposition of same size.
	As a result, for all $a \in [p],$ there exists $j \in X_a$ such that $e(j) =a$. Indeed, if that was not the case, we would have $X_a \subseteq X_{a+1}$.
	Let $u \in \Pi([n])$ be an order respecting $e$ and $v \in \Pi([n])$ be an order respecting $b$. In other words, for all $\{i,j\} \subseteq [n],$ if $e(i) < e(j)$ then $u(i) < u(j)$ and if $b(i) < b(j)$ then $v(i) < v(j)$.
	For all $j$ in $[n]$ taken in the order $v$, let us define $c(j)$ as such:
	\begin{itemize}
		\item If, in the set of images by $c$ of $X_{b(j)}$ already defined, there are value of $[s]$ not used then let $c(j)$ be the minimal of them.
		More formally, if $\{ c(k)\ |\ k \in X_{b(j)}$ and $v(k) < v(j) \} \neq [s]$ then let $c(j) := \min(\ [s] \setminus \{ c(k)\ |\ k \in X_{b(j)}$ and $v(k) < v(j) \}\ )$.
		\item Otherwise, if there is $k \in X_{b(j)}$ such that $v(k) < v(j)$ and $e(k) = b(j)$, then let us consider the $i$ which minimize 
		$u(i)$.
		In other words, let us consider $i$ such that $u(i) = min(\{ u(k)\ |\ k \in X_{b(j)}\} )$ and let $c(j) := c(i)$.
		We remark that since $\forall k \in X_{b(j)}, b(j) \leq e(k)$, we have $e(i) = b(j)$ (and not $e(i) < b(j)$).
		\item Otherwise, let $c(j) := \letterA$.
		In this case, we have $b(j) = e(j)$ because $\forall k \in X_{b(j)} \setminus \{j\},\ b(j) < e(k)$ and by hypothesis $\forall a \in [p],$ there exists $j \in X_a$ such that $e(j) = a$.
	\end{itemize}
	We remark that with this construction of $c$, $\forall a \in [p]$ there is at most one $\{i,j\} \subseteq X_a$ such that $c(i) = c(j)$ because $|X_a| \leq s+1$.
	Let $i \in [n]$.
	First, let us consider the case where $c(i)$ is defined using the third case. 
	Then, we have $b(i) = e(i)$ and thus $X(i) = \{X_{b(i)}\}$.
	By definition of a path decomposition, for all neighbor $k$ of $i$ in $G$, we have $k \in X_{b(i)}$.
	Furthermore, $\forall k \in X_{b(i)},\  e(i) = b(i)  < e(k)$.
	Thus, $\forall k \in X_{b(i)},\  u(i) < u(k)$.
	As a result, $i$ respects the condition $1$) for all $k$ neighbor of $i$ in $G$ we have $u(i) \leq u(k)$.
	Next, let us assume that $c(i)$ is not defined using the third case. 
	Let $j \in [n]$ such that $c(i) = c(j)$ and $u(i) < u(j)$.
	Let us prove that for all $k$ neighbor of $j$ in $G$, $u(i) \leq u(k)$.
	First let us prove that we have $e(i) \leq b(j)$.
	For the sake of contradiction, let us say that $b(j) < e(i)$.
	There are $2$ cases:
	\begin{itemize}
		\item $v(i) < v(j)$. Thus, $b(i) \leq b(j) < e(i)$.
		However,
		\begin{itemize}
			\item Since, $b(i) \leq b(j) < e(i)$, we have $b(j) \in [b(i),e(i)]$ and thus $i \in X_{b(j)}$. 
			Thus, there exists $k \in X_{b(j)}$ (k := i) such that $c(k) = c(j)$ and $v(k) < v(j)$.
			As a result, $c(j)$ cannot have been defined using the first case of the definition.
			\item  We have $b(j) < e(i)$.
			Furthermore, by hypothesis, we know that there is $k \in X_{b(j)}$, such that $e(k) = b(j)$. Thus, $e(k) < e(i)$ and then $u(k) < u(i)$.
			As a consequence, we have $c(i) = c(j)$ but $u(i) \neq min(\{ u(k) | k \in X_{b(j)}\} )$.
			As a result, $c(j)$ cannot have been defined using the second case of the definition.
			\item We have $b(j) < e(i) \leq e(j)$ and thus $b(j) \neq e(j)$.
			As a result, $c(j)$ cannot have been defined using the third case of the definition.
		\end{itemize}
		\item $v(j) < v(i)$ Thus, $b(j) \leq b(i) < e(i)$.
		However,
		\begin{itemize}
			\item Since, $b(j) \leq b(i) < e(i) \leq e(j)$, we have $b(i) \in [b(j),e(j)]$ and thus $j \in X_{b(i)}$. 
			Thus, there is $k \in X_{b(i)}$ such that $c(k) = c(i)$ and $v(k) < v(i)$.
			As a result, $c(i)$ cannot have been defined using the first case of the definition. 
			\item  We have $b(i) \leq b(j) < e(i) \leq e(j)$ and thus $b(i) < e(j)$.
			Furthermore, we know that there is $k \in X_{b(i)}$, such that $e(k) = b(j)$. Thus, $e(k) < e(j)$ and then $u(k) < u(j)$.
			As a consequence, we have $c(j) = c(i)$ but $u(j) \neq min(\{ u(k) | k \in X_{b(i)}\} )$.
			As a result, $c(i)$ cannot have been defined using the second case of the definition.
			\item By hypothesis, $c(i)$ is not defined using the third case of the definition. 
		\end{itemize}
	\end{itemize}
	
	All cases raise a contradiction. As a result, we have $e(i) \leq b(j)$.
	There are $2$ cases: 
	\begin{itemize}
		\item If $e(i) < b(j)$, then let us prove that for each $k$ neighbor of $j$ in $G$, $u(i) < u(k)$ (and then $u(i) \leq u(k)$ ).
		Let $k$ be a neighbor of $j$ in $G$.
		Thus, $\exists a \in [p], k \in X_a$ and $j \in X_a$ and thus $b(j) \leq b(k)$. 
		As a consequence, $e(i) < b(j) \leq b(k) \leq e(k)$ and then $u(i) < u(k)$. 
		\item If $e(i) = b(j)$, then let us prove that for each $k$ neighbor of $j$ in $G$, $u(i) \leq u(k)$.	
		Let $k$ be a neighbor of $j$ in $G$.
		If $b(j) < b(k)$, then like in the previous case, we have $e(i) = b(j) < b(k) \leq e(k)$ and thus $u(i)<u(k)$.
		Otherwise, let us prove that $k \in X_{e(i)}$.
		We have $b(k) \leq b(j) = e(i)$ and since $k$ is a neighbor of $j$ then $e(i) = b(j) \leq e(k)$.
		Thus, $e(i) \in [b(k), e(k)]$ and thus $k \in X_{e(i)}$. 
		We remark that $i$ and $j$ are in $X_{b(j)} = X_{e(i)}$ and $c(i) = c(j)$. Then, the value $c(j)$ corresponds to the second cases in the definition and we have $\forall k \in X_{b(j)},$ $u(i) \leq u(k)$ (the only case where $u(i) = u(k)$ being when $i = k$).
	\end{itemize}
	
\end{proof}

\section{Proof of Lemma~\ref{lem_path_b}}

\begin{lemma}[ Lemma~\ref{lem_path_b} ]
	Let $h \in F(n,q)$.
	Let $G = \IG^*(h)$.
	If we have $c: [n] \to [s]$ and $u \in \Pi([n])$ such that $G,c,u$ have the same properties as in Lemma~\ref{lem_path_a}, then we have $\kappa(h,u) \leq s$.
\end{lemma}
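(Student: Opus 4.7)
The plan is to build an explicit pair $(f, w)$ with $f \in F(n+s, q)$ and $w \in \Pi([n+s])$ that sequentializes $h$ and respects $u$. I take $w := (n+1, n+2, \ldots, n+s, u_1, u_2, \ldots, u_n)$, so the $s$ additional automata are updated first and then the original $n$ coordinates in the order prescribed by $u$; this $w$ manifestly respects $u$. For each $j \in [s]$, set $C_j := \{i \in [n] : c(i) = j\}$ and define $f_{n+j}(z) := \sum_{i \in C_j} h_i(z_{[n]}) \pmod{q}$, so that after the first $s$ updates the auxiliary automaton $n+j$ holds $S_j := \sum_{i \in C_j} h_i(x)$ while $z_{[n]} = x$ is untouched.

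For $i \in [n]$, I define $f_i$ by cases depending on which alternative of Lemma~\ref{lem_path_a} holds at $i$. If $i$ satisfies alternative~1, set $f_i(z) := h_i(z_{[n]})$. Otherwise, using alternative~2, set
\[
f_i(z) \ := \ z_{n+c(i)} - \sum_{\substack{i' \in C_{c(i)} \setminus \{i\} \\ u(i') < u(i)}} z_{i'} - \sum_{\substack{i' \in C_{c(i)} \setminus \{i\} \\ u(i') > u(i)}} h_{i'}(z_{[n]}) \pmod{q}.
\]
Correctness is an induction on $k \in [0, n]$: after the first $s + k$ steps of $w$ applied to an input $xy$, the current state $z$ satisfies $z_{i'} = h_{i'}(x)$ for all $i' \in [n]$ with $u(i') \leq k$, $z_{i'} = x_{i'}$ for all $i' \in [n]$ with $u(i') > k$, and $z_{n+j} = S_j$ for every $j \in [s]$.

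The induction step reduces to verifying $f_i(z) = h_i(x)$ for $i := u_k$. In alternative~1 every neighbor $k'$ of $i$ in $\IG^*(h)$ satisfies $u(k') \geq u(i) = k$, so $z_{k'} = x_{k'}$ (either $k' = i$ and nothing has changed at $i$, or $k' \neq i$ and then $u(k') > k$); since $h_i$ depends only on its neighbors this yields $h_i(z_{[n]}) = h_i(x)$. In alternative~2 the same argument applied to each sibling $i' \in C_{c(i)}$ with $u(i') > u(i)$ shows $h_{i'}(z_{[n]}) = h_{i'}(x)$, and combining this with $z_{i'} = h_{i'}(x)$ for earlier siblings and $z_{n+c(i)} = S_{c(i)}$ the formula for $f_i(z)$ telescopes to $S_{c(i)} - \sum_{i' \in C_{c(i)} \setminus \{i\}} h_{i'}(x) = h_i(x)$.

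The main obstacle is the alternative~2 case: one has to argue that substituting the current state $z_{[n]}$, which has already been altered at the earlier indices $u_1, \ldots, u_{k-1}$, into $h_{i'}$ still recovers the original value $h_{i'}(x)$ for every later sibling $i'$ of $i$. This is precisely what alternative~2 of Lemma~\ref{lem_path_a} ensures, by forcing every neighbor of every not-yet-updated sibling of $i$ to be itself not-yet-updated at the moment $i$ is processed, so the relevant entries of $z_{[n]}$ still agree with those of $x$.
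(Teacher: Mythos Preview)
Your proof is correct and follows essentially the same approach as the paper: the same update order $w=(n+1,\dots,n+s,u_1,\dots,u_n)$, the same auxiliary sums $f_{n+j}(z)=\sum_{i\in C_j}h_i(z_{[n]})$, and the same two-case definition of $f_i$ with the telescoping subtraction formula, all verified by the same induction on the number of steps. The only cosmetic differences are notational (your $C_j$ is the paper's $I_j$, and you state the invariant on the auxiliary coordinates explicitly).
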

\begin{proof}
	For all $j \in [n],$ let $v(j) := \{ k \in [n]\ |\  (k,j) \in E \}$.
	For all $i \in [n]$,  let $g_i: A^{|v(i)| } \to A$ such that $g_i \circ pr_{v(i)}= h_i$.
	In other words, $\forall x \in A^n$, $g_i(x_{v(i)}) = h_i(x)$.
	We know that such a function exists by definition of the interaction graph.
	Let $I_1, I_2, \dots, I_s$ a partition of $[n]$ such that $\forall \ell \in [s],\ I_\ell := \{ i \in [n]\ |\ c(i)= \ell \}$.
	For all $j \in [n],$ let $I(j) = I_{c(j)}$.
	Let $w = (n+1,n+2,\dots,n+s,u(1),\dots,u(n))$.
	Without loss of generality, let us say that $u$ is the canonical update schedule $(1,\dots,n)$.
	Let $f \in F(n+s,q)$ such that
	$\forall z = xy \in A^{n+s}$,
	\begin{itemize}
		\item $\forall i \in [n], f_{i}(z) = \begin{cases} h_i(x)$ if $\forall k \in v(i),\ u(i) \leq u(k) \\
		y_{c(i)} -
		\sum\limits_{j \in I(i) \text{ with } u(j) < u(i)  }  x_j -
		\sum\limits_{j \in I(i) \text{ with } u(i) < u(j)  }  h_j(x)  $ otherwise $ \end{cases}.$
		\item $\forall \ell \in [s], f_{n+\ell}(z) = \sum\limits_{j \in I_\ell }  h_j(x)$.
	\end{itemize}
	Let $y' = f^{w_1,\dots,w_s}(z)_{[n+1,n+s]} = (\sum\limits_{j \in I_1 }  h_j(x), \dots, \sum\limits_{j \in I_s }  h_j(x))$.
	Let us prove by induction that, being assumed that $[0] = \emptyset$, we have $\forall i \in [0,n],$ $f^{w_1,\dots,w_{s+i}}(z)_{[n]} = h^{[i]}(x)$.
	First $f^{w_1,\dots,w_{s}}(z)_{[n]} = (xy')_{[n]} = x = h^{\emptyset}(x)$.
	Next, let $i \in [n]$, let us suppose that $f^{w_1,\dots,w_{s+i-1}}(z)_{[n]} = h^{[i-1]}(x)$.
	Let $z' = x'y' = f^{w_1,\dots,w_{s+i-1}}(z)$.
	There are two cases.
	If $\forall k \in v(i),\ u(i) \leq u(k)$ then $f_i(z') = h_i(x')$.
	In this case we have, $x'_{v(i)} = x_{v(i)}$.
	Thus, $f_i(z') = h_i(x)$.
	Otherwise, we have $\forall j \in I(i)$ with $u(i) < u(j), \forall k \in v(j), u(i) < u(k)$.
	In other words, for each such $k$ we have $ x'_k = x_k$ and thus $x'_{v(j)} = x_{v(j)}$.
	Let $\ell = c(i)$.
	We have $f^i(z') = y'_\ell -
	\sum\limits_{j \in I_\ell \text{ with } u(j) < u(i)  }  x'_j -
	\sum\limits_{j \in I_\ell \text{ with } u(i) < u(j)  }  h_j(x')$

	We know that:
	\begin{itemize}
		\item $y'_\ell = \sum\limits_{j \in I_\ell }  h_j(x)$.
		\item $\forall j \in I_\ell \text{ with } u(j) < u(i),\ x'_j = h_j(x)$
		\item $\forall j \in I_\ell \text{ with } u(i) \leq u(j),\ h_j(x') = g_j(x'_{v(j)}) = g_j(x_{v(j)}) = h_j(x)$ (because $\forall k \in v(j),\ (k,j) \in E$, and then, by hypothesis of this lemma, $u(i) \leq u(k)$).
	\end{itemize}
	Thus,
	\begin{equation*}
	\begin{array}{l@{}l}
	f_{i}(z')
	&{}= y'_{\ell} - \sum\limits_{j \in I_{\ell} \text{ with } u(j) < u(i)  }  x'_j - \sum\limits_{j \in I_{\ell} \text{ with } u(i) < u(j)  }  h_j(x') \\
	&{} = \sum\limits_{j \in I_\ell }  h_j(x) - \sum\limits_{j \in I_{\ell} \text{ with } u(j) < u(i)  }  h_j(x) - \sum\limits_{j \in I_{\ell} \text{ with } u(i) < u(j)  }  h_j(x)   \\
	&{} = \sum\limits_{j \in I_\ell \text{ with } u(j) = u(i+1)  }  h_j(x)\\
	&{} = h_{i}(x)\\.
	\end{array}
	\end{equation*}
	As a result, in both case we have $f_i(z') = h_i(x)$.
	Moreover, $f^{w_1,\dots,w_{s+i}}(z)_{[n]} = f^{s+i}(z')_{[n]} = h^{[i+1]}(x)$ and by induction $\forall i \in [0,n],$ $f^{w_1,\dots,w_{s+i}}(z)_{[n]} = h^{[i]}(x)$.
	In particular, we have $f^w(z)_{[n]} = h(x)$ and then $\pr_{[n]} \circ f^w = h \circ \pr_{[n]}$. 
	Thus, $\kappa(h,w) \leq s $. 
	As a result, $\kappa^{min}(h) \leq s $.	
\end{proof}

\end{document}